\newcommand{\ostar}{\mathbin{\mathpalette\make@circled\star}}
\newcommand{\make@circled}[2]{%
  \ooalign{$\m@th#1\smallbigcirc{#1}$\cr\hidewidth$\m@th#1#2$\hidewidth\cr}%
}
\newcommand{\smallbigcirc}[1]{%
  \vcenter{\hbox{\scalebox{0.77778}{$\m@th#1\bigcirc$}}}%
}
\newtheorem{theorem}{Theorem}[section]
\newtheorem{corollary}[theorem]{Corollary}
\newtheorem{proposition}[theorem]{Proposition}
\DeclareMathAlphabet{\mathpzc}{OT1}{pzc}{m}{it}
\def\@font@info#1{}
\theoremstyle{remark}
\newtheorem{remark}{Remark}[section]
\newenvironment{rmk}{\begin{remark}}{\hfill$\Diamond$\end{remark}}
\theoremstyle{definition}
\newtheorem{definition}[theorem]{Definition}
\numberwithin{equation}{section}
\newcommand{\beq}{\begin{eqnarray}}
\newcommand{\eeq}{\end{eqnarray}}
\def\g{\mathfrak g}
\def\h{\mathfrak h}
\def\G{\mathfrak{G}}
\def\bbZ{\mathbb{Z}}
\def\R{\mathbb{R}}
\def\bbC{\mathbb{C}}
\newcommand{\cA}{{\cal A}}
\newcommand{\cC}{{\cal C}}
\newcommand{\cD}{{\cal D}}
\newcommand{\cE}{{\cal E}}
\newcommand{\cI}{{\cal I}}
\newcommand{\cB}{{\cal B}}
\newcommand{\cV}{{\cal V}}
\newsavebox{\@brx}
\newcommand{\llangle}[1][]{\savebox{\@brx}{\(\m@th{#1\langle}\)}%
  \mathopen{\copy\@brx\kern-0.5\wd\@brx\usebox{\@brx}}}
\newcommand{\rrangle}[1][]{\savebox{\@brx}{\(\m@th{#1\rangle}\)}%
  \mathclose{\copy\@brx\kern-0.5\wd\@brx\usebox{\@brx}}}
\newcommand{\xRrightarrow}[2][]{\ext@arrow 0359\Rrightarrowfill@{#1}{#2}}
\newcommand{\Rrightarrowfill@}{\arrowfill@\equiv\equiv\Rrightarrow}
\newcommand{\xLleftarrow}[2][]{\ext@arrow 3095\Lleftarrowfill@{#1}{#2}}
\newcommand{\Lleftarrowfill@}{\arrowfill@\Lleftarrow\equiv\equiv}
\newcommand{\id}{\operatorname{id}}
\begin{document}

\title{Categorical quantum symmetries and ribbon tensor 2-categories}

\bigskip

\author[1]{{ \sf Hank Chen}\thanks{hank.chen@uwaterloo.ca \quad or\quad chunhaochen@bimsa.cn}}

\affil[1]{\small Beijing Institute of Mathematical Sciences and Applications, Beijing 101408, China}

\maketitle

\bigskip

\begin{abstract}
In a companion work on the combinatorial quantization of 4d 2-Chern-Simons theory, the author has constructed the Hopf category of quantum  2-gauge transformations $\tilde{C}=\mathbb{U}_q\G$ acting on the discrete surface-holonomy configurations on a lattice. We prove in this article that the 2-$\mathsf{Hilb}$-enriched 2-representation 2-category $\operatorname{2Rep}(\tilde C)$ of finite semisimple $\bbC$-linear $\tilde C$-module categories is braided, planar-pivotal, and \textit{lax }rigid, hence $\operatorname{2Rep}(\tilde C)$ provides an example of a \textit{ribbon tensor 2-category}. We explicitly construct the ribbon balancing functors, and exhibit their coherence conditions against the rigid dagger structures. This allows one to refine the various notions of \textit{framing} in a 2-category with duals that have been previously studied in the literature. Following the 2-tangle hypothesis of Baez-Langford, \textit{framed invariants} of 2-tangles can then be constructed from ribbon 2-functors into $\operatorname{2Rep}(\tilde C)$, analogous to the definition of decorated ribbon graphs in the Reshetikhin-Turaev construction. We will also prove that, in the classical limit $q\rightarrow 1$, the 2-category $\operatorname{2Rep}(\mathbb{U}_{q=1}\G)$ becomes \textit{strict} pivotal in the sense of Douglas-Reutter. 


\end{abstract}

\bigskip


\tableofcontents

\newpage

\section{Introduction}
Over the past century, the algebraic structures describing the symmetries of various physical systems of interest were found to be captured by \textit{quantum groups}. Examples include the XXX/XXY/XYZ family of integrable spin chains \cite{Jones:1990} and the Wilson loop observables of 3-dimensional Chern-Simons theory \cite{WITTEN1983422}. \textit{Compact} quantum groups \cite{Majid:1996kd,Woronowicz1988}, in particular, such as the quantum enveloping algebra $U_q\g$ of Drinfel'd-Jimbo \cite{Drinfeld:1986in,Jimbo:1985zk} associated a semisimple Lie algebra $\g$, play a very important role in these physical examples. These ideas were then extended substantially over the past few decades to construct various types of topological quantum field theories (TQFTs). Furthermore, quantum group symmetry can also be found \cite{Alekseev:1992wn,KazhdanLusztig:1994} hidden within the algebra of current operators of the conformal field theory living on the \textit{boundary} of such 3-dimensional TQFTs.

Subsequently, it was discovered that the representation theory of quantum group Hopf algebras was able to give rise to invariants of 3-manifolds \cite{Reshetikhin:1990pr,Turaev:1992,BalsamKirillov:2012KitaevTuraevViro,Barrett1993,reshetikhin2010lecturesintegrability6vertexmodel}. This jump-started the field of \textit{quantum topology}, in which methods of theoretical physics --- namely field theory and gauge theory --- was applied to study the topology of 3-manifolds. The celebrated \textit{Reshetikhin-Turaev TQFT} can be thought of as the cornerstone of the entire field. One of, if not \textit{the}, reason that quantum groups and their representations turned out to play such a crucial role in the topology of 3-manifolds is that (i) Hopf algebra representations encode algebraically the properties of {\it (1-)tangles} in 3-space, and (ii) the geometry of 1-tangles encode the topology of 3-manifolds. 

To be more precise, 1-tangles are collections of copies of the interval $[0,1]$ embedded into a slice $\R^2\times[0,1]$, such that their endpoints are located at $\R^2\times\{0,1\}$. The ambient isotopies of these 1-tangles fixing their endpoints are well-known to be generated by the so-called \textit{Reidemeister moves}. A link (ie. a collection fo embedded copies of $S^1$) can then be obtained by attaching the top and the bottom endpoints of a 1-tangle. It was in fact known that invariants of links and knots can be obtained from Hopf algebra quantum groups \cite{Jones:1990}, prior to the construction of quantum 3-manifold invariants.

A classic theorem of Lickorish-Wallace \cite{Lickorish1962ARO,Wallace_1960} states that any closed compact 3-manifold can be obtained from the 3-sphere $S^3$ by a procedure known as \textit{surgery theory}. Briefly, one can take an embedded framed link in $S^3$, excise its tubular neighborhood, perform a so-called "Dehn filling" on the excised torii, then glue it back to obtain another 3-manifold. The link equivalences under which the resulting 3-manifolds are diffeomorphic are known as the \textit{Kirby moves} \cite{Kirby1978-pe}. These two aspects of 3-manifold topology were combined in the seminal work of Reshetikhin and Turaev \cite{Reshetikhin:1990pr,Reshetikhin:1991tc}, where quantum invariants of 3-manifolds were obtained by decorating the surgery links in $S^3$ with the data of a ribbon tensor category, such as $\operatorname{Rep}(U_q\frak{sl}_2)$, the representation category of the quantum enveloping algebra.

\bigskip

Now over the past decade, significant efforts have been dedicated to investigating the higher-dimensional analogue of the above phenomenon --- a "4-dimensional categorified quantum topology" of sorts. The success of the cobordism hypothesis \cite{Baez:1995xq,lurie2008classification} of Baez-Dolan to classify higher-dimensional TQFTs led many to explore the homotopy properties of higher-dimensional analogues of tangles --- the so-called "2-tangles" --- through higher categorical algebras \cite{BAEZ2003705}. On the other hand, several 4-dimensional field/gauge theories were constructed throughout the late 20th to early 21st century, which gave rise to very interesting invariants that can detect exotic smooth structures. Examples include (but may not be limited to) the Donaldson invariant \cite{DONALDSON1990257}, the Seiberg-Witten invariant \cite{Moore:2001}, the Rozansky-Khovanov homology \cite{Morrison2019InvariantsO4}, and the Kontsevich integral \cite{Chmutov2001-gi}. However, aside from the Khovanov homology story (see \cite{Manolescu2022SkeinLM}), it is not yet clear if these are related to the homotopy theory of 2-tangles. 

In order to complete the analogy with the Chern-Simons/Reshetikhin-Turaev TQFT, the author has opted to begin from the perspective of higher-gauge theory. Based on the theory of derived $L_n$-algebras and their associated \textit{non-Abelian} bundle gerbes (with connection) \cite{Baez:2012,Wockel2008Principal2A,Nikolaus2011FOUREV,Schommer_Pries_2011,Kim:2019owc}, one can develop the so-called "homotopy Maurer-Cartan theories" \cite{Jurco:2018sby}, which can be understood as higher-dimensional generalizations of Chern-Simons theory in the derived context. At dimension 4, in particular, the homotopy "2-Chern-Simons" theory has relatively recently received attention in both the context of physics and mathematics \cite{Soncini:2014,Zucchini:2021bnn,Chen:2024axr,Schenkel:2024dcd}. Particularly in the companion work \cite{Chen1:2025?} by the author, a framework for the combinatorial quantization of this 4d 2-Chern-Simons theory was developed on a lattice, in which the analytic and Hopf categorical structures of the underlying surface holonomy degrees-of-freedom were unraveled. 

\medskip

This paper is dedicated to the detailed categorical study of the (finite semisimple linear) 2-representations of the categorical gauge symmetries $\tilde C$ in 2-Chern-Simons theory. Physically, they correspond to (a local algebraic description of) the Wilson loop and surface observables (see \cite{Kapustin:2013uxa,Chen2z:2023}), and mathematically they form a 2-category denoted by $\operatorname{2Rep}(\tilde{C};\tilde R)$. One of the results in \cite{Chen1:2025?} proves that $\tilde C$ is a \textit{Hopf monoidal category} (cf. \cite{Crane:1994ty,DAY199799,batista2016hopf}), and $\tilde R$ is a so-called "cobraiding" which behaves like a categorical version of a quantum 2-$R$-matrix. More details can be found in \S \ref{lattice2gau}.

In this paper, we shall focus entirely on the braiding, adjunctions, and duals of $\operatorname{2Rep}(\tilde C;\tilde R)$. We will describe all of the coherence conditions of the  \textit{ribbon balancings}, such that a notion of a "ribbon 2-functor" can be understood as a 2-functor between two rigid dagger braided tensor 2-categories preserving these ribbon balancings. By leveraging the 2-tangle 2-category of Baez-Langford \cite{BAEZ2003705} $\mathcal{T}$, the notion of {\bf decorated ribbon 2-tangles} can then be defined as a ribbon 2-functor
\begin{equation*}
    \mathcal{T}\rightarrow \operatorname{2Rep}(\tilde{C};\tilde R).
\end{equation*}
Indeed, the 2-tangle hypothesis then dictates that such a functor would determine the quantum 2-Chern-Simons theory as a functorial 4d TQFT, whose  quantum invariant on a closed 4-manifold can in principle be constructed through a 4-dimensional version of the Reshetikhin-Turaev functor.

\subsection{Summary of results}
We begin by mentioning some previous works in the literature which sought to capture the geometry of 2-tangles using 2-categorical notions. We will also refer to places in this paper where generalizations and refinements of these results can be found.
\begin{enumerate}
    \item The earliest work on this, to the best of the author's knowledge, is \cite{BAEZ2003705}. In this paper, a "braided 2-category with duals" is introduced, which serves to algebraically model the geometric and homotopical aspects of tangles embedded in 4-space. This 2-tangle 2-category also has a single self-dual generator, which is "unframed" in the sense that it has equipped a trivialization of the first Reidemeister move. As we will discuss in \S \ref{higherdagger}, the notion of duality suffices for unframed objects, but not in general. We will explain the situation of \cite{BAEZ2003705} in the context of our paper in \S \ref{baezlangfordassumption}.
    
    \item In the seminal work of Douglas-Reutter \cite{Douglas:2018}, they introduced the notion of semisimple and fusion 2-categories, as well as, subsequently, pivotality and sphericality. In defining pivotality, two distinct notions of duality and rigidity were introduced: planar-rigidity and object-level rigidity. The object-level dual was defined to be strictly involutive/reflexive, which allowed a certain pivotal condition to be imposed. In \S \ref{rigidagger}, we identify this particular pivotal condition as the main culprit for the drawback of their framework mentioned in Warning 2.2.5 in \cite{Douglas:2018}. We will demonstrate in \S \ref{ribbon} how the \textit{ribbon balancing} underlying our 2-category can resolve this issue, and "unstrictify" the pivotality of Douglas-Reutter to a notion that may be called "$SO(3)$-volutvity" \cite{ferrer2024daggerncategories}.
    
    \item Over the past decade, the properties of a $\mathsf{Gray}$-categories equipped with duals had been under study \cite{Barrett_2024}. The author believes that this framework is the closest one to this paper, due to the fact that $\mathsf{Gray}$-categories with duals are the natural algebraic description of the local part of non-extended 3d defect TQFTs \cite{Carqueville:2016kdq}. We shall see in \S \ref{secondreide} and \S \ref{reidemeisterI} how many of the structures appearing \cite{Barrett_2024}) also appears in $\operatorname{2Rep}(\tilde{C};\tilde R)$. 
\end{enumerate}
Much of the writing of this paper have taken significant inspiration from the above cited papers, and we will make references to them frequently whenever appropriate. The "main result" here, so to speak, is the definition of a more refined notion of "framing" for objects in a braided 2-category in \S \ref{reidemeisterI}. 

\begin{table}[h]\hspace*{-1cm}
    \centering
    \begin{tabular}{c|c|c|c|c}
         &  fully-framed  & half-framed & unframed & self-dual \\
         \hline
       \makecell{$\operatorname{2Rep}(\tilde{C};\tilde R)$ \\ (or its delooping)} & \makecell{ribbon tensor \\ 2-category} & \makecell{$\mathsf{Gray}$-category \\ with duals \\ \cite{Barrett_2024} \\ (with monoidal product)} & \makecell{pivotal 2-category \\ \cite{Douglas:2018}\\ (with braiding)} & \makecell{2-tangle 2-category \\ with one self-dual generator \\ \cite{BAEZ2003705}}
    \end{tabular}
    \caption{A schematic table displaying the various notions of "framed-ness" in $\operatorname{2Rep}(\tilde{C};\tilde R)$ and how they relate to existing structures that have already appeared in the literature.}
    \label{tab:framing}
\end{table}

We show that there are in fact {\it four} levels of "framed-ness" (see tables \ref{tab:framing} and \ref{tab:framed}), each of which correspond to data that trivializes the duality and adjunction structures to a certain degree. We will in particular note in \textit{Remark \ref{strangesnakes}} how being "half-framed" is closely related to the structures studied in \cite{Barrett_2024}. 

\medskip

    A synoptic diagram organizing all of the 2-categorical structures involved in this paper are exhibited below (as inspired by \cite{Pennys:2016}).
{\small \[\hspace*{-2.4cm}\begin{tikzcd}
	\begin{array}{c} \text{monoidal with} \\ \text{adjoints for 1-morphisms} \end{array} & \text{braided} && \begin{array}{c} \text{balanced} \\ \vartheta,\bar\vartheta: 1_{\tilde C}\Rightarrow1_{\tilde C}\\ e:\vartheta^\dagger\circ\vartheta\Rrightarrow \id \end{array} \\
	{\text{rigid}} & {\text{braided rigid}} && {\text{balanced rigid}} \\
	{\text{lax-pivotal}} & {\text{braided lax-pivotal}} && \begin{array}{c} \textbf{2-ribbon}/SO(3)\text{-volutive} \\ \omega_\cD:\bar\vartheta_\cD^*\Rightarrow \vartheta_{\cD^*} \end{array} \\
	& {\text{pivotal}} & {\text{braided pivotal}} && \begin{array}{c} \text{strict 2-ribbon} \\ \omega_\cD: \vartheta_\cD^*\Rightarrow \vartheta_{\cD^*} \end{array} \\
	{\text{lax-spherical}} & {"\text{braided lax-spherical}"} && {\text{"2-modular"}} \\
	& {\text{spherical}} & {\text{braided spherical}} && {"\text{strict 2-modular}"}
	\arrow["{Z_1}"', curve={height=12pt}, dashed, from=1-1, to=1-2]
	\arrow[from=1-2, to=1-1]
	\arrow[from=1-4, to=1-2]
	\arrow[from=2-1, to=1-1]
	\arrow["{Z_1}"', curve={height=12pt}, dashed, from=2-1, to=2-2]
	\arrow[from=2-2, to=1-2]
	\arrow[from=2-2, to=2-1]
	\arrow[from=2-4, to=1-4]
	\arrow[from=2-4, to=2-2]
	\arrow["\begin{array}{c} \text{half-framed} \\ \bar\vartheta\cong \vartheta^\dagger \end{array}"{description, pos=0.6}, squiggly, tail reversed, from=2-4, to=3-2]
	\arrow["\begin{array}{c} \text{unframed} \\ \bar\vartheta\cong \vartheta \end{array}"{description, pos=0.8}, curve={height=18pt}, squiggly, tail reversed, from=2-4, to=4-3]
	\arrow[from=3-1, to=2-1]
	\arrow["{Z_1}"', curve={height=12pt}, dashed, from=3-1, to=3-2]
	\arrow[from=3-2, to=2-2]
	\arrow[from=3-2, to=3-1]
	\arrow[from=3-4, to=2-4]
	\arrow[from=3-4, to=3-2]
	\arrow[from=4-2, to=2-1]
	\arrow[from=4-2, to=3-1]
	\arrow["{Z_1}"'{pos=0.4}, curve={height=12pt}, dashed, from=4-2, to=4-3]
	\arrow[from=4-3, to=3-2]
	\arrow[from=4-3, to=4-2]
	\arrow[from=4-5, to=3-4]
	\arrow[from=4-5, to=4-3]
	\arrow[from=5-1, to=3-1]
	\arrow["{Z_1}"'{pos=0.7}, curve={height=12pt}, dashed, from=5-1, to=5-2]
	\arrow[dotted, from=5-2, to=3-2]
	\arrow[from=5-2, to=5-1]
	\arrow[from=5-4, to=3-4]
	\arrow[from=5-4, to=5-2]
	\arrow[curve={height=-6pt}, from=6-2, to=4-2]
	\arrow[from=6-2, to=5-1]
	\arrow["{Z_1}"', curve={height=12pt}, dashed, from=6-2, to=6-3]
	\arrow[from=6-3, to=4-3]
	\arrow[from=6-3, to=5-2]
	\arrow[from=6-3, to=6-2]
	\arrow[from=6-5, to=4-5]
	\arrow[from=6-5, to=5-4]
	\arrow[from=6-5, to=6-3]
\end{tikzcd}\]}
where the arrows $\rightarrow$ are forgetful functors and $Z_1$ denotes taking the Drinfel'd centre. The various refined notions of "framing" described in Table \ref{tab:framing} can be attributed to the interaction of the ribbon balancing $\vartheta$ with \textit{lax} structures of rigidity.

\medskip

The new insight here is that all of these geometric structures, including the old known ones (eg. planar-pivotality {\bf Theorem \ref{planaruni}} and "2-category with duals" \textbf{Theorem \ref{withduals}}) as well as the new ones (eg. the ribbon balancing structures \S \ref{ribbon} and the higher-Hopf links \S \ref{hopflinks}), were extracted from studying properties of $\operatorname{2Rep}(\tilde{C};\tilde R)$ --- namely the observables in quantum 2-Chern-Simons theory. Though not completely general, this perspective has the advantage that it allowed us to pinpoint exactly when the 2-category $\operatorname{2Rep}(\tilde{C};\tilde R)$ has, for instance,
\begin{itemize}
    \item not just braided/$E_2$-structure but a \textit{sylleptic}/$E_3$-structure (see \textit{Remark \ref{syllepsis}}), and
    \item a braiding of finite-order (see \textit{Remark \ref{braidingorder}}).
\end{itemize} 
Moreover, we will show in \S \ref{classicalpivotal} that in the undeformed classical limit, we recover $\operatorname{2Rep}(\tilde{C}\mid_{q=1};\id\otimes\id)$ as a symmetric (namely $E_4=E_\infty$) 2-category equipped with a pivotal structure in the sense of \cite{Douglas:2018}. The triviality of the {\it quadruple object-level dual} will be proven by the author and collaborators in a soon upcoming work (see also \textit{Remark \ref{categoricalS4}}).

\subsection{Overview}
We will begin in \S \ref{lattice2gau} with a concise review of what we mean by a "Hopf category $\tilde{C}$", and how such an algebraic structure appear in the quantum symmetries of 2-Chern-Simons theory \cite{Chen1:2025?}. It models a categorical version of the quantum enveloping algebra. Then, in \S \ref{2rep}, we will use its Hopf categorical structures to determine the braided monoidal structures of the 2-category $\operatorname{2Rep}(\tilde{C};\tilde R)$ of its finite semisimple linear 2-representations, by leveraging previous works \cite{neuchl1997representation,GURSKI20114225,KongTianZhou:2020,Chen:2023tjf}.

Then, the strategy is as follows: 
\begin{enumerate}
    \item We introduce the adjoints and duals in \S \ref{adjoints} and \S \ref{duals}, respectively, then we study their mutual compatibility in \S \ref{rigidagger}. This led us to the notion of a "rigid dagger tensor 2-category", and we use this structure in \S \ref{2catdimension} to unveil the main cause of the issue behind \textit{Warning 2.2.5} of \cite{Douglas:2018}.
    \item Then, in \S \ref{adj-dualbraid} and \S \ref{dualitybraid}, we include the braiding into the discussion. By examining the planar-unitarity of the braiding, we recover the notion of "braided 2-category with duals" described in \cite{BAEZ2003705}, as well as its \textit{writhing} and the \textit{fold-crossings} coherence 2-morphisms \cite{BAEZ2003705,CARTER19971}.
\end{enumerate}
These data and properties make $\operatorname{2Rep}(\tilde{C};\tilde R)$ into a \textit{ribbon tensor 2-category}. In \S \ref{ribbon}, we introduce the ribbon balancing from the above braided rigid structure. These ribbon balancings are used to define various notions of "framing" of an object, organized in table \ref{tab:framed}. We showed how these notions of framing reduce to those described in previous literature, as listed in table \ref{tab:framing}.

Next, in \S \ref{hopflinks}, we studied and constructed the \textit{Hopf link} functors. Some were found to be trivializable, and we describe the coherence conditions they satisfy. Further, we also list in table \ref{tab:hopf} the different types of Hopf links that one can form depending on the framing.

Finally, in \S \ref{classicalpivotal}, we prove that $\operatorname{2Rep}(\tilde{C};\tilde R)$ becomes symmetric and pivotal in the classical limit. In fact, every object becomes "unframed".


\subsubsection*{Statements and conflicts of interest declaration} 
The author (HC) is affiliated with and supported by Beijing Institute of Mathematical Sciences and Applications (BIMSA). There are no conflicts of interest to declare at this time. 

\subsubsection*{Data availability statement}
There are no data associated with this work.

\subsubsection*{Acknowledgments}
The author (HC) would like to extend his thanks to Yilong Wang, Nils Carqueville, Zhi-Hao Zhang, Jinsong Wu and David Green for lengthy and very enlightening discussions throughout the completion of this work. He would also like to thank an anonymous referee for the very helpful comments.

\section{Preliminaries on Hopf monoidal categories}\label{lattice2gau}
We now begin with a brief introduction on the algebraic structures in this work. Let $(\tilde {C},\otimes,\tilde I)$ denote an (additively complete $\bbC$-linear) monoidal category with an unit object $\tilde I$ (see  \cite{etingof2016tensor}). As is conventional, we denote by $\boxtimes$ the symmetric monoidal Delign tensor product on linear categories.

We say $\tilde{C}$ is a \textbf{(strict) bimonoidal category} (cf. \cite{Crane:1994ty,DAY199799,batista2016hopf}) iff it is equipped with 
\begin{enumerate}
    \item a strictly coassociative coproduct functor $\tilde\Delta: \tilde C\rightarrow \tilde C\boxtimes \tilde C$, and
    \item a counit functor $\epsilon: \tilde C\rightarrow \mathsf{Vect}$,
\end{enumerate}
such that there are invertible natural transformations
\begin{gather}
     \tilde \Delta\circ (-\otimes -) \cong (-\otimes - \boxtimes -\otimes -) \circ (1_{\tilde C }\otimes \sigma\otimes 1_{\tilde C }) \circ(\tilde\Delta \boxtimes\tilde\Delta)\label{bimonoid}\\ 
     (\tilde\epsilon\boxtimes 1_{\tilde C })\circ \tilde\Delta \cong 1_{\tilde C },\qquad (1_{\tilde C }\boxtimes\tilde\epsilon)\circ\tilde\Delta \cong 1_{\tilde C}\nonumber
\end{gather}
which satisfy the obvious coherence conditions against the coassociator, 
\begin{equation*}
    (\tilde\Delta\boxtimes 1_{\tilde C })\circ\tilde\Delta \cong (1_{\tilde C }\boxtimes\tilde \Delta)\circ\tilde\Delta ,
\end{equation*}
where $\sigma$ swaps the Deligne tensor factors. 

Given a bimonoidal category $(\tilde C,\otimes,\tilde I,\tilde \Delta,\tilde \epsilon)$, we now introduce an \textbf{antipode functor} $\tilde S: \tilde C\rightarrow \tilde C^\text{m-op,c-op}$ into the monoidal/comonoidal opposite, for which there are invertible natural transformations
\begin{equation}
    (-\otimes- )\circ(\tilde S\boxtimes 1_{\tilde C})\circ\tilde \Delta \cong \tilde I\otimes \tilde \epsilon\cong (-\otimes- )\circ(1_{\tilde C}\boxtimes \tilde S)\circ\tilde \Delta,\label{antpod}
\end{equation}
together with the appropriate coherence relations (cf. a "Hopf algebroid" of \cite{DAY199799}). We are now in position to introduce the central algebraic gadget in this paper.
\begin{definition}\label{cobrhopfdaggercat}
    We say $(\tilde C,\otimes,\tilde I,\tilde\Delta,\tilde \epsilon,\tilde S;\tilde R)$ is a \textbf{cobraided Hopf category} iff $\tilde C$ is equipped with a comonoidal natural transformation $\tilde R: \tilde\Delta \Rightarrow \tilde\Delta^\text{op}$. We say $\tilde C$ is \textbf{strictly cobraided} iff $\tilde R$ is invertible, and we say $\tilde C$ is a \textbf{cobraided Hopf dagger category} iff it is equipped with an involutive dagger structure $-^\dagger: \tilde C\rightarrow \tilde C^\text{op}$ for which the functors $\otimes,\tilde\Delta,\tilde S$ are dagger, and their accompanying invertible coherence 2-morphisms are unitary.
\end{definition}
\noindent In the rest of this paper, we will also often denote $-^\dagger =\tilde S_v$ by the dagger structure.

\begin{rmk}\label{hopfalgd}
    Similarly, a \textbf{Hopf cocategory} $C$ is a linear additive \textit{co}monoidal \textit{co}category equipped with a compatible monoidal structure $-\otimes-$ --- and thus making it also into a linear additive bimonoidal category --- as well as the appropriate antipodes. A closely related notion of a \textit{Hopf (op)algebroid} was studied in \cite{DAY199799}.
\end{rmk}

The goal of this work is to focus on the structure of the \textit{2-representations} of a strictly cobraided Hopf dagger category $(\tilde C;\tilde R)$ --- or just briefly a Hopf category in the following --- the structures of which were shown in \cite{Chen1:2025?} to naturally appear in the 4-dimensional \textit{2-Chern-Simons theory}.

\begin{tcolorbox}[breakable]
    \paragraph{Cobraiding from 2-$R$-matrices.} Let us unpack the meaning of the \textit{cobraiding} $\tilde R: \tilde \Delta\Rightarrow \tilde \Delta^\text{op}$, with a specific instance of its construction. Consider the following ingredients:
    \begin{enumerate}
        \item a \textbf{2-$R$-matrix} $\tilde R\in \tilde C\boxtimes\tilde C$, which satisfies the \textit{2-Yang-Baxter relations} (cf. \cite{Chen:2023tjf,Chen1:2025?})
\begin{equation}
    (\tilde\Delta\boxtimes 1)\tilde R = \tilde R^{13}\otimes\tilde R^{12},\qquad (1\boxtimes\tilde \Delta)\tilde R = \tilde R^{13}\otimes \tilde R^{23},\label{2yb}
\end{equation}
\item an invertible 1-morphism witnessing the intertwining relations
\begin{equation*}
    \tilde R'_\zeta :\tilde\Delta^\text{op}(\zeta)\otimes \tilde R \cong \tilde R\otimes \tilde \Delta(\zeta),
\end{equation*}
which is natural in $\zeta\in\tilde C$.
    \end{enumerate}
Then, consider a natural isomorphism $\tilde R: \tilde\Delta\Rightarrow \tilde\Delta^\text{op}$ whose components are given by
\begin{equation*}
    \tilde R_\zeta = \tilde R'_\zeta \circ\operatorname{ad}_{\tilde R}: \tilde\Delta(\zeta)\rightarrow \tilde\Delta^\text{op}(\zeta),
\end{equation*}
where $\operatorname{ad}_{\tilde R}=\tilde R \otimes -\otimes \tilde R^{-1}$ is the conjugation action by the 2-$R$-matrix $\tilde R$.\footnote{It is worth mentioning here that the notion of a \textit{universal} 2-$R$-matrix, for the so-called "Hopf 2-algebras" \cite{Wagemann+2021,Majid:2012gy,Lu:1996}, was defined in \cite{Chen:2023tjf}. The conditions \eqref{2yb}, \eqref{2ndreidemeister} has all be derived in that context.} 

It can then be directly deduced that the 2-Yang-Baxter \eqref{2yb} implies the comonoidality of $\tilde R$. Moreover, together with the antipode $\tilde S$, there exist natural isomorphisms
\begin{equation}
    (\tilde S\boxtimes 1)\tilde R \otimes \tilde R \cong \tilde I\boxtimes\tilde I,\qquad \tilde R\otimes (1\boxtimes \tilde S)\tilde R\cong\tilde I\boxtimes\tilde I\label{2ndreidemeister}
\end{equation}
coming from $\tilde R'$ which witnesses the quasitriangularity condition. This specific construction of a strict cobraiding $\tilde R$ is in fact what appears in the quantization of 2-Chern-Simons theory; see \cite{Chen1:2025?} and \textit{Remark \ref{quantumcobraiding}}.
\end{tcolorbox}

In the next subsection, we will give a brief review of how the structure of a strictly cobraided Hopf dagger category appears in the context of a 4-dimensional topological gauge-field theory. The uninterested readers can skip directly to \S \ref{2rep}.

\subsection{Categorical gauge transformations as a Hopf category}
This general sentiment --- namely that higher-categorical structures should appear in higher-dimensional physics --- were well-understood \cite{Baez:1995xq} and has seen many successful applications in the past, we here focus on the explicit higher categorical algebra which appears in the \textit{2-Chern-Simons theory}.

By "2-Chern-Simons theory", we refer to the topological higher-gauge theory \cite{Jurco:2018sby,Soncini:2014,Chen:2024axr} based on a structure (strict) Lie 2-group $\mathbb{G} = \mathsf{H}\xrightarrow{t}G$ \cite{Baez:2004,Porst2008Strict2A}. On a 4-manifold $M^4$, the classical action
\begin{equation*}
    S_{2CS}[A,B] = \int_{M^4}\langle B,F_A-\frac{1}{2}\mu_1B\rangle,\qquad \begin{cases}
        A\in \Omega^1(M^4,\g) \\ 
        B\in\Omega^2(M^4,\h)
    \end{cases}
\end{equation*}
involves a tuple of gauge fields $(A,B)$ valued in the Lie 2-algebra associated \cite{Chen:2012gz} to $\mathbb{G}$, which constitute (on-shell flat) \textit{$\mathbb{G}$ 2-connection} on $M^4$ \cite{schreiber2013connectionsnonabeliangerbesholonomy,Baez:2012,Martins:2006hx,Kim:2019owc}. It is worth mentioning briefly here that 2-Chern-Simons theory, or higher-gauge theory in general, has been known to be deeply relevant to various fields of physics \cite{Chen:2022hct,Ritter:2016,Mikovic:2015hza,Mikovic:2011si,Girelli:2021khh,Baez:1995ph,Baez:2002highergauge,Benini:2018reh,Cordova:2018cvg,Song_2023,Song:2021,Gaiotto:2014kfa,Wen:2019,Kong:2020wmn,Wang:2016rzy,Delcamp:2018kqc,Bullivant:2016clk,Bochniak_2021,Kapustin:2013uxa,Thorngren2015,Dubinkin:2020kxo,Kim:2019owc,Sati:2009ic,Schreiber:2013pra}.

\begin{rmk}
A Lie 2-group $\mathbb{G}$ can also be described in terms of a groupoid $$\mathsf{H}\rtimes G\rightrightarrows G,\qquad  a\xrightarrow{(a,\gamma)} at(\gamma),$$ which comes equipped invertible horizontal (group) and vertical (groupoid) multiplications \cite{Chen:2012gz}
\begin{equation*}
    (a,\gamma)\cdot (a',\gamma') = (aa',\gamma(a\rhd \gamma')),\qquad (a,\gamma)\circ (at(\gamma),\gamma')= (a,\gamma\gamma'),
\end{equation*}
with the units given by $(1,{\bf 1}_1)$ and ${\bf 1}_a$ for all $a\in G.$ They are compatible through the so-called {\it interchange law} \cite{Baez:2003fs}
\begin{equation*}
    ((a_1,\gamma_1)\cdot (a_2,\gamma_2))\circ ((a_3,\gamma_3)\cdot (a_4,\gamma_4))=((a_1,\gamma_1)\cdot (a_3,\gamma_3))\circ ((a_2,\gamma_2)\cdot (a_4,\gamma_4)),
\end{equation*}
where $(a_1,\gamma_1),\dots,(a_4,\gamma_4)\in\mathbb{G}$ are appropriately composable 2-group elements. In this paper, we will often refer to the horizontal multiplication $\cdot$ the "product", and the vertical multiplication $\circ$ the "composition".
\end{rmk}

Through the theory of principal 2-bundles \cite{Wockel2008Principal2A}, the surface holonomies associated to the principal $\mathbb{G}$ 2-bundle on $M^4$ can be constructed as a \textit{2-functor} which assigns an element of $\mathbb{G}$ to a surface-boundary pair $(\Sigma,\gamma)$,
\begin{equation*}
    \operatorname{2Hol}_\mathbb{G}(\Sigma,\gamma) = (W_\Sigma,V_\gamma) \in \mathsf{H}\rtimes G,\qquad tW_\Sigma = V_{\partial\Sigma}.
\end{equation*}
For an explicit construction of these 2-holonomies, see eg. \cite{Yekuteli:2015,Baez:2012,Chen:2024axr}.

\medskip

In \cite{Chen1:2025?} (and reviewed in detail in \cite{chen2:2025}), the \textit{combinatorial} quantization framework for Hamiltonian 2-Chern-Simons theory was developed, taking inspiration from the works \cite{Alekseev:1994pa,Alekseev:1994au} for 3d Chern-Simoms theory. By "combinatorial", we refer to the discretized 2-holonomies $\mathbb{G}^{\Gamma^2}$ \cite{Bullivant:2016clk,Bochniak:2020vil,Bochniak_2021}, namely assignments $(b_f,h_e) \in \mathsf{H}\rtimes G$ of 2-group elements to polygonal face-edge pairs $(e,f)\in\Gamma^2$, on the 2-skeleton $\Gamma^2$ of a simplicialization of a Cauchy slice (ie. a time-like 3-dimensional submanifold of $M^4$). See also \cite{Kapustin2017} for finite 2-groups.

These discrete 2-holonomies were found to inherit a natural action by \textit{2-gauge transformations}. Such 2-gauge transformations are parameterized by assignments of $\mathbb{G}$ to 1-simplices,
\begin{equation*}
    \mathbb{G}^{\Gamma^1} = \Big\{ a_{v}\xrightarrow{(a_v,\gamma_e)}a_{v'}\mid v\xrightarrow{e}v'\in \Gamma^1\Big\}
\end{equation*}
with directed edges $e$ with source/target vertices $v,v'$. This is a monoidal groupoid, whose composition is given by the confluence of the structures on $\Gamma^1$ and $\mathbb{G}$,
\begin{equation*}
    a_v\xrightarrow{(a_v,\gamma_e)}a_{v'}\xrightarrow{(a_{v'},\gamma_{e'})}a_{v''} = a_v\xrightarrow{(a_v,\gamma_e\gamma_e')}a_{v''},
\end{equation*}
and whose monoidal structure is invertible. In the following, we will simplify our notation and neglect the reference to the source vertex decoration $a_v$ in the 1-morphisms, and denote by
\begin{equation*}
    \zeta = a_v\xrightarrow{\gamma_e}a_{v'}
\end{equation*}
a decorated 1-graph in $\mathbb{G}^{\Gamma^1}$.

The natural 2-gauge transformation action of $\mathbb{G}^{\Gamma^1}$ on $\mathbb{G}^{\Gamma^2}$ is given by the horizontal/group conjugation:
\begin{equation*}
    ((a_v,\gamma_e),(h_e,b_f))\mapsto (h'_e,b'_f) = \operatorname{hAd}_{(a_v,\gamma_e)}^{-1}(h_e,b_f).
\end{equation*}
More details on this can be found in the companion paper \cite{Chen1:2025?}.\footnote{This action in fact makes the 2-functor category 
\begin{equation*}
    \operatorname{Fun}(\Gamma^{2\leq},\mathbb{G}) \simeq \mathbb{G}^{\Gamma^2}//\mathbb{G}^{\Gamma^1}
\end{equation*}
into an action 2-groupoid.}

\subsection{Quantization of the categorical gauge symmetry}
The above describes the natural action of the decorated 1-graphs $\mathbb{G}^{\Gamma^1}$ by 2-gauge transformations on the discrete 2-holonomies. This action was then extended in particular to \textit{states} on these 2-holonomies, which can be understood as additive categorical functionals on the combinatorial 2-holonomy configurations. Details of this construction can be found in \cite{Chen1:2025?}. 

The upshot is that such 2-holonomy states were modelled --- in the framework of the \textit{meausreable categories} of Crane-Yetter \cite{Crane:2003gk,Yetter2003MeasurableC,Baez:2012} --- as a certain additive Hopf \textit{co}category  denoted by $\mathfrak{C}(\mathbb{G}^{\Gamma^2})$, for which the 2-gauge transformation action induces a module structure
\begin{equation*}
    \Lambda: \mathbb{G}^{\Gamma^1} \times \mathfrak{C}(\mathbb{G}^{\Gamma^2})\rightarrow \mathfrak{C}(\mathbb{G}^{\Gamma^2}).
\end{equation*}
This extends to an additive linear structure 
\begin{equation*}
    \Lambda_{\zeta\oplus\zeta'} = \Lambda_\zeta \oplus \Lambda_{\zeta'}
\end{equation*}
for which the 2-gauge parameters are the homogeneous elements (in the sense of \cite{SOZER2023109155}). We denote by the additive completion of $\mathbb{G}^{\Gamma^1}$ by "$\mathbb{U}\G^{\Gamma^1}$". It was then shown in \cite{Chen1:2025?} that $\Lambda$ determines $\mathfrak{C}(\mathbb{G}^{\Gamma^2})$ as a \textit{monoidal} $\mathbb{U}\G^{\Gamma^1}$-module.

\medskip

Now by following \cite{Alekseev:1994pa}, given the 2-Chern-Simons Hamiltonian obtained from $S_{2CS}$, one can extract a \textit{2-graded 2-$R$-matrix} \cite{Bai_2013,Chen:2023tjf} which quantizes into an invertible cobraiding  on $\mathfrak{C}(\mathbb{G}^{\Gamma^2})$, and makes the monoidal structure on $\mathfrak{C}(\mathbb{G}^{\Gamma^2})$ non-symmetric (denoted $\ostar$). For an idea of what this cobraiding is, see \textit{Remark \ref{quantumcobraiding}}). This gives rise to a deformation quantization $\mathfrak{C}(\mathbb{G}^{\Gamma^2})\rightsquigarrow \mathfrak{C}_q(\mathbb{G}^{\Gamma^2})$ induced from the Lie 2-bialgebra $(\G;r)$ underlying 2-Chern-Simons action \cite{chen:2022}.

\begin{rmk}\label{quantumcobraiding}
    Let us elaborate a bit more on the cobraiding. In \cite{Chen1:2025?}, we have defined a (measureable) Hopf cocategory $\mathfrak{C}_q(\mathbb{G})$ whose objects are given by sheaves $\phi=\Gamma_\mathbb{G}[[\hbar]]$ of smooth $C(\mathbb{G})\otimes\bbC[[\hbar]]$-module algebras over $\mathbb{G}$, where $C_q(\mathbb{G}) = C(\mathbb{G})\otimes \bbC[[\hbar]]$ denotes the quantized function algebra on $\mathbb{G}$ (see \cite{Chen:2023tjf}). The components of the cobraiding $R$ at $\phi$ is, by definition, a measureable morphism
\begin{equation*}
    R_\phi: \Delta_\phi=\bigoplus \phi_{(1)}\boxtimes \phi_{(2)} \rightarrow \Delta_\phi^{\text{op}}=\bigoplus \phi_{(2)}\boxtimes \phi_{(1)}
\end{equation*}
given by a sheaf of bounded linear operators 
\begin{equation*}
    (R_\phi)_{\mathrm{z},\mathrm{z}'}: \bigoplus (\phi_{(1)})_\mathrm{z}\boxtimes(\phi_{(2)})_{\mathrm{z}'} \rightarrow \bigoplus (\phi_{(2)})_{\mathrm{z}}\boxtimes (\phi_{(1)})_{\mathrm{z}'}
\end{equation*}
at each stalk $(\mathrm{z},\mathrm{z}')\in\mathbb{G}^{\times 2}$.\footnote{Here the structure sheaf $C_q(\mathbb{G}^{\times 2}) \cong C_q(\mathbb{G})\bar \otimes C_q(\mathbb{G})$ on $\mathbb{G}^{\times 2}$ is defined in terms of the topological tensor product $\bar \otimes$, which is given by norm-completing the usual tensor product of the measureable $L^2$-sections.}  Given the invertibility of $R$, we can decompose the *-automorphism $$R_\phi =  R_\phi'\circ \operatorname{ad}_{R(\phi)},\qquad R(\phi) \in \bigoplus\phi_{(1)}\boxtimes\phi_{(2)}$$ into an "inner" and "outer" part. The sheaf $R\in\mathfrak{C}_q(\mathbb{G})\boxtimes \mathfrak{C}_q(\mathbb{G})$ formed by all of the sections $R(\phi)$ then constitute precisely the so-called "\textit{2-$R$-matrix}" mentioned earlier in the beginning of \S \ref{lattice2gau}.
\end{rmk}

In light of this deformation, for $\mathfrak{C}_q(\mathbb{G}^{\Gamma^2})$ to remain as a \textit{monoidal} module under 2-gauge transformations $\Lambda$, the category $\mathbb{U}\G^{\Gamma^1}\rightsquigarrow \mathbb{U}_q\G^{\Gamma^1}$ must itself receive a non-trivial quantum deformation. This leads to the introduction of an invertible cobraiding $\tilde R: \tilde\Delta\Rightarrow\tilde\Delta^\text{op}$ which satisfies the following conditions against $\Lambda$:
\begin{enumerate}
    \item  the presence of a coherent invertible \textbf{module tensorator} 
            $$\varpi: (-\ostar-)\circ \Lambda_{\tilde\Delta} \cong \Lambda\circ  (-\ostar-): \tilde{\mathbb{U}}_q\G^{\Gamma^1}\boxtimes\mathfrak{C}_q(\mathbb{G}^{\Gamma^2})^{\boxtimes 2}\rightarrow \mathfrak{C}_q(\mathbb{G}^{\Gamma^2}),$$
            and that
            \item the cobraidings satisfy
        \begin{align*}
      \Lambda_{\tilde\Delta }R \big( \Delta \big)\cong \Lambda_{\tilde R\big(\tilde \Delta\big)}\Delta:  \tilde{\mathbb{U}}_q\G ^{\Gamma^1}\boxtimes\mathfrak{C}_q(\mathbb{G}^{\Gamma^2}) \rightarrow \mathfrak{C}_q(\mathbb{G}^{\Gamma^2})^{\boxtimes 2}.
\end{align*}
\end{enumerate}
Together with the \textit{2-dagger structure} on the 2-skeleton $\Gamma^2$ induced by orientation reversal and framing reversal (see {\bf Example 5.5} of \cite{ferrer2024daggerncategories}), the following was then proven in \cite{Chen1:2025?}.
\begin{proposition}\label{2gthopfsymm}
    The categorical quantum gauge symmetries $\mathbb{U}_q\G^{\Gamma^1}$ of 2-Chern-Simons theory give rise to a strictly cobraided Hopf dagger category, graded by the groupoid $\mathbb{G}^{\Gamma^1}$ (cf. \cite{SOZER2023109155}) --- where the antipode $\tilde S$ and the dagger structure $-^\dagger = \tilde S_v$ are induced by the 2-dagger structure on the lattice $\Gamma$.
\end{proposition}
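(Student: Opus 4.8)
The plan is to verify, one structure at a time, that the additive completion $\mathbb{U}_q\G^{\Gamma^1}$ satisfies every clause of Definition \ref{cobrhopfdaggercat}, building each piece from the combinatorial $2$-group data on the lattice and deferring the lengthiest coherence bookkeeping to the companion paper \cite{Chen1:2025?}. First I would fix the $\mathbb{G}^{\Gamma^1}$-grading: the homogeneous (simple) objects are the decorated $1$-graphs $\zeta\in\mathbb{G}^{\Gamma^1}$, and $\mathbb{U}_q\G^{\Gamma^1}$ is their $\bbC$-linear additive completion (cf. \cite{SOZER2023109155}). The monoidal product $\otimes$ is induced by the horizontal product $\cdot$ on $\mathbb{G}$, extended $\bbC$-bilinearly; strict (co)associativity and the interchange law on $\mathbb{G}$ supply the associator and make $\otimes$ strict. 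The coproduct $\tilde\Delta$ and counit $\tilde\epsilon$ are then \emph{forced} by the demand that the $2$-gauge action $\Lambda$ be a monoidal $\mathbb{U}_q\G^{\Gamma^1}$-module: concretely, $\tilde\Delta$ is the functor for which the module tensorator $\varpi$ provides the coherent isomorphism $(-\ostar-)\circ\Lambda_{\tilde\Delta}\cong\Lambda\circ(-\ostar-)$, and $\tilde\epsilon$ is read off from the action on the monoidal unit of $\mathfrak{C}_q(\mathbb{G}^{\Gamma^2})$. Coassociativity and the bimonoidality isomorphism \eqref{bimonoid} then reduce to the associativity of $\ostar$ together with the coherence (pentagon/hexagon-type) conditions satisfied by $\varpi$.

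Next I would construct the antipode and the cobraiding. The (horizontal) antipode $\tilde S\colon\tilde C\to\tilde C^{\text{m-op,c-op}}$ is induced by group inversion $g\mapsto g^{-1}$ in the $2$-group $\mathbb{G}$; the antipode axiom \eqref{antpod} then collapses to the defining identity $g\cdot g^{-1}=e$ transported through $\tilde\Delta$. For the cobraiding I would invoke the explicit construction of the boxed paragraph in \S\ref{lattice2gau}: the $2$-$R$-matrix $\tilde R$ extracted from the quantized $2$-Chern--Simons Hamiltonian --- equivalently from the Lie $2$-bialgebra $(\G;r)$ --- defines $\tilde R_\zeta=\tilde R'_\zeta\circ\operatorname{ad}_{\tilde R}$. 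Comonoidality of $\tilde R\colon\tilde\Delta\Rightarrow\tilde\Delta^{\text{op}}$ follows from the $2$-Yang--Baxter relations \eqref{2yb}; invertibility (hence \emph{strict} cobraiding) is inherited from the invertibility of the $2$-$R$-matrix sheaf; and compatibility with $\tilde S$ is exactly the quasitriangularity \eqref{2ndreidemeister} witnessed by $\tilde R'$. The two displayed conditions preceding the statement --- the tensorator $\varpi$ and the intertwining $\Lambda_{\tilde\Delta}R(\Delta)\cong\Lambda_{\tilde R(\tilde\Delta)}\Delta$ --- then certify that this cobraiding is consistent with, rather than merely formally attached to, the module structure $\Lambda$.

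Finally I would install the dagger. The $2$-dagger structure on the $2$-skeleton $\Gamma^2$ (orientation reversal together with framing reversal, as in Example~5.5 of \cite{ferrer2024daggerncategories}) induces the involutive contravariant functor $-^\dagger=\tilde S_v\colon\tilde C\to\tilde C^{\text{op}}$, realized as vertical groupoid inversion of the decorated $1$-graphs; involutivity $(-^\dagger)^{2}\cong\id$ is the double reversal. I would then check that $\otimes,\tilde\Delta,\tilde S$ are dagger functors with unitary coherence $2$-morphisms: $\otimes$ commutes with $\tilde S_v$ because the horizontal and vertical multiplications on $\mathbb{G}$ commute through the interchange law, $\tilde\Delta$ does so because it was built from the framing-reversal-equivariant action $\Lambda$, and $\tilde S$ does so because the horizontal and vertical inversions commute in $\mathbb{G}$; unitarity of the coherence cells descends from the unitary structure isomorphisms of the underlying measurable-category framework.

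The step I expect to be the genuine obstacle is not any single axiom in isolation but their \emph{simultaneous} survival under the quantum deformation $\mathbb{U}\G^{\Gamma^1}\rightsquigarrow\mathbb{U}_q\G^{\Gamma^1}$: one must deform $\ostar$, $\tilde\Delta$ and $\tilde R$ in lockstep so that $\varpi$, the cobraiding-compatibility condition, and the unitarity and involutivity of $-^\dagger$ all hold at once. In particular, verifying that the comonoidal cobraiding $\tilde R$ remains compatible with the dagger after deformation --- so that $\tilde S_v$ stays involutive and its coherence $2$-cells stay unitary --- is where the Lie $2$-bialgebra structure and the $2$-Yang--Baxter relations must genuinely be used, and it is precisely this bundle of coherence checks that is discharged in detail in \cite{Chen1:2025?}.
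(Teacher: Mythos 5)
The paper does not actually prove this proposition: it is stated as a result imported wholesale from the companion work \cite{Chen1:2025?}, with the surrounding text of \S\ref{lattice2gau} only assembling the ingredients (the grading by $\mathbb{G}^{\Gamma^1}$, the monoidal product from the horizontal $2$-group multiplication, the coproduct forced by the monoidal-module condition via the tensorator $\varpi$, the cobraiding built from the $2$-$R$-matrix as in the boxed paragraph, and the dagger from the $2$-dagger structure on $\Gamma$). Your outline reconstructs exactly that strategy and defers the same coherence bookkeeping to the same reference, so in substance it matches the paper's (non-)proof. One imprecision worth flagging: you take $\tilde S$ to be induced by horizontal group inversion $g\mapsto g^{-1}$ and claim the antipode axiom \eqref{antpod} ``collapses to $g\cdot g^{-1}=e$''; per \textit{Remark \ref{classicalantipode}} this description of $\tilde S$ is only valid in the classical limit $q\to 1$, where $\tilde S$ is unipotent. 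In the deformed category $\tilde S^2\not\simeq 1_{\tilde C}$ (this is what later forces the left- and right-duals apart in \S\ref{duals}), so the quantum antipode cannot literally be inversion and must itself be deformed in lockstep with $\tilde\Delta$ and $\tilde R$ --- your closing paragraph acknowledges the lockstep issue in general, but the antipode construction as you state it would yield an involutive $\tilde S$, contradicting the structure the rest of the paper depends on.
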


\begin{rmk}\label{catquantevelop}
    We note here that the above Hopf categorical structure of $\tilde{C}=\tilde{C}^\Gamma$ requires one to specify an underlying lattice $\Gamma$, but it does \textit{not} depend on which lattice it is. Particularly, if $\Gamma^1 = \{v\xrightarrow{e}v\}$ consist of a single edge loop based at a vertex $v\in\Gamma^0$, the homogeneous elements of the Hopf category $\tilde{C}$ corresponds to a single copy of $\mathbb{G}$. This led the author to call \cite{Chen1:2025?}, in this case, the \textit{categorical quantum enveloping algebra} $\mathbb{U}_q\G$, where $\G=\operatorname{Lie}\mathbb{G}$ denotes the Lie 2-algebra underlying the Lie 2-group \cite{Chen:2012gz,Bai_2013,chen:2022}. This is only a suggestive notation for now, but a future work will substantiate this notation by studying its categorical quantum duality with the \textit{categorical quantum coordinate ring}. This latter notion was defined in a much more concrete manner in \cite{Chen1:2025?}.
\end{rmk}

\subsection{Hopf structure on the quantum 2-gauge symmetries}\label{2gthopf}
In light of \textit{Remark \ref{catquantevelop}}, a special case of \textbf{Proposition \ref{2gthopfsymm}} is then the following.
\begin{theorem}
    Let $\Gamma^1= \big\{v\xrightarrow{e}v\big\}$ denote a 1-graph with a single loop edge $e$. The corresponding 2-gauge transformations on $\Gamma^1$ gives rise to a $\mathbb{G}$-graded strictly-cobraided Hopf dagger category $\Big(\mathbb{U}_q\G^{\Gamma^1}=\mathbb{U}_q\G,\cdot,\tilde I,\Delta,\tilde\epsilon,\tilde S;\tilde R\Big)$.
\end{theorem}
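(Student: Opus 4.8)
The plan is to derive this statement directly as the one-loop specialization of \textbf{Proposition \ref{2gthopfsymm}}, exactly as anticipated by \textit{Remark \ref{catquantevelop}}. First I would invoke Proposition \ref{2gthopfsymm}, which already furnishes, for an arbitrary lattice $\Gamma$, the full strictly-cobraided Hopf dagger structure on $\mathbb{U}_q\G^{\Gamma^1}$ graded by the groupoid $\mathbb{G}^{\Gamma^1}$, with antipode $\tilde S$ and dagger $-^\dagger=\tilde S_v$ inherited from the 2-dagger structure on $\Gamma$. Since that remark stresses that the Hopf categorical data is independent of the underlying lattice, the entire content of the theorem reduces to identifying the grading groupoid $\mathbb{G}^{\Gamma^1}$ in the case $\Gamma^1=\{v\xrightarrow{e}v\}$.

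The key step is therefore the bookkeeping identification of this grading groupoid with $\mathbb{G}$ itself. For the single-loop graph there is exactly one vertex $v$ and one loop edge $e$, so the decorated 1-graphs $a_v\xrightarrow{(a_v,\gamma_e)}a_vt(\gamma_e)$ range precisely over one copy of $\mathsf{H}\rtimes G$ with source and target in a single copy of $G$; this is literally the groupoid presentation $\mathsf{H}\rtimes G\rightrightarrows G$ of $\mathbb{G}$. I would then check that the composition $a_v\xrightarrow{\gamma_e}a_{v'}\xrightarrow{\gamma_{e'}}a_{v''}=a_v\xrightarrow{\gamma_e\gamma_{e'}}a_{v''}$ recorded earlier closes on this loop, since here $v=v'=v''$, so that no new vertices or edges are generated and the grading groupoid is exactly $\mathbb{G}$. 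Consequently the homogeneous elements of $\mathbb{U}_q\G^{\Gamma^1}$ are indexed by $\mathbb{G}$, which justifies both the notation $\mathbb{U}_q\G^{\Gamma^1}=\mathbb{U}_q\G$ and the asserted $\mathbb{G}$-grading.

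It then remains only to observe that the structural functors $(\cdot,\tilde I,\tilde\Delta,\tilde\epsilon,\tilde S;\tilde R)$ restrict verbatim to this graded component: the product $\cdot$ is the horizontal multiplication internal to the loop, while the coproduct, counit, antipode, cobraiding, and the dagger $-^\dagger=\tilde S_v$ are all constructed in Proposition \ref{2gthopfsymm} fiberwise over $\mathbb{G}^{\Gamma^1}$ and hence descend without alteration. I do not expect any genuine obstacle, as this is a corollary whose only substance is the above identification; the single point requiring a moment's care is the compatibility of the coproduct $\tilde\Delta:\tilde C\to\tilde C\boxtimes\tilde C$ with the restriction to a fixed $\mathbb{G}$-homogeneous grading, i.e. that $\tilde\Delta$ of a $\mathbb{G}$-graded object lands in the appropriate graded summand of the Deligne square $\tilde C\boxtimes\tilde C$ --- but this is precisely the grading-compatibility already encoded in Proposition \ref{2gthopfsymm}.
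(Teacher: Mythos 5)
Your proposal matches the paper's treatment exactly: the theorem is presented there as an immediate special case of \textbf{Proposition \ref{2gthopfsymm}}, with \textit{Remark \ref{catquantevelop}} supplying precisely the identification of the grading groupoid with a single copy of $\mathbb{G}$ for the one-loop graph. Your fleshing out of the bookkeeping (the loop's decorated edges ranging over $\mathsf{H}\rtimes G\rightrightarrows G$ and the structural functors restricting fiberwise) is consistent with, and slightly more explicit than, what the paper records.
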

We call the Hopf category $\tilde{C}=\mathbb{U}_q\G$ the {\bf categorical quantum symmetries,} and serves as the central motivation for \textbf{Definition \ref{cobrhopfdaggercat}}.

\paragraph{The classical limit.}
Recall $q$ denotes the formal deformation parameter which deforms the coprodut functor $\Delta$ on $\tilde{C}$. As such, it is worth mentioning that in the classical limit $q\rightarrow 1$ we have 
\begin{enumerate}
    \item $\tilde{C}$ becomes cocommutative,
    \item $\tilde R \rightarrow \id\otimes\id$ becomes trivial, and
    \item $\tilde S$ becomes unipotent. 
\end{enumerate}
These facts will become important later in \S \ref{classicalpivotal}; they are analogues of the properties of ordinary quantum groups \cite{Woronowicz1988,Majid:1996kd}.


\medskip

We emphasize here that much of what follows should hold with $\tilde{C}$ replaced by a generic (ie. weak) Hopf dagger category equipped with a weak/lax cobraiding. However, we will prove several characterization results for the 2-representations specifically for the case where $\tilde{C}$ describes quantum 2-gauge transformations.

\section{Unitary 2-representations of $\tilde C$}\label{2rep}
We say a linear finite semisimple category $\cD$ (ie. a Kapranov-Voevodsky 2-vector space \cite{Kapranov:1994}) is a finite-dimensional {\it 2-representation} of $\tilde{C}  $ iff it is equipped with a lax monoidal functor $\rho: \tilde{C} \rightarrow \operatorname{End}(\cD)$, or equivalently a $\tilde{C}$-module structure $\rhd: \tilde{C}\times\cD\rightarrow\cD$ such that
\begin{equation*}
    \rho(\zeta)(d) = \zeta\rhd d,\qquad \forall~ \zeta\in \tilde{C},~ d\in\cD.  
\end{equation*}
We will often use both descriptions interchangeably. Note we do not a priori require $\cD$ to be representable as sheaves over some $\mathbb{G}^{\Gamma^1}$-space $P$ \cite{Nikolaus2011FOUREV,Schommer_Pries_2011}; this notion will become important elsewhere, but not here.

These 2-representations form a 2-category denoted by $\operatorname{2Rep}(\tilde{C})$, in which the 1-morphisms are module functors $F:\cD\rightarrow \cD'$ equipped with intertwining natural transformations $F_\zeta: F(\zeta\rhd-)\rightarrow \zeta\rhd F(-)$ for each $\zeta\in \tilde{C}$, and the 2-morphisms are module natural transformations $\alpha: F\Rightarrow F'$ which commute with $F_\zeta,F'_\zeta$. Such 2-representation 2-categories and their applications have been studied extensively for \textit{finite} 2-groups $\mathbb{G}$ in, eg., \cite{Bartsch:2022mpm,Bartsch:2023wvv,Huang:2024,Delcamp:2023kew,Chen2z:2023}, for which $\operatorname{2Rep}(\mathbb{G})$ is known to be finite semisimple (in fact fusion; see \cite{Douglas:2018}). In contrast, however, we emphasize here that it is so far unknown whether $\operatorname{2Rep}(\tilde{C};\tilde R)$ itself is finite semisimple as a 2-category --- it just \textit{contains} finite semisimple objects.

Throughout the following, we will use the "geometric/left-to-right convention for products $\boxtimes$ of objects and the "functorial"/right-to-left convention for composition $\circ,\bullet$ of 1-, 2-morphisms (see \cite{Douglas:2018} for a discussion on the distinction). In accordance with \textit{Remark \ref{catquantevelop}}, we will without loss of generality consider $\Gamma^1=\{v\xrightarrow{e}v\}$ consisting of a single loop, and $\tilde{C}=\mathbb{U}_q\G$. The main results in \cite{neuchl1997representation,Chen:2023tjf} then give us the following.
\begin{theorem}
    The 2-category $\operatorname{2Rep}(\tilde C;\tilde R)$ of 2-representations of the cobraided Hopf category $\tilde C=\mathbb{U}_q\G$, equipped with a cobraiding natural transformation $\tilde R: \tilde\Delta\Rightarrow\sigma\tilde\Delta $, is braided monoidal.
\end{theorem}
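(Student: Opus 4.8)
The plan is to categorify the classical statement that the representations of a quasitriangular Hopf algebra form a braided monoidal category, lifting each ingredient up one categorical level. The monoidal product on $\operatorname{2Rep}(\tilde C)$ will come from the coproduct $\tilde\Delta$, its unit from the counit $\tilde\epsilon$, and the braiding from the cobraiding $\tilde R$; the substance of the proof is to assemble these into the coherence data of a braided monoidal 2-category (equivalently a braided $\mathsf{Gray}$-monoid in the sense of \cite{GURSKI20114225}) and to verify the hexagon-type axioms up to coherent modification.

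First I would build the monoidal structure. Given 2-representations $(\cD,\rho)$ and $(\cD',\rho')$, I equip the Deligne tensor product $\cD\boxtimes\cD'$ with the module structure obtained by precomposing $\rho\boxtimes\rho'$ with $\tilde\Delta$, so that $\zeta\rhd(d\boxtimes d')=\tilde\Delta(\zeta)\rhd(d\boxtimes d')$. The bimonoidality isomorphisms \eqref{bimonoid} guarantee that this is again a lax monoidal functor into $\operatorname{End}(\cD\boxtimes\cD')$, hence a genuine object of $\operatorname{2Rep}(\tilde C)$; the unit is the trivial 2-representation on $\mathsf{Vect}$ induced by $\tilde\epsilon$. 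This is Neuchl's construction \cite{neuchl1997representation}. The associator equivalence $(\cD\boxtimes\cD')\boxtimes\cD''\simeq\cD\boxtimes(\cD'\boxtimes\cD'')$ and the unitors then come from the coassociator and the counit coherence natural isomorphisms of $\tilde C$ recorded in \S\ref{lattice2gau}, while the pentagon and triangle \emph{modifications} descend from the corresponding bimonoidal coherence conditions.

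Next comes the braiding. Since $\tilde R:\tilde\Delta\Rightarrow\sigma\tilde\Delta$ is an invertible natural transformation, applying $\rho'\boxtimes\rho$ to its components produces, for each pair $\cD,\cD'$, a module intertwiner comparing the module structure on $\cD'\boxtimes\cD$ pulled back along $\tilde\Delta$ with the one obtained by transporting the $\cD\boxtimes\cD'$-action across the Deligne symmetry $\sigma\colon\cD\boxtimes\cD'\to\cD'\boxtimes\cD$. Equipping $\sigma$ with this intertwiner yields the braiding module functor $b_{\cD,\cD'}$, exactly as conjugation by a 2-$R$-matrix produces a braiding in the classical case (cf.\ the unpacking of $\tilde R_\zeta=\tilde R'_\zeta\circ\operatorname{ad}_{\tilde R}$ in \S\ref{lattice2gau}). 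Naturality of $b$ in the module variable $\zeta$ is precisely the naturality of $\tilde R$, and pseudonaturality in $\cD,\cD'$ follows from functoriality of $\boxtimes$. The two hexagonators are then governed by the two halves of the 2-Yang--Baxter relation \eqref{2yb}, which categorify the quasitriangularity axioms relating $(\tilde\Delta\boxtimes 1)\tilde R$ and $(1\boxtimes\tilde\Delta)\tilde R$ to the factors $\tilde R^{ij}$; comonoidality of $\tilde R$ supplies the invertible 2-cells filling the hexagon diagrams.

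The main obstacle is the 2-categorical coherence. Unlike the 1-categorical case, where the hexagons and the naturality of the braiding are equalities, here every such equality is replaced by an invertible modification, and one must check that these modifications themselves satisfy the coherence axioms of a braided monoidal 2-category. Concretely, one has to track three families of 2-cells simultaneously---the bimonoidal coherence isomorphisms of $\tilde C$, the comonoidal structure 2-cells of $\tilde R$, and the symmetry of the Deligne product---and verify that the composite hexagonators are compatible with the pentagonator and with one another. I would tame this by appealing to the coherence theorem for (braided) monoidal 2-categories \cite{GURSKI20114225}, which reduces the verification to a finite list of generating diagrams, and then matching those diagrams against the explicit coherence 2-morphisms of the cobraided Hopf category as computed in \cite{Chen:2023tjf,KongTianZhou:2020}.
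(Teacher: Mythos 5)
Your proposal follows essentially the same route as the paper: the monoidal structure is induced by $\tilde\Delta$ and $\tilde\epsilon$ on Deligne tensor products, the braiding is $\mathrm{flip}\circ(\rho\boxtimes\rho')\tilde R$, the hexagonators come from the quasitriangularity/2-Yang--Baxter data of the cobraiding, and the remaining coherence verifications are delegated to the explicit computations in \cite{neuchl1997representation,Chen:2023tjf} (Lemmas 6.11--6.15, 7.3--7.4 and Theorem 7.11), exactly as the paper does in \S\ref{prelims}. The only cosmetic difference is that you attribute the hexagonators directly to the two halves of \eqref{2yb}, whereas the paper uses \eqref{2yb} for comonoidality of $\tilde R$ and lets the witnessing morphism $\tilde R'$ supply the hexagonator $\Omega$; these are two descriptions of the same data.
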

\noindent We will give a brief review  in \S \ref{prelims} of how the coproduct/cobraiding on $\tilde C$ introduce respectively the monoidal/braiding structures on $\operatorname{2Rep}(\tilde C;\tilde R)$. 

\medskip

However, here we can do better, because we have access to the antipode $\tilde S:\tilde C \rightarrow \tilde C^{\text{m-op,c-op}}$ and the dagger structure on $\tilde C$. The goal in this paper is to show that, over the $\bbC$-linear category $\mathsf{Hilb}$ of Hilbert spaces (namely we work with $\mathsf{2Hilb}$ \cite{Baez1996HigherDimensionalAI,Ganter:2006} instead of $\mathsf{2Vect}$), these give rise to the notions of compatible {\it duals and adjoints} in $\operatorname{2Rep}(\tilde C;\tilde R)$ (cf. \cite{Douglas:2018}). 


\subsection{Some monoidal and braided preliminaries}\label{prelims}
Let us first describe briefly the monoidal and braided structures of $\operatorname{2Rep}(\tilde C;\tilde R)$ as following from the Hopf structure of $\tilde C$, as we shall use them explicitly later. Details of these descriptions can be found in \cite{neuchl1997representation,Chen:2023tjf}. We shall work in the linear context, in which all 2-representations $\cD\in\operatorname{2Rep}(\tilde C)$ are $\mathsf{Hilb}$-modules. 

The key observation throughout this section is the fact that the action functor $\rhd$ sends objects in $\tilde C$ to endofunctors and morphisms to endonatural transformations. Thus we can record natural transformations by the edge parameters $\cE$ by the following
\[\zeta\rhd (\cD\xrightarrow{F}\cD') =(a_v\rhd\cD)\xrightarrow{\gamma_e\rhd F}(a_v'\rhd\cD')= \begin{tikzcd}
	\cD & \cD & {\cD'} & {\cD'}
	\arrow[""{name=0, anchor=center, inner sep=0}, "{a_v}", curve={height=-12pt}, from=1-1, to=1-2]
	\arrow[""{name=1, anchor=center, inner sep=0}, "{a_v'}"', curve={height=12pt}, from=1-1, to=1-2]
	\arrow["F", from=1-2, to=1-3]
	\arrow[""{name=2, anchor=center, inner sep=0}, "{a_{v}}"', curve={height=12pt}, from=1-3, to=1-4]
	\arrow[""{name=3, anchor=center, inner sep=0}, "{a_v'}", curve={height=-12pt}, from=1-3, to=1-4]
	\arrow["{\gamma_e}", shorten <=3pt, shorten >=3pt, Rightarrow, from=0, to=1]
	\arrow["{\gamma_e^{-1}}"', shorten <=3pt, shorten >=3pt, Rightarrow, from=3, to=2]
\end{tikzcd}\] 
for each functor $F\in\operatorname{Hom}(\cD,\cD')$, where $\zeta = a_v\xrightarrow{\gamma_e} a_{v
'}\in\tilde C$ is written in terms of its source and target.

In the following, we keep track of the $\tilde C$-module coherence conditions satisfied by the natural transformation $F_{a_v}: F(a_v\rhd -) \Rightarrow a_v\rhd' F(-)$ \cite{Delcamp:2023kew} by the following commuting diagram\footnote{This notation is suggestive. Indeed, for an endofunctor $F:\cI\rightarrow \cI$ on the tensor unit $\cI$ (defined later), $F_{a_v}:a_v\rhd-\circ F\circ a_v^{-1}\rhd- \cong F$ defines an endo-natural transformation on $F$ and hence determines an action of $\cV$ on $\operatorname{End}(\cI)$. See also \cite{Chen2z:2023,Bartsch:2023wvv}.}
\begin{equation}
    \gamma_e\rhd F = 
\begin{tikzcd}
	{\rho'(a_v)^{-1}\circ F\circ\rho(a_v)} & F \\
	{\rho'(a_v')^{-1}\circ F\circ\rho(a_v')}
	\arrow["{F_{a_v}}", Rightarrow, from=1-1, to=1-2]
	\arrow["{\rho(\gamma_e)^{-1}\circ F\circ \rho(\gamma_e)}"', Rightarrow, from=1-1, to=2-1]
	\arrow["{F_{a_v'}}"', Rightarrow, from=2-1, to=1-2]
\end{tikzcd}.\label{Vstruct}
\end{equation}
We call this the \textbf{$\cE$-structures} of the $\tilde C$-module functors; here "$\cE$" refers to the \textbf{e}dge decorations/1-morphisms in $\tilde C=\mathbb{U}_q\G$. 


\subsubsection{Tensor products}
Recall from \S \ref{2gthopf} the coproduct functor $\tilde\Delta$ on $\tilde C  $. Putting $$\tilde\Delta_0 = (\iota\otimes\iota)\tilde\Delta\mid_{\cV},\qquad \tilde\Delta_1^l = (\iota\otimes\pi)\tilde\Delta\mid_{\cE},\qquad \tilde\Delta^r_1 = (\pi\otimes\iota)\tilde\Delta\mid_\cE,$$ we write for any $\tilde C$-module category $\cD$ and functor $F:\cA\rightarrow \cA'$
\begin{align*}
    &\tilde\Delta_{\zeta}\rhd (\cA\boxtimes\cD\xrightarrow{F\boxtimes\cD}\cA'\boxtimes\cD) = (\tilde\Delta_0)_{a_v}\rhd ({\cA\boxtimes\cD}) \xrightarrow{(\tilde\Delta_1^l)_{\gamma_e}\rhd(F\boxtimes\cD)}(\tilde\Delta_0)_{a_v'}\rhd({\cA'\boxtimes\cD}), \\
    & \tilde\Delta_{\zeta}\rhd (\cD\boxtimes\cA\xrightarrow{\cD\boxtimes F}\cD\boxtimes\cA') = (\tilde\Delta_0)_{a_v} \rhd ({\cD\boxtimes\cA}) \xrightarrow{(\tilde\Delta_1^r)_{\gamma_e} \rhd (\cD\boxtimes F)}(\tilde\Delta_0)_{a_v'} \rhd({\cD\boxtimes\cA'})
\end{align*}
where $\zeta = a_v\xrightarrow{\gamma_e}a_v'\in\tilde C$. 

The naturality of this definition, as well as the compatibility against the $\tilde C$-module associator
\begin{equation*}
    \zeta\rhd (\zeta'\rhd -) \xrightarrow{\sim} (\zeta\cdot\zeta')\rhd -,
\end{equation*}
follow respectively from the fact that $\tilde \Delta$ defines a functor and the bimonoidal axioms for $\tilde C$. These facts were proven in Lemmas 6.11, 6.12 in \cite{Chen:2023tjf}; see also \cite{neuchl1997representation}. There is then also a natural {\it interchanger} of functors $F:\cA\rightarrow \cA'$ and $G:\cB\rightarrow\cB'$,
\begin{equation*}
    \upsilon_{G,F}: (\cA' \boxtimes G) \circ (F\boxtimes \cB)\Rightarrow (F \boxtimes \cB') \circ (\cA\boxtimes G),
\end{equation*}
which can always be chosen to be invertible/equivalences.

\begin{rmk}\label{nudge}
    Take any pair of functors $F: \cD\rightarrow \cD'$ and $G:\cA\rightarrow \cA'$. The invertibility of the interchanger $\upsilon_{F,G}$ implies that the tensor product $F\boxtimes G: \cD\boxtimes \cA\rightarrow \cD'\boxtimes\cA'$ is well-defined 
\begin{equation*}
    F\boxtimes G = ({\cD'} \boxtimes G)\circ (F\boxtimes \cA) \cong (F\boxtimes {\cA'})\circ({\cD}\boxtimes G)
\end{equation*}
up to 2-isomorphism; this is called \textbf{nudging} of functors in \cite{Douglas:2018}. In the following, we will assume that all functors between monoidal products of $\tilde C$-module categories can be written in this way, ie. using nudging. This is a 2-categorical version of the condition Definition 1.7 (b) in \cite{Deligne2007}.
\end{rmk}

Denote by the $\tilde C$-module associators on $\operatorname{2Rep}(\tilde C;\tilde R)$ by $\alpha^{\tilde C}$. The strict coassociativity of $\tilde\Delta$ gives rise to an invertible natural transformations fitting into commutative squares of the form
\[\begin{tikzcd}
	{((\tilde\Delta\otimes1)\tilde\Delta)_\zeta \rhd (\cD_1\boxtimes\cD_2)\boxtimes\cD_3} & {((\tilde\Delta\otimes1)\tilde\Delta)_\zeta \rhd (\cD_1\boxtimes\cD_2)\boxtimes\cD_3} \\
	{((1\otimes\tilde\Delta)\tilde\Delta)_\zeta \rhd\cD_1\boxtimes(\cD_2\boxtimes\cD_3)} & {((1\otimes\tilde\Delta)\tilde\Delta)_\zeta \rhd\cD_1'\boxtimes(\cD_2\boxtimes\cD_3)}
	\arrow[from=1-1, to=1-2]
	\arrow["\cong"', from=1-1, to=2-1]
	\arrow["\cong"', from=1-2, to=2-2]
	\arrow["\cong", shorten <=14pt, shorten >=14pt, Rightarrow, from=2-1, to=1-2]
	\arrow[from=2-1, to=2-2]
\end{tikzcd},\]
for each $\zeta\in\tilde C$ and functors $F_1:\cD_1\rightarrow\cD_1'$, where the horizontal maps are given by $((\tilde\Delta\otimes1)\tilde\Delta)_\zeta \rhd (F\boxtimes\cD_2)\boxtimes\cD_3$ and $((1\otimes \tilde\Delta)\tilde\Delta)_\zeta \rhd F\boxtimes(\cD_2\boxtimes\cD_3)$. Similar constructions can be made for diagrams arising from insertions of functors $F_i$ at positions $i=2,3$. We also have $\tilde C$-module natural transformations witnessing the following 2-cell
\begin{equation}\alpha_{F23}^{\tilde{\cD}}=\begin{tikzcd}
	{(\cD_1\boxtimes\cD_2)\boxtimes\cD_3} & {\cD_1\boxtimes(\cD_2\boxtimes\cD_3)} \\
	{(\cD_1'\boxtimes\cD_2)\boxtimes\cD_3} & {\cD_1'\boxtimes(\cD_2\boxtimes\cD_3)}
	\arrow["{\alpha_{123}^{\tilde C}}", from=1-1, to=1-2]
	\arrow["{(F\boxtimes \cD_2)\boxtimes\cD_3}"', from=1-1, to=2-1]
	\arrow["{F\boxtimes(\cD_2\boxtimes\cD_3)}", from=1-2, to=2-2]
	\arrow[shorten <=9pt, shorten >=9pt, Rightarrow, from=2-1, to=1-2]
	\arrow["{\alpha_{1'23}^{\tilde C}}", from=2-1, to=2-2]
\end{tikzcd};\label{moduassoc}\end{equation}
see Lemma 6.15 of \cite{Chen:2023tjf}, and also \cite{neuchl1997representation,gurski2006algebraic}. Since $\tilde C$ is strict, these associators and pentagonators \cite{Delcamp:2023kew,KongTianZhou:2020} are always invertible and have identity components. We shall therefore suppress them in the following. 


\begin{rmk}\label{descendant}
    When $\mathbb{G}$ is a weakly-associative smooth 2-group \cite{Schommer_Pries_2011}, its associator $\tau:G^{\times 3}\rightarrow\mathsf{H}$ (ie. the representative of its \textit{Postnikov class}) directly contributes to a {\it non-invertible} $\tilde C$-module associator $\alpha^{\tilde C}$ through the vertical maps in \eqref{moduassoc}. The monoidal witness $\zeta\rhd(\zeta'\rhd-)\Rightarrow(\zeta\cdot\zeta')\rhd -$ must satisfy a module pentagon equation against this associator. These non-invertible 1-morphisms must therefore be kept track of when $\mathbb{G}$ is weakly-associative.
\end{rmk}

The counit functor $\tilde\epsilon:\tilde C\rightarrow\mathsf{Hilb}$ identifies a distinguished object $\cI\in\operatorname{2Rep}(\tilde C)$ as the trivial 2-representation $a_v\rhd \cI = \epsilon(a_v)\otimes \cI\cong \cI$ in terms of the $\mathsf{Hilb}$-module structure of $\cI$. Furthermore, $\epsilon$ also selects a counit $({\bf 1}_{a_v})_e$ over each object $a_v\in\tilde C$ (ie. the identity arrow), such that the identity endofunctor $1_\cD\in\operatorname{End}(\cD)$ transforms as
\begin{equation*}
    \zeta\rhd (\cD\xrightarrow{1_\cD}\cD) = (a_v\rhd\cD)\xrightarrow{\epsilon_{a_v}\cdot 1_\cD} (a_v\rhd\cD)= (a_v\rhd\cD)\xrightarrow{1_\cD}(a_v\rhd\cD).
\end{equation*}
Concretely, $\epsilon_{\gamma_e}$ is represented as an invertible linear map, and it "acts" on $1_\cD$ as an element of $\mathsf{Hilb}$. The counitality axiom $(\tilde\epsilon\otimes1)\circ\tilde\Delta = (1\otimes \tilde\epsilon)\circ\tilde\Delta = \id$ gives rise to the following invertible $\tilde C$-module unitors 
\[\begin{tikzcd}
	{\cD\boxtimes \cI} & \cD \\
	{\cD'\boxtimes \cI} & {\cD'}
	\arrow["{r_\cD}", from=1-1, to=1-2]
	\arrow["{F\boxtimes \cI}"', from=1-1, to=2-1]
	\arrow["F", from=1-2, to=2-2]
	\arrow["{r_F}"', shorten <=5pt, shorten >=5pt, Rightarrow, from=2-1, to=1-2]
	\arrow["{r_{\cD'}}"', from=2-1, to=2-2]
\end{tikzcd},\qquad 
\begin{tikzcd}
	{\cI\boxtimes \cD} & \cD \\
	{\cI\boxtimes \cD'} & {\cD'}
	\arrow["{\ell_\cD}", from=1-1, to=1-2]
	\arrow["{\cI\boxtimes F}"', from=1-1, to=2-1]
	\arrow["F", from=1-2, to=2-2]
	\arrow["{\ell_F}"', shorten <=5pt, shorten >=5pt, Rightarrow, from=2-1, to=1-2]
	\arrow["{\ell_{\cD'}}"', from=2-1, to=2-2]
\end{tikzcd}\]
such that the usual triangle axioms follow from the counit axioms and coassociativity,
\begin{equation*}
    (\tilde\epsilon\otimes 1\otimes 1)(\tilde\Delta\otimes 1)\tilde\Delta = \tilde\Delta = (\tilde\epsilon\otimes 1\otimes 1)(1\otimes \tilde\Delta)\tilde\Delta,\qquad \text{etc.}
\end{equation*}

\subsubsection{Braiding}\label{braidingstructure}
We now briefly introduce the braiding structure. Recall in \S \ref{lattice2gau} that we are presently working under the assumption that the cobriading $\tilde R$ is constructed from the quantum 2-$R$-matrix on $\tilde C$, which we as an abuse of notation also denote by $\tilde R \in \tilde C\boxtimes\tilde C$. The braiding map then takes the form
$$c = \text{flip}\circ (\rho \boxtimes\rho')\tilde R,\qquad \rho,\rho'\in\operatorname{2Rep}(\tilde C,\tilde R),$$ where the flip map $\cD\boxtimes\cD'\mapsto\cD'\boxtimes\cD$ swaps the Delign tensor product factors. 
More explicitly, writing $$\tilde R_0=\tilde R \mid_{\cV\otimes\cV},\qquad \tilde R_1^r =\tilde R \mid_{\cE\otimes\cV},\qquad \tilde R_1^l=\tilde R \mid_{\cV\otimes\cE}, $$ we put
\begin{gather*}
    c_{\cD,\cA} = \text{flip}(\tilde R_0(\rhd -\boxtimes \rhd-)),\\
    c_{F,\cD} = \text{flip}(\tilde R_1^l(\rhd -\boxtimes \rhd-)),\qquad c_{\cD, F} = \text{flip}(\tilde R_1^r(\rhd -\boxtimes \rhd-)),
\end{gather*}
where we recall the vertex transforms act by natural transformations on functors. Hence given functors $F\in\operatorname{Hom}(\cD,\cD')$ and $F'\in\operatorname{Hom}(\cA,\cA')$, we can then write
\begin{align*}
    &  c(\cD \boxtimes \cA \xrightarrow{F\boxtimes \cA} \cD'\boxtimes \cA) = c_{\cD,\cA}(\cD\boxtimes\cA) \xrightarrow{c_{F,\cA}(F\boxtimes \cA)}c_{\cD',\cA}(\cD'\boxtimes\cA),\\ 
     & c(\cD \boxtimes \cA \xrightarrow{\cD\boxtimes F'} \cD\boxtimes \cA') = c_{\cD,\cA}(\cD\boxtimes\cA) \xrightarrow{c_{\cD, F'}(\cD\boxtimes F')} c_{\cD,\cA'}(\cD\boxtimes\cA').
\end{align*}
The fact that these define $\tilde C$-module functors/natural transformations is a result of the  quasitriangularity condition \eqref{2yb}. The naturality of $\tilde R$ as a cobraiding transformation on $\tilde C$ implies that these braiding structures fit into the following squares,
\[\begin{tikzcd}
	{\cD\boxtimes\cA} & {\cD'\boxtimes\cA} \\
	{\cA\boxtimes\cD} & {\cA\boxtimes\cD'}
	\arrow["{F\boxtimes\cA}", from=1-1, to=1-2]
	\arrow["{c_{\cD,\cA}}"', from=1-1, to=2-1]
	\arrow["{c_{\cD',\cA}}", from=1-2, to=2-2]
	\arrow["{c_{F,\cA}}"', shorten <=5pt, shorten >=5pt, Rightarrow, from=2-1, to=1-2]
	\arrow["{\cA\boxtimes F}"', from=2-1, to=2-2]
\end{tikzcd},\qquad 
\begin{tikzcd}
	{\cA\boxtimes\cD} & {\cA'\boxtimes\cD} \\
	{\cD\boxtimes\cA} & {\cD\boxtimes\cA'}
	\arrow["{F'\boxtimes \cD}", from=1-1, to=1-2]
	\arrow["{c_{\cA,\cD}}"', from=1-1, to=2-1]
	\arrow["{c_{\cA',\cD}}", from=1-2, to=2-2]
	\arrow["{c_{F',\cD}}"', shorten <=5pt, shorten >=5pt, Rightarrow, from=2-1, to=1-2]
	\arrow["{\cD\boxtimes F'}"', from=2-1, to=2-2]
\end{tikzcd}\]
where $F:\cD\rightarrow\cD'$ and $F':\cA\rightarrow\cA'$. See Lemmas 7.4, 7.3 in \cite{Chen:2023tjf} or \cite{neuchl1997representation}; the graphical representation of these 2-morphisms can be found in fig. 55 (d) in \cite{Barrett_2024}.




Due to the strictness of the cobraiding, the leftover part $\tilde R'$ of the cobraiding can be seen to contribute directly to the invertible hexagonator $\Omega$, which witnesses the hexagon relation/third Reidemeister move \cite{BAEZ2003705,KongTianZhou:2020,Chen:2023tjf,Chen2z:2023}. The naturality of the braiding gives a {\it braid-exchange} 2-morphism
\begin{equation}
    \begin{tikzcd}
	{(\cD_1\boxtimes\cD_2)\boxtimes\cD_3} && {\cD_3\boxtimes(\cD_1\boxtimes\cD_2)} \\
	\\
	{(\cD_2\boxtimes\cD_1)\boxtimes\cD_3} && {\cD_3\boxtimes(\cD_2\boxtimes\cD_1)}
	\arrow["{c_{\cD_1\boxtimes\cD_2,\cD_3}}", from=1-1, to=1-3]
	\arrow[""{name=0, anchor=center, inner sep=0}, "{c_{\cD_1,\cD_2}\boxtimes\cD_3}"', from=1-1, to=3-1]
	\arrow[""{name=1, anchor=center, inner sep=0}, "{\cD_3\boxtimes c_{\cD_1,\cD_2}}", from=1-3, to=3-3]
	\arrow["{c_{\cD_2\boxtimes\cD_1,\cD_3}}"', from=3-1, to=3-3]
	\arrow["{c_{c_{\cD_1,\cD_2},\cD_3}}"', shorten <=25pt, shorten >=25pt, Rightarrow, from=0, to=1]
\end{tikzcd}\label{braidexch}
\end{equation}
for each object $\cD_1,\cD_2,\cD_3\in\operatorname{2Rep}(\tilde C;\tilde D)$, which relates the two hexagonators through the following invertible 2-morphism (see \cite{KongTianZhou:2020,Chen:2023tjf}, also fig. 55 (c) in \cite{Barrett_2024})
\begin{equation*}
    \Omega_{c_{\cD_1,\cD_2}|\cD_3} = \Omega_{\cD_1|\cD_3\cD_2}^{-1} \bullet c_{c_{\cD_1,\cD_2},\cD_3}\bullet\Omega_{\cD_1|\cD_2\cD_3}.
\end{equation*}
Moreover, \eqref{2yb} and the strict coassociativity of $\tilde C$ also allows us to deduce the compatibility of the hexagonators against tensor products,
\begin{align}
    & \Omega_{(\cD_3\boxtimes\cD_4)|\cD_1\cD_2}\bullet (\cD_3\boxtimes\Omega_{\cD_4|\cD_1\cD_2}\circ\Omega_{\cD_3|\cD_1\cD_2}\boxtimes\cD_4):\nonumber \\
    &\qquad\qquad (c_{\cD_2,\cD_4}\circ c_{\cD_1,\cD_4})\circ (c_{\cD_1,\cD_3}\circ c_{\cD_2,\cD_3})\Rightarrow c_{\cD_1\boxtimes\cD_2,\cD_3\boxtimes\cD_4},\label{quadbraid}
\end{align}
for any quadruple of objects $\cD_1,\dots,\cD_4$. In the context of a Hopf 2-algebra, the four braided monoidal coherence axioms \cite{GURSKI20114225,KongTianZhou:2020} were explicitly checked to hold in Theorem 7.11 of \cite{Chen:2023tjf}, hence we will not reproduce them here. 

The following is true in any braided monoidal 2-category \cite{BATANIN2008334}.
\begin{proposition}\label{symmetric}
    The endomorphism category $\operatorname{End}(\cI)$ of the unit 2-representation $\cI\in\operatorname{2Rep}(\tilde C;\tilde R)$ is symmetric.
\end{proposition}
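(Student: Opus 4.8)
The plan is to run a categorified \emph{Eckmann--Hilton argument}, following \cite{BATANIN2008334}, by exploiting the fact that $\operatorname{End}(\cI)$ sits at the intersection of commuting monoidal structures. First I would observe that $\operatorname{End}(\cI)$ is monoidal in two distinct ways. On one hand, horizontal composition $\circ$ of the $1$-endomorphisms $F\colon\cI\to\cI$ makes $\operatorname{End}(\cI)$ into a (strict) monoidal category with unit $1_\cI$. On the other hand, the monoidal product $\boxtimes$ of the ambient $2$-category, together with the unit equivalence $\cI\boxtimes\cI\simeq\cI$ furnished by the counitors $r,\ell$ of \S\ref{prelims}, transports to a second monoidal product $\otimes$ on $\operatorname{End}(\cI)$, again with unit (equivalent to) $1_\cI$. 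The invertible interchanger $\upsilon_{G,F}$ of \emph{Remark \ref{nudge}} supplies precisely the interchange $2$-isomorphism $(F\circ G)\otimes(F'\circ G')\cong (F\otimes F')\circ(G\otimes G')$ relating the two products.

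The second step is the Eckmann--Hilton move itself. Substituting the shared unit $1_\cI$ into the interchanger in the two standard ways yields natural isomorphisms $F\otimes G\cong F\circ G$ and, crucially, a natural isomorphism
\begin{equation*}
    \beta_{F,G}\colon F\circ G \xrightarrow{\ \sim\ } G\circ F,
\end{equation*}
assembled from $\upsilon$ and the unit constraints. The invertible associators, unitors, and pentagonators of \S\ref{prelims} guarantee that $\beta$ obeys the hexagon axioms, so that $(\operatorname{End}(\cI),\circ,\beta)$ is a \emph{braided} monoidal category. This already recovers the classical statement that endomorphisms of the unit in a monoidal $2$-category are braided.

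Third, I would feed in the ambient braiding $c$. Because $\cI$ is the tensor unit, the braiding $1$-morphism $c_{\cI,\cI}$ is coherently trivial, but the braiding $2$-morphisms $c_{F,\cI}$ and $c_{\cI,F}$ of \S\ref{braidingstructure}, together with the hexagonators $\Omega$ and the braid-exchange $2$-morphism \eqref{braidexch}, endow $\operatorname{End}(\cI)$ with a \emph{third} mutually compatible monoidal/braided datum. In the language of the periodic table, looping a braided (i.e. $2$-tuply monoidal) $2$-category produces a $3$-tuply monoidal $1$-category; since any $k$-fold monoidal structure on a $1$-category with $k\geq 3$ is stable (symmetric), the Eckmann--Hilton braiding $\beta$ must satisfy $\beta_{G,F}\bullet\beta_{F,G}=\id_{F\circ G}$. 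Hence $\operatorname{End}(\cI)$ is symmetric.

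The hard part will be the final comparison: one must verify that the braiding produced by the Eckmann--Hilton argument and the one induced by the ambient $c$ are genuinely compatible, i.e. that the \emph{syllepsis} $2$-morphism relating them is coherent and trivial. Concretely this means checking the syllepsis/symmetry coherence against the four braided monoidal coherence axioms, which were established in Theorem 7.11 of \cite{Chen:2023tjf}; all the real work is the bookkeeping of the associators, unitors, interchanger $\upsilon$, hexagonators $\Omega$, and braid-exchanges \eqref{braidexch} in the weak setting. For a conceptual treatment that bypasses this calculation one invokes the general result of \cite{BATANIN2008334}.
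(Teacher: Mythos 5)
Your proposal is correct and matches the paper's approach: the paper gives no explicit argument, simply noting that the statement holds in any braided monoidal 2-category and citing \cite{BATANIN2008334}, which is precisely the Eckmann--Hilton/stabilization argument you spell out (and you invoke the same reference yourself). Your expanded write-up is a faithful unpacking of what that citation contains.
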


\subsection{Adjunction of 2-representations}\label{adjoints}
Recall the notion of 2-Hilbert spaces $\mathsf{2Hilb}$ in \cite{Baez1996HigherDimensionalAI}; see also \cite{Chen2024ManifestlyUH} for a more recent and accurate account. Let us start light by studying the adjoints. We shall inherit the left-/right-adjoints for the hom-categories in $\mathsf{2Hilb}$ from the left-/right-dualities in $\mathsf{Hilb}$ --- ie. that of taking the dual or the predual Hilbert spaces.\footnote{For $V\in \mathsf{Hilb}$, the left- and right-duals coincide and we in fact have non-canonical isomorphisms $V\cong V^*\cong \,^*V$ due to Riesz representation theorem. Hence we technically do not need to distinguish between left- and right-adjoints here, but we do it anyway for bookkeeping.} 

The reason for this is the following: since $\mathsf{Hilb}$ itself is bi-involutive \cite{Jones:2017}, making $\operatorname{2Rep}(\tilde C;\tilde R)$ 2-$\mathsf{Hilb}$-enriched will automatically make it into a {\it dagger 2-category} \cite{stehouwer2023dagger,ferrer2024daggerncategories}. In fact, this was the original motivation for higher-dagger structures. 

Leveraging this observation, we shall describe adjunctions on the  $\tilde C$-module functors, such that they come equipped with the proper $\cE$-structures, in the following. This is accomplished by {\it orientation reversal}.

\paragraph{Orientation reversal.}
Observe that, under an orientation reversal $e\mapsto \bar e$, the appropriate swaps are achieved
\begin{equation*}
     (\bar \zeta)_{(v,e)} = a_v'\xrightarrow{\gamma_{\bar e}}a_v
\end{equation*}
for each $\zeta=a_v\xrightarrow{\gamma_e}a_v'\in\tilde C$. The induced involution $\tilde C\rightarrow\tilde C^\text{op}$ is precisely the dagger structure $\tilde S_v=-^\dagger$.

Now suppose, for each $\cD$, its $\tilde C$-action functor $\rho=\rhd$ satisfy the following \textbf{unitarity property}
\begin{equation}
    (a_v\rhd-)^\dagger = \,^\dagger(a_v\rhd-) = a_v \rhd -,\qquad \forall~a_v\in\cV,\label{unitary}
\end{equation}
then we define the following $\cE$-structures of the adjunctions
\begin{align*}
    & \zeta\rhd(\cD'\xrightarrow{F^\dagger}\cD) = (a_v'\rhd)^\dagger\cD'\xrightarrow{(\gamma_{\bar e}\rhd)^\dagger (F^\dagger)}(a_v\rhd)^\dagger \cD,\\
    & \zeta\rhd(\cD'\xrightarrow{\,^\dagger F}\cD) = \,^\dagger(a_v'\rhd)\cD'\xrightarrow{\,^\dagger(\gamma_{\bar e}\rhd)(\,^\dagger F)}\,^\dagger(a_v\rhd) \cD.
\end{align*}
The following left-/right-adjunction-mates of $\tilde C$-module natural transformations\footnote{This means that $\alpha:F\Rightarrow G$ intertwines $\rho(\gamma_e)$, and commutes with the natural transformations $F_{a_v},G_{a_v}$.} $\alpha:F\Rightarrow G$,
\begin{equation}
    (F\xRightarrow{\alpha} G)^\dagger = (G^\dagger \xRightarrow{\alpha^\dagger}F^\dagger),\qquad\,^\dagger(F\xRightarrow{\alpha} G) = (\,^\dagger G \xRightarrow{\,^\dagger\alpha}\,^\dagger F),\label{adjmates}
\end{equation}
are themselves $\tilde C$-module natural transformations. This follows directly from the naturality of the duals in $\mathsf{Hilb}$. We shall mainly focus on the left-adjoint $-^\dagger$ in the following.

\medskip

From the geometry (or the definition of the dagger $-^\dagger=\tilde S_v$), orientation reversal swaps the sources and targets of the arrows in $\tilde C$. The composition of arrows in $\tilde C$ then then leads to the condition
\begin{equation}
    (F\circ G)^\dagger = G^\dagger\circ F^\dagger\label{adjcompose}
\end{equation}
satisfied by the adjunctions. Furthermore, given the unit representation $\tilde\epsilon: \tilde C\rightarrow\mathsf{Hilb}$ lands in {\it real} Hilbert spaces (ie. those which are self-dual under $-^\dagger$), this implies 
\begin{equation*}
    1_\cI^\dagger = 1_{\cI}.
\end{equation*}
This can be understood as a certain {\it reality condition} on the unit $\cI$.

To be clear, the above is describing the proper $\cE$-structure -- as given by the orientation reversal/dagger involution structure --- for the adjoints of the $\tilde C$-module functors, such that the adjunction co/unit 2-morphisms (called "folds for the adjunctions in the following) have a canonical $\tilde C$-module structure.

\subsubsection{Folds for the adjoints}
Now take a 1-graph $v\xrightarrow{e}v'$ and its orientation reversal. Their composition bounds a contractible 2-cell which is null-homotopic,
\[\begin{tikzcd}
	v && {v'}
	\arrow[""{name=0, anchor=center, inner sep=0}, "e", curve={height=-12pt}, from=1-1, to=1-3]
	\arrow[""{name=1, anchor=center, inner sep=0}, "{\bar e}", curve={height=-12pt}, from=1-3, to=1-1]
	\arrow["\cong", shorten <=3pt, shorten >=3pt, Rightarrow, from=0, to=1]
\end{tikzcd}\]
On the other hand, the decorations on the 1-graphs by construction respect their groupoid compositions, the $\cE$-structure of $F^\dagger\circ F$ is given by $((\gamma_{\bar e}\rhd -)^\dagger \bullet (\gamma_{e'}\rhd -))(F)$. We are therefore led to the following notion.
\begin{definition}\label{planarunitary}
A $\tilde C$-module functor $F:\cD\rightarrow \cD'$ is said to have \textbf{planar-unitary structure} iff
\begin{enumerate}
    \item $\cD,\cD'$  have equipped $\cV$-action functors $\rho=\rhd,\rho'=\rhd'$ that satisfy \eqref{unitary}, and
    \item $F$ come equipped with the following 2-morphisms
\begin{equation*}
    e_F: F^\dagger\circ F \Rightarrow 1_\cD,\qquad \iota_F: 1_{\cD'} \Rightarrow F\circ \,^\dagger F,
\end{equation*}
called \textit{adjunction-folds}, such that they induce the following commutative diagrams,
\[\begin{tikzcd}
	{\rho’(a_v)^{-1}\circ (F^\dagger\circ F)\circ\rho(a_v)} &&& {\rho’(a_v)^{-1}\circ 1_\cD\circ\rho(a_v)} \\
	& {F^\dagger\circ F} & {1_\cD} \\
	{\rho’(a_v')^{-1}\circ (F^\dagger\circ F)\circ\rho(a_v')} &&& {\rho’(a_v')^{-1}\circ 1_\cD\circ\rho(a_v')}
	\arrow["{\rho’(\gamma_e)^{-1}\circ e_F\circ\rho(\gamma_{e})}", Rightarrow, from=1-1, to=1-4]
	\arrow["{(F^\dagger\circ F)_{a_v}}", Rightarrow, from=1-1, to=2-2]
	\arrow["{(\gamma_e^\dagger\bullet \gamma_e)F}"', Rightarrow, from=1-1, to=3-1]
	\arrow["{(1_\cD)_{a_v}}"', Rightarrow, from=1-4, to=2-3]
	\arrow["{(\gamma_e^\dagger\bullet \gamma_e)1_\cD}", Rightarrow, from=1-4, to=3-4]
	\arrow["{e_F}", Rightarrow, from=2-2, to=2-3]
	\arrow["{(F^\dagger\circ F)_{a_v'}}"', Rightarrow, from=3-1, to=2-2]
	\arrow["{\rho’(\gamma_{e’})^{-1}\circ e_F\circ\rho(\gamma_{e’})}"', Rightarrow, from=3-1, to=3-4]
	\arrow["{(1_\cD)_{a_v'}}", Rightarrow, from=3-4, to=2-3]
\end{tikzcd}\]
\[\begin{tikzcd}
	{\rho’(a_v)^{-1}\circ (F\circ \,^\dagger F)\circ\rho(a_v)} &&& {\rho’(a_v)^{-1}\circ 1_\cD\circ\rho(a_v)} \\
	& {F\circ \,^\dagger F} & {1_\cD} \\
	{\rho’(a_v')^{-1}\circ (F\circ \,^\dagger F)\circ\rho(a_v')} &&& {\rho’(a_v')^{-1}\circ 1_\cD\circ\rho(a_v')}
	\arrow["{(F\circ \,^\dagger F)_{a_v}}", Rightarrow, from=1-1, to=2-2]
	\arrow["{(\gamma_e\bullet\,^\dagger \gamma_e)F}"', Rightarrow, from=1-1, to=3-1]
	\arrow["{\rho’(\gamma_e)^{-1}\circ \iota_F\circ\rho(\gamma_{e})}"', Rightarrow, from=1-4, to=1-1]
	\arrow["{(1_\cD)_{a_v}}"', Rightarrow, from=1-4, to=2-3]
	\arrow["{(\gamma_e\bullet \,^\dagger \gamma_e)1_\cD}", Rightarrow, from=1-4, to=3-4]
	\arrow["{\iota_F}", Rightarrow, from=2-3, to=2-2]
	\arrow["{(F\circ \,^\dagger F)_{a_v'}}"', Rightarrow, from=3-1, to=2-2]
	\arrow["{(1_\cD)_{a_v'}}", Rightarrow, from=3-4, to=2-3]
	\arrow["{\rho’(\gamma_{e’})^{-1}\circ \iota_F\circ\rho(\gamma_{e’})}", Rightarrow, from=3-4, to=3-1]
\end{tikzcd}\]
Here we have used a shorthand $(\gamma_{\bar e}^\dagger \bullet \gamma_e)F = (\rho^\dagger(\gamma_{\bar e})^{-1}\circ F^\dagger\circ\rho^\dagger(\gamma_{\bar e}))\bullet( \rho(\gamma_e)^{-1}\circ F\circ \rho(\gamma_e))$, and similarly for $(\gamma_e\bullet \gamma_{\bar e}^\dagger)$.
\item A 2-morphism $\alpha:F\Rightarrow G$ between functors $F,G:\cD\rightarrow \cD'$ is called {\it planar-unitary} (or just unitary) if it has left-/right-inverses given by the left-/right-adjunction-mates
\begin{equation*}
    \alpha^\dagger \bullet\alpha = \id_{F},\qquad \alpha\bullet\,^\dagger\alpha = \id_{G}.
\end{equation*}
Note this follows directly from the naturalty of the adjunction-folds $e,\iota$ if $F,G$ are themselves planar-unitary.
    \end{enumerate}
The 2-category $\operatorname{2Rep}(\tilde C;\tilde R)$ is said to have \textbf{planar-unitarity} if all of its functors are planar-unitary.
\end{definition}
By the invertibility of 2-gauge transformations, $\rho(\gamma_e)^{-1}\rho(\gamma_e)={\bf 1}_e$ is equivalent to the trivial edge transform for all $\tilde C$-actions functors $\rho$. The above definition then implies that the adjunction-folds are intertwining
\begin{equation}
    e_F \bullet (\gamma_{\bar e}^\dagger\bullet \gamma_e) = ({\bf 1}_e\rhd -) \bullet e_F,\qquad \iota_F\bullet ({\bf 1}_e\rhd-) = (\gamma_e\bullet \gamma_{\bar e}^\dagger)\bullet \iota_F.\label{unitaryintw}
\end{equation}
Since the identity functor itself $1_\cD$ satisfies $1_\cD\circ 1_\cD=1_\cD$ and is attached the trivial $\cE$-structure $a_v\rhd - = \id$, we also have
\begin{equation*}
    e_{1_\cD} = \id_{1_\cD},\qquad \iota_{1_\cD} = \id_{1_\cD}.
\end{equation*}
In conjunction with \eqref{adjcompose}, the following null-homotopies
\[\begin{tikzcd}
	& {v'} \\
	v & {v'} & {v''}
	\arrow["\cong", shorten <=2pt, Rightarrow, from=1-2, to=2-2]
	\arrow["{e_2}", from=1-2, to=2-3]
	\arrow["{e_1}", from=2-1, to=1-2]
	\arrow["{\bar e_1}", from=2-2, to=2-1]
	\arrow["{\bar e_2}", from=2-3, to=2-2]
\end{tikzcd}=\begin{tikzcd}
	v & {v'} & {v''}
	\arrow[""{name=0, anchor=center, inner sep=0}, "{e_1}", curve={height=-12pt}, from=1-1, to=1-2]
	\arrow[""{name=1, anchor=center, inner sep=0}, "{\bar e_1}", curve={height=-12pt}, from=1-2, to=1-1]
	\arrow[""{name=2, anchor=center, inner sep=0}, "{e_2}", curve={height=-12pt}, from=1-2, to=1-3]
	\arrow[""{name=3, anchor=center, inner sep=0}, "{\bar e_2}", curve={height=-12pt}, from=1-3, to=1-2]
	\arrow["\cong", shorten <=3pt, shorten >=3pt, Rightarrow, from=0, to=1]
	\arrow["\cong", shorten <=3pt, shorten >=3pt, Rightarrow, from=2, to=3]
\end{tikzcd}\]
lead to the following compatibility
\begin{equation}
    e_{F\circ G} = e_G\bullet(G^\dagger \circ e_F\circ G),\qquad \iota_{F\circ G} = (F\circ \iota_G\circ F^\dagger)\bullet\iota_F \label{adjfoldcompose}
\end{equation}
for composable planar-unitary $\tilde C$-module functors $\cD\xrightarrow{G}\cD'\xrightarrow{F}\cD''$. In the following, all 2-representations of $\tilde C$ will be planar-unitary.

Similar statements as above of course hold for the right-adjoints of $\tilde C$-module functors. In particular, we also have the adjunction-folds
\begin{equation*}
        \overline{e}_G: G\circ\,^\dagger G\Rightarrow 1_{\cD'},\qquad \overline{\iota}_F: \,^\dagger F\circ F\Rightarrow 1_\cD.
\end{equation*}
However, since we know that adjunctions are left-/right-involutive $\,^\dagger F^\dagger\cong F$, we see that
\begin{equation*}
    \overline{e}_G = e_{\,^\dagger G},\qquad \overline{\iota}_F = \iota_{\,^\dagger F};
\end{equation*}
this is part of the conditions for "planar-pivotality" for a 2-category in \cite{Douglas:2018}.

\subsubsection{Snake equations for the adjunctions}\label{adjsnake}
We now turn to the left- and right-adjoint-mate 2-morphisms.
\begin{proposition}
    Let $\alpha:F\Rightarrow G$ be a 2-morphism in $\operatorname{2Rep}(\tilde C,\tilde R)$. Recall the left-/right-adjoint-mates in \eqref{adjmates}; we have
    \begin{align*}
        & \alpha^\dagger = (e_G\circ F^\dagger)\bullet(G^\dagger\circ \alpha\circ F^\dagger)\bullet(G^\dagger\circ \iota_F)\\
        & \,^\dagger\alpha = (\,^\dagger F\circ {e}_{\,^\dagger G})\bullet(\,^\dagger F\circ\alpha\circ \,^\dagger G)\bullet({\iota}_{\,^\dagger F}\circ \,^\dagger G).
    \end{align*}    
    Moreover, they coincide.
\end{proposition}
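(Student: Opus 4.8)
The plan is to recognize both displayed composites as \emph{mates} in the sense of the calculus of adjunctions, and then to verify that these mates reproduce the pointwise Hilbert-space adjoints that \emph{define} $\alpha^\dagger$ and $\,^\dagger\alpha$ in \eqref{adjmates}. First I would record that, under the bi-involutivity $\,^\dagger F^\dagger\cong F$ inherited from $\mathsf{Hilb}$, the folds $e_F\colon F^\dagger\circ F\Rightarrow 1_\cD$ and $\iota_F\colon 1_{\cD'}\Rightarrow F\circ\,^\dagger F$ of Definition \ref{planarunitary} are precisely the counit and unit of the adjunction $F^\dagger\dashv F$ (and likewise $e_G,\iota_G$ for $G^\dagger\dashv G$). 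With these identifications the right-hand side of the first formula,
\begin{equation*}
    (e_G\circ F^\dagger)\bullet(G^\dagger\circ \alpha\circ F^\dagger)\bullet(G^\dagger\circ \iota_F)\colon G^\dagger\Rightarrow F^\dagger ,
\end{equation*}
is exactly the (contravariant) mate of $\alpha\colon F\Rightarrow G$. As a consistency check, setting $\alpha=\id_F$ collapses this composite, via the triangle identity $(e_F\circ F^\dagger)\bullet(F^\dagger\circ\iota_F)=\id_{F^\dagger}$ coming from the null-homotopic bigons drawn above, to $\id_{F^\dagger}=(\id_F)^\dagger$.

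The core step is to show this mate agrees with the intrinsically-defined $\alpha^\dagger$. Since $\operatorname{2Rep}(\tilde C;\tilde R)$ is $\mathsf{2Hilb}$-enriched, I would argue componentwise: for each $d\in\cD$ the folds $e_G,\iota_F$ are the evaluation/coevaluation maps furnished by the inner products on the hom-Hilbert spaces, so the mate composite computes, in each component, the linear map characterised by $\langle\,(\text{mate})_d\, w,\, v\,\rangle=\langle\, w,\, \alpha_d\, v\,\rangle$. This is by definition the Hilbert adjoint $(\alpha_d)^\ast$, which is exactly the component of $\alpha^\dagger$ prescribed by the duality in \eqref{adjmates}. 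It then remains to check that the two sides carry the same $\cE$-structure as $\tilde C$-module natural transformations: this follows by pasting the intertwining relations \eqref{unitaryintw} for $e_G$ and $\iota_F$ with the module-naturality of $\alpha$, so that the orientation-reversed edge decorations on $G^\dagger$ and $F^\dagger$ match on both sides. The second formula for $\,^\dagger\alpha$ is obtained by the same argument under the left/right mirror, replacing $(e,\iota)$ by the right-adjunction folds and using the identifications $\overline e_G=e_{\,^\dagger G}$ and $\overline\iota_F=\iota_{\,^\dagger F}$ recorded above.

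Finally, for the clause that the two mates coincide, I would invoke that $\mathsf{Hilb}$ is bi-involutive: for a bounded operator the left adjoint $\,^\ast T$ and the right adjoint $T^\ast$ agree under the Riesz isomorphism $V\cong\,^\ast V\cong V^\ast$. Componentwise this gives $(\alpha^\dagger)_d=(\alpha_d)^\ast=\,^\ast(\alpha_d)=(\,^\dagger\alpha)_d$, while the $\cE$-structures coincide because the dagger $\tilde S_v=-^\dagger$ is involutive, so $F^\dagger$ and $\,^\dagger F$ carry identical orientation-reversed edge data. I expect the main obstacle to be not the underlying Hilbert-space identities, which are standard, but the bookkeeping that every whiskered composite above is a genuine $\tilde C$-module natural transformation with exactly the $\cE$-structure dictated by orientation reversal; this is precisely where the commuting squares of Definition \ref{planarunitary} and the composition laws \eqref{adjcompose}, \eqref{adjfoldcompose} must be used carefully.
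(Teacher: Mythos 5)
Your proof is correct and follows essentially the same route as the paper's: both arguments reduce the identity to (i) the observation that the mate composite reproduces the underlying Hilbert-space adjoint componentwise, and (ii) a check that the two sides carry the same $\cE$-structure via the intertwining relations \eqref{unitaryintw}, with the coincidence of left- and right-mates following from the involutivity of $-^\dagger=\tilde S_v$. The only difference is cosmetic: you spell out the componentwise inner-product computation that the paper dismisses as holding ``by construction.''
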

\begin{proof}
    By planar-unitarity, we only need to show that the right-adjunction-mate $\,^\dagger\alpha$ coincides with
    $$\,^\dagger\alpha' = (\,^\dagger F\circ \overline{e}_G)\bullet(\,^\dagger F\circ\alpha\circ \,^\dagger G)\bullet(\overline{\iota}_F\circ \,^\dagger G).$$ Further, since both sides of the above equations are by construction the same natural transformations between the adjoints of $F,G$, we only need to show that they both also have the same $\cE$-structures \eqref{Vstruct}. 
    
    The left-hand sides $\alpha^\dagger,\,^\dagger\alpha'$ by definition \eqref{adjmates} intertwines between the $\cE$-structures $a_v^\dagger = (a_v\rhd -)^\dagger$ on $F,G$. By chasing through some diagrams in the definition of planar-unitarity, \eqref{unitaryintw} states that the right-hand sides have the same $\cE$-structure, hence we achieve the desired equality. These adjunction-mates coincide because $\tilde S_v$ is unipotent, which implies $-^\dagger$ is involutive.
\end{proof}
\noindent By computing the double-adjoint $\alpha^{\dagger\dagger}$ in two different ways,
\begin{equation*}
    \big((e_G\circ F^\dagger)\bullet(G^\dagger\circ \alpha\circ F^\dagger)\bullet(G^\dagger\circ \iota_F)\big)^\dagger = (e_{F^\dagger}\circ G)\bullet(F\circ \alpha^\dagger\circ G)\bullet(F\circ \iota_{G^\dagger})
\end{equation*}
we can deduce the planar-pivotal pre-adjunction datum
\begin{equation*}
    e_F^\dagger  =\iota_{F^\dagger},\qquad \iota_{F}^\dagger = e_{F^\dagger}
\end{equation*}
on the hom-categories.

An immediate consequence of this is the following. Taking $\alpha=\id_{F}$ to be the identity natural transformation on an endofunctor $F:\cD\rightarrow\cD$, then we have the following \textbf{adjunction-snake equations}
\begin{equation*}
    \id_F = (e_F\circ F^\dagger)\bullet(F^\dagger\circ \iota_F) ,\qquad \id_{\,^\dagger F} = (\,^\dagger F\circ {e}_{F})\bullet({\iota}_{F}\circ \,^\dagger F).
\end{equation*}
Furthermore, since the adjunction is involutive, the above proposition as well as planar-unitarity implies
\begin{equation*}
    \iota_F^\dagger = e_{F^\dagger},\qquad e_F^\dagger = \iota_{F^\dagger}.
\end{equation*}
This is a part of the condition of pivotality for 1-categories \cite{BARRETT1999-1}.

\subsubsection{Adjunctions of the tensor product}
Recall that the tensor product of (planar-unitary) 2-representations are determined by the coproduct functor $\tilde\Delta$ on $\tilde C$. Though orientation reversal is contravariant on $\tilde C$, the fact that it does \textit{not} land in the comonoidal-opposite means the following
\begin{equation*}
        (\cD\boxtimes F')^\dagger = \cD\boxtimes F'^\dagger,\qquad (F\boxtimes\cA)^\dagger = F^\dagger\boxtimes\cA
\end{equation*}
for each $F:\cD\rightarrow\cD'$ and $F':\cA\rightarrow\cA'$. Moreover, the fact that the identity endofunctor $1_\cD$ has equipped the trivial $\cE$-structure given by the counit/identity arrow $\tilde\epsilon_1(a_v) = {\bf 1}_v$ gives
\begin{align*}
    & e_{\cD\boxtimes F'} = 1_\cD\boxtimes e_{F'},\qquad e_{F\boxtimes \cA} = e_F\boxtimes 1_{\cA}\\
    & \iota_{\cD\boxtimes F'} = 1_\cD\boxtimes\iota_{F'},\qquad \iota_{F\boxtimes\cA} = \iota_F\boxtimes1_{\cA}.
\end{align*}
From Definition of 2.2.3 of \cite{Douglas:2018}, we thus have the following.
\begin{theorem}\label{planaruni}
    The \textbf{planar-unitary} 2-representations form a \textbf{planar-pivotal} monoidal 2-category $\operatorname{2Rep}(\tilde C;\tilde{R})$.
\end{theorem}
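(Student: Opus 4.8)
The plan is to verify that the duality data assembled in \S\ref{adjoints}--\S\ref{adjsnake} precisely instantiates the axioms of a planar-pivotal monoidal 2-category as recorded in Definition 2.2.3 of \cite{Douglas:2018}. The monoidal structure itself is already in hand from \S\ref{prelims}, with associators and unitors that are strict (identity components) by the strict coassociativity of $\tilde\Delta$, so the entire burden falls on exhibiting coherent planar duals for the 1-morphisms together with a pivotal identification, and then checking compatibility of this data with $\boxtimes$.

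First I would record the duality data. For each $\tilde C$-module functor $F:\cD\to\cD'$, orientation reversal furnishes the left and right adjoints $F^\dagger$ and $\,^\dagger F$ with their canonical $\cE$-structures, and the planar-unitarity hypothesis (Definition \ref{planarunitary}) supplies the adjunction-folds $e_F,\iota_F$ together with the right-handed pair $\overline e_G,\overline\iota_F$. The zigzag/snake identities $\id_F=(e_F\circ F^\dagger)\bullet(F^\dagger\circ\iota_F)$ and its mirror, established in \S\ref{adjsnake}, are exactly the triangle axioms witnessing these as genuine adjunctions, so the hom-categories have planar duals for 1-morphisms.

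Next I would install the pivotal 2-isomorphisms. Since the adjunction is involutive, $\,^\dagger(F^\dagger)\cong F$, and the identities $\overline e_G=e_{\,^\dagger G}$, $\overline\iota_F=\iota_{\,^\dagger F}$ together with the pre-adjunction data $e_F^\dagger=\iota_{F^\dagger}$, $\iota_F^\dagger=e_{F^\dagger}$ coherently identify the left and right adjunction structures; this is precisely the planar-pivotal datum at the level of 1-morphisms. Compatibility with horizontal composition is guaranteed by \eqref{adjfoldcompose}, and with the monoidal product by the tensor-product formulas $e_{F\boxtimes\cA}=e_F\boxtimes 1_\cA$, $e_{\cD\boxtimes F'}=1_\cD\boxtimes e_{F'}$ and their $\iota$-counterparts, while the reality condition $1_\cI^\dagger=1_\cI$ handles the unit object.

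The main obstacle, I expect, is the monoidal coherence of the pivotal structure: one must check that the pivotal 2-isomorphisms glue correctly across tensor products, i.e.\ that they commute with the interchanger $\upsilon_{G,F}$ and respect the (strict) associator and unitor cells in the way demanded by Definition 2.2.3. Because the associators have identity components, this reduces to a finite diagram chase relating $\upsilon$ to the folds $e,\iota$, carried out using the naturality squares for the $\cE$-structures in Definition \ref{planarunitary}; the strictness of $\tilde\Delta$ ensures that no higher pentagonator obstruction can appear. Once this compatibility is confirmed, all the data and axioms of Definition 2.2.3 of \cite{Douglas:2018} are met, and the conclusion follows.
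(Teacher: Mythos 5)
Your proposal is correct and follows essentially the same route as the paper: the theorem is obtained by observing that the data assembled in \S\ref{adjoints}--\S\ref{adjsnake} --- the adjoints $F^\dagger,\,^\dagger F$ with their $\cE$-structures, the adjunction-folds of Definition \ref{planarunitary}, the snake equations and the pre-adjunction identities $e_F^\dagger=\iota_{F^\dagger}$, $\iota_F^\dagger=e_{F^\dagger}$, together with the compatibility \eqref{adjfoldcompose} with composition and the formulas $e_{F\boxtimes\cA}=e_F\boxtimes 1_\cA$ etc.\ for the tensor product --- instantiates Definition 2.2.3 of Douglas--Reutter. The only addition in your write-up is the explicit flag that compatibility with the interchanger $\upsilon$ needs a diagram chase; the paper leaves this implicit (it follows from nudging and the triviality of the $\cE$-structure on identity functors), so this is a point of extra care rather than a different argument.
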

In other words, planar-unitarity implies planar-pivotality, meaning that the end-categories of $\operatorname{2Rep}(\tilde C;\tilde{R})$ are themselves pivotal monoidal. To make a planar-pivotal 2-category bona fide \textit{pivotal}, \cite{Douglas:2018} introduced a notion of object-level duality satisfying several further coherence axioms. In the following, we shall do the same by using the antipode functor $\tilde S$, but we will see that we in fact do {\it not} produce a pivotal 2-category.

\subsection{Duality of 2-representations}\label{duals}
We now turn to the (object-level) duality in $\operatorname{2Rep}(\tilde C;\tilde R)$. We shall introduce the notion of a {\it left/right-dual} through the antipode functor $\tilde S$,\footnote{The functor $\tilde S^{-1}$ is interpreted as both left- and right-adjoint to $\tilde S$, witnessed by natural transformations 
\begin{equation*}
    \tilde S \circ \tilde S^{-1} \Rightarrow 1_{\tilde C} \Leftarrow \tilde S^{-1}\circ \tilde S.
\end{equation*}}
\begin{equation*}
    \rho^*(\zeta) = \rho(\tilde S\zeta),\qquad \,^*\rho(\zeta) = \rho(\tilde S^{-1}\zeta),\qquad \forall~\zeta\in\tilde C.
\end{equation*}
To express this definition more explicitly in terms of the action functor $\rhd$, we define the restrictions
\begin{equation*}
    \tilde S\mid_{\cV} = \tilde S_0,\qquad \tilde S\mid_{\cE} = \tilde S_1.
\end{equation*}
We can then define for each functor $F:\cD\rightarrow\cD'$ the following left- and right-\textit{mates},
\begin{align*}
    & \zeta\rhd (\cD'^*\xrightarrow{F^*} \cD^*) = (\tilde S_0a_v')\rhd \cD'^* \xrightarrow{\tilde S_1\gamma_e\rhd F^*} (\tilde S_0a_v)\rhd \cD^*, \\ 
    & \zeta \rhd (\,^*\cD'\xrightarrow{\,^*F}\,^*\cD) =  (\tilde S_0^{-1}a_v')\rhd \,^*\cD' \xrightarrow{\tilde S_1^{-1}\gamma_e\rhd \,^* F} (\tilde S_0^{-1}a_v)\rhd \,^*\cD,
\end{align*}
which inherits the $\cE$-structure given by $\tilde S_1\gamma_e \rhd-$ from that of the functor $F:\cD\rightarrow\cD'$ under $a_v$. 

Note that, in writing "$\tilde S_1^{-1}$" here, we are implicitly using the "fully-cofaithful-ness" of $\tilde S$; that is, it is bijective on arrows. Since, similar to orientation reversal, $\tilde S$ swaps the sources and targets on arrows, the left- and right-mates behaves in the following way
\begin{align*}
    & (\cD\xrightarrow{F}\cD'\xrightarrow{G}\cD'')^* = (\cD''^* \xrightarrow{G^*}\cD'^*\xrightarrow{F^*}\cD^*),\qquad (F\xRightarrow{\alpha}F')^* = (F^*\xRightarrow{\alpha^*}F'^*) \\
    & \,^*(\cD\xrightarrow{F}\cD'\xrightarrow{G}\cD'') = (\,^*\cD'' \xrightarrow{\,^* G}\,^*\cD'\xrightarrow{\,^*F}\,^*\cD),\qquad \,^*(F\xRightarrow{\alpha}F') = (\,^*F \xRightarrow{\,^*\alpha}\,^*F') 
\end{align*}
with respect to composition and the naturality of in the hom-categories.

\begin{rmk}
    By the hypothesis that $\tilde S$ is an equivalence, we have in fact from the definition that $\,^*\cD^*\cong \cD$, so the right-dual can be thought of as the "pre-left-dual". However, $\tilde S$ is in general not going to be unipotent $\tilde S^2\not\simeq 1_{\tilde C}$, hence neither the left- nor right-dualities are involutive, eg. $(\cD^*)^* \not\cong \cD$. As such, without assuming additional "pivotality conditions", our 2-category $\operatorname{2Rep}(\tilde C;\tilde R)$ cannot be bona fide pivotal, and must differ in certain respects from similar structures studied in the literature (eg. \cite{Douglas:2018,Johnson_Freyd_2023,Mackaay:ek}). We will discuss this in more detail later.
\end{rmk}

However, unlike orientation reversal, $\tilde S:\tilde C\rightarrow \tilde C^{\text{m-op,c-op}}$ (and its inverse) lands in the comonoidal opposite. This allows us to deduce the compatibility of mates against the tensor products which is different from the adjoints. To see this, we first note the fact that $\tilde S_1$ preserves the identity arrows in $\tilde C$. This then implies that the left- and right-dual preserves the identity functors for all $\cA$,
\begin{equation}
    1_{\cA^*} = 1_{\cA}^*,\qquad 1_{\,^*\cA} = \,^*1_\cA,\label{dualfunctor}
\end{equation}
since both sides of each equations above have the same $\tilde C$-module structure (under $\cV$). Thanks to this, we have achieve
\begin{align*}
    (\cD\boxtimes\cA\xrightarrow{F\boxtimes\cA} \cD'\boxtimes \cA)^* &= \cA^*\boxtimes\cD'^* \xrightarrow{\cA^*\boxtimes F^*}\cA^*\boxtimes \cD^* \\ 
    (\cA\boxtimes\cD\xrightarrow{\cA\boxtimes F} \cA\boxtimes \cD')^* &= \cD'^*\boxtimes\cA^* \xrightarrow{F^*\boxtimes \cA^*}\cD^*\boxtimes \cA^*
\end{align*}
for each $\tilde C$-module category $\cA$ and functors $F:\cD\rightarrow\cD'$. Similarly for the right-dual, as the inverse $\tilde S^{-1}$ works the same way. Note the condition $\tilde\epsilon \circ \tilde S = \tilde\epsilon = \tilde\epsilon \circ \tilde S^{-1}$ implies that $\cI = \cI^* = \,^*\cI$ on the nose. 

\subsubsection{Folds for the duals}
Now take a $\tilde C$-module category $\cD$ and an endofunctor $F\in\operatorname{End}(\cD)$. Define the functor $F^*\boxtimes F:\cD^*\boxtimes\cD\rightarrow \cD^*\boxtimes\cD$ by nudging. The following tensor product object-functor pairs admit the following $\tilde C$-module structure
\begin{align*}
    & \zeta\rhd (\cD^*\boxtimes\cD \xrightarrow{F^*\boxtimes F} \cD^*\boxtimes\cD) \\
    &\qquad \qquad = (\tilde S_0 \otimes 1)(\tilde\Delta_0)_{a_v}\rhd(\cD^*\boxtimes \cD)\xrightarrow{(\tilde S_1\otimes 1)(\tilde\Delta_1)_{\gamma_e}\rhd( F^*\boxtimes F)}(\tilde S_0 \otimes 1)(\tilde\Delta_0)_{a_v'}\rhd(\cD^*\boxtimes\cD),\\
    & \zeta\rhd (\cD\boxtimes\cD^* \xrightarrow{F\boxtimes F^*} \cD^*\boxtimes\cD^*) \\
    &\qquad \qquad = (1\otimes \tilde S_0)(\tilde\Delta_0)_{a_v}\rhd(\cD\boxtimes \cD^*)\xrightarrow{(1\otimes \tilde S_1)(\tilde\Delta_1)_{\gamma_e}\rhd( F\boxtimes F^*)}(1 \otimes \tilde S_0)(\tilde\Delta_0)_{a_v'}\rhd(\cD\boxtimes\cD^*),
\end{align*}
where $\zeta=a_v\xrightarrow{\gamma_e}a_v'\in\tilde C$. 

Now since the above action functors $\rho=\rhd$ are the same (they are all those of $\cD$), the module associator $(\rho(-)\otimes\rho(-))(-)\Rightarrow \rho(-\cdot -)(-)$ allows us to contract the 2-gauge transformations. The antipode axioms \eqref{antpod} then tell us that this the above $\tilde C$-module structures on $\cD^*\boxtimes\cD$, etc. are the same as that for $\cI$, along with its identity endomorphism $1_\cI$. This is witnessed by $\tilde C$-module functors,
\begin{equation*}
    \operatorname{ev}_\cD: \cD^*\boxtimes\cD\rightarrow \cI,\qquad \operatorname{cev}_\cD: \cI\rightarrow \cD\boxtimes\cD^*,
\end{equation*}
called the \textit{right-folds}, which fit into the following naturality diagrams,
\[\begin{tikzcd}
	{\cD^*\boxtimes\cD} & \cI \\
	{\cD^*\boxtimes\cD} & \cI
	\arrow["{\operatorname{ev}_\cD}", from=1-1, to=1-2]
	\arrow["{F^*\boxtimes F}"', from=1-1, to=2-1]
	\arrow["{1_\cI}", from=1-2, to=2-2]
	\arrow["{\operatorname{ev}_F}"', shorten <=5pt, shorten >=5pt, Rightarrow, from=2-1, to=1-2]
	\arrow["{\operatorname{ev}_{\cD}}"', from=2-1, to=2-2]
\end{tikzcd},\qquad 
\begin{tikzcd}
	\cI & {\cD\boxtimes\cD^*} \\
	\cI & {\cD\boxtimes\cD^*}
	\arrow["{\operatorname{cev}_\cD}", from=1-1, to=1-2]
	\arrow["{1_\cI}"', from=1-1, to=2-1]
	\arrow["{F\boxtimes F^*}", from=1-2, to=2-2]
	\arrow["{\operatorname{cev}_F}", shorten <=5pt, shorten >=5pt, Rightarrow, from=2-1, to=1-2]
	\arrow["{\operatorname{cev}_\cD}"', from=2-1, to=2-2]
\end{tikzcd}\] 
for each {\it endo}functor $F:\cD\rightarrow\cD$. We also have the following 2-morphisms
\begin{equation*}
    \operatorname{ev}_\cD\circ (F^*\boxtimes \cD) \Rightarrow \operatorname{ev}_\cD\circ({\cD^*}\boxtimes F),\qquad (F\boxtimes {\cD^*})\circ \operatorname{cev}_\cD\Rightarrow ({\cD}\boxtimes F)\circ \operatorname{cev}_\cD,
\end{equation*}
which satisfy the coherence condition 
\[\begin{tikzcd}[ampersand replacement=\&]
	{\cD^*\cD} \&\& {\cD^*\cD} \\
	\\
	{\cD^*\cD} \&\& {\cD^*\cD} \\
	\&\&\& \cI
	\arrow["{\cD^*\boxtimes F}", from=1-1, to=1-3]
	\arrow["{F^*\boxtimes\cD}"', from=1-1, to=3-1]
	\arrow["{\upsilon_{F^*,F}}"', shorten <=12pt, shorten >=12pt, Rightarrow, from=1-1, to=3-3]
	\arrow["{F^*\boxtimes\cD}"', from=1-3, to=3-3]
	\arrow["{\operatorname{ev}_\cD}", from=1-3, to=4-4]
	\arrow["{\cD^*\boxtimes F}", from=3-1, to=3-3]
	\arrow["{\operatorname{ev}_\cD}"', from=3-1, to=4-4]
	\arrow[shorten <=5pt, shorten >=5pt, Rightarrow, from=3-3, to=4-4]
\end{tikzcd} =
\begin{tikzcd}[ampersand replacement=\&]
	{\cD^*\cD} \\
	\& \cI \\
	{\cD^*\cD}
	\arrow["{\operatorname{ev}_\cD}", from=1-1, to=2-2]
	\arrow[""{name=0, anchor=center, inner sep=0}, "{F^*\boxtimes F}"', from=1-1, to=3-1]
	\arrow[from=3-1, to=2-2]
	\arrow["{\operatorname{ev}_F}"', shorten <=6pt, shorten >=6pt, Rightarrow, from=0, to=2-2]
\end{tikzcd}\]
similar to that of a \textit{2-coend} \cite{Loregian_2021}.
At the unit $\cD=\cI$, we of course have
\begin{equation*}
    \operatorname{ev}_\cI = 1_\cI = \operatorname{cev}_\cI 
\end{equation*}
through the unitors $r_\cI,\ell_\cI=1_\cI$. A similar construction with the inverse antipode $\tilde S^{-1}$ yields the left-folds
\begin{equation*}
    \overline{\operatorname{ev}}_\cD: \cD\boxtimes \,^*\cD\rightarrow \cI,\qquad \overline{\operatorname{cev}}_\cD:\cI\rightarrow \,^*\cD\boxtimes \cD.
\end{equation*}
We shall without loss of essential generality focus on the right-duals in the following.

\begin{rmk}\label{preduals}
    The reason we can just focus on the right-duals is the following. By the hypothesis that $\tilde S$ is an equivalence and $\,^*\cD^*\cong \cD$, the right-folds admit invertible $\tilde C$-module natural transformations
\begin{equation*}
    \overline{\operatorname{ev}}_{\cD^*}\cong \operatorname{ev}_\cD,\qquad \overline{\operatorname{cev}}_{\cD^*}\cong \operatorname{cev}_\cD.
\end{equation*}
This allows us to transport all arguments that we shall make for the right-dual/folds to the left-dual/folds, but this does {\it not} force the left- and right-duals to coincide. Another subtlety is that we should not treat this property as giving the left-duality datum $\cD^*$ given the right-duality datum of $\cD$; this will be elaborated more in \S \ref{2catdimension}.
\end{rmk}


\subsubsection{Folds on tensor products}
Now consider the fold map $\operatorname{ev}_{\cD\boxtimes\cD'}: (\cD\boxtimes\cD')^*\boxtimes (\cD\boxtimes \cD') \rightarrow \cI$. We can also achieve a map of this form by using the property $(\cD\boxtimes\cD')^* = \cD'^*\boxtimes \cD^*$. Indeed, the following series of 1-morphisms gives
\begin{align*}
    & (\cD'^*\boxtimes \cD^*)\boxtimes (\cD\boxtimes \cD')\xrightarrow{\alpha^{\tilde C}_{(\cD'^*\boxtimes\cD^*)\cD\cD'}} \cD'^*\boxtimes (\cD^*\boxtimes(\cD\boxtimes\cD')) \\
    &\qquad \xrightarrow{\cD^*\boxtimes(\alpha^{\tilde C}_{\cD'^*\cD'\cD})^{-1}} \cD'^*\boxtimes ((\cD^*\boxtimes\cD)\boxtimes\cD') \xrightarrow{\cD'^*\boxtimes\operatorname{ev}_{\cD}\boxtimes\cD'} \cD'^*\boxtimes (\cI\boxtimes\cD') \\
    &\qquad \xrightarrow{\cD'^*\boxtimes\ell_{\cD'}} \cD'^*\boxtimes \cD' \xrightarrow{\operatorname{ev}_{\cD'}} \cI.
\end{align*}
By construction, this functor admits the same $\tilde C$-module structure as $\operatorname{ev}_{\cD\boxtimes\cD'}$; similar computations hold for the other folds. The \textit{fold condition} then states that these functors coincide {on-the-nose}. Suppressing the associators and the unitors, we thus get 
\begin{equation}
    \operatorname{ev}_{\cD\boxtimes\cD'} = \operatorname{ev}_{\cD'}\circ (\cD'^*\boxtimes\operatorname{ev}_{\cD}\boxtimes\cD'),\qquad \operatorname{cev}_{\cD\boxtimes\cD'} = (\cD\boxtimes \operatorname{cev}_{\cD'}\boxtimes\cD^*) \circ \operatorname{cev}_{\cD}.\label{foldtensor}
\end{equation}
We now prove a consistency formula for the mates $F^*,\,^*F$ of a functor $F:\cD\rightarrow \cD'$. 
\begin{proposition}\label{mates}
    There are invertible $\tilde C$-module natural transformations that identify the left- and right-mates
    \begin{align*}
        & F^* \cong \ell_{\cD^*}\circ (\operatorname{ev}_{\cD'}\boxtimes\cD^*)\circ (\cD'^* \boxtimes F \boxtimes \cD^*) \circ (\cD'^*\boxtimes\operatorname{cev}_{\cD}) \circ r_{\cD'^*}^{-1},\\
        & \,^* F \cong r_{\,^*\cD}\circ (\,^*\cD\boxtimes \overline{\operatorname{ev}}_{\cD'})\circ (\,^*\cD \boxtimes F \boxtimes \,^*\cD') \circ (\overline{\operatorname{cev}}_{\cD}\boxtimes\,^*\cD') \circ \ell_{\,^*\cD'}^{-1}
    \end{align*}
    of a functor $F:\cD\rightarrow \cD'$.
\end{proposition}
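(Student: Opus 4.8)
The plan is to follow the same two-step strategy used in the preceding proposition. Both the left-hand side $F^*$ and the zigzag composite on the right are $\tilde C$-module functors $\cD'^*\to\cD^*$, so it suffices to exhibit an isomorphism between their underlying functors and then check that this isomorphism is $\tilde C$-linear, i.e. that the two sides carry the same $\cE$-structure \eqref{Vstruct}. Once both checks are in place, the characterization of $\tilde C$-module functors forces the claimed identification; invertibility is then automatic, since every constituent of the zigzag --- the folds $\operatorname{ev}_{\cD'},\operatorname{cev}_\cD$ and the unitors $\ell_{\cD^*},r_{\cD'^*}$ --- is itself an equivalence.

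First I would address the underlying functors. The right-hand composite is the standard mate construction attached to the duality data $(\cD^*,\operatorname{ev}_\cD,\operatorname{cev}_\cD)$ introduced in \S\ref{duals}. Using the naturality squares for $\operatorname{ev}_\cD,\operatorname{cev}_\cD$, the $2$-coend-type coherence diagram recorded there, and the fold condition \eqref{foldtensor}, a planar manipulation collapses the composite $\ell_{\cD^*}\circ(\operatorname{ev}_{\cD'}\boxtimes\cD^*)\circ(\cD'^*\boxtimes F\boxtimes\cD^*)\circ(\cD'^*\boxtimes\operatorname{cev}_\cD)\circ r_{\cD'^*}^{-1}$ down to the canonical map induced on duals. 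The crucial observation is that the folds were constructed precisely so that $\cD^*\boxtimes\cD$ and $\cD\boxtimes\cD^*$ acquire the $\cI$-module structure through the antipode axioms \eqref{antpod}; substituting this back into the zigzag and contracting the intervening $2$-gauge transformations via the module associator $\rho(-)\otimes\rho(-)\Rightarrow\rho(-\cdot-)$ reproduces the antipode-twisted action $\tilde S_1\gamma_e\rhd-$ that defines $F^*$.

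Next I would verify the $\cE$-structures agree. That of $F^*$ is $\tilde S_1\gamma_e\rhd-$ by definition, while that of the zigzag is obtained by composing the $\cE$-structures of its five factors: the unitors contribute trivially, $F$ contributes its own $\cE$-structure under $a_v$, and $\operatorname{ev}_{\cD'},\operatorname{cev}_\cD$ contribute the antipode-twisted actions dictated by their naturality diagrams. The main computation is to show that, once one uses that $\tilde S:\tilde C\to\tilde C^{\text{m-op,c-op}}$ lands in the comonoidal opposite --- so that $(\tilde S\boxtimes1)\tilde\Delta$ and $(1\boxtimes\tilde S)\tilde\Delta$ telescope across the triple product $\cD'^*\boxtimes\cD\boxtimes\cD^*$ --- the composite $\cE$-structure collapses exactly to $\tilde S_1\gamma_e\rhd-$, matching $F^*$. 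The left-mate statement for $\,^*F$ is then handled identically, using $\tilde S^{-1}$ and the left-folds $\overline{\operatorname{ev}}_{\cD'},\overline{\operatorname{cev}}_\cD$ in place of the right-folds.

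The main obstacle I anticipate is precisely the bookkeeping in this last step: correctly tracking how the coproduct $\tilde\Delta$ interacts with $\tilde S$ across the Deligne factors, so that the induced module structures on $\cD'^*\boxtimes\cD\boxtimes\cD^*$ and $\cD'^*\boxtimes\cD'\boxtimes\cD^*$ align and the telescoping via \eqref{antpod} is clean. This is exactly where the comonoidal-opposite nature of $\tilde S$ is essential, and where the argument genuinely differs from the purely adjunction-based proof of the previous proposition (whose orientation reversal did \emph{not} land in the comonoidal opposite, cf. \S\ref{adjoints}); everything else is a formal consequence of the fold coherences and the invertibility of the folds and unitors.
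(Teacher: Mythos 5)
Your proposal is correct and follows essentially the same route as the paper: one checks that the zigzag composite agrees with $F^*$ on underlying objects and then reduces the identification to matching $\cE$-structures, which is done by telescoping the antipode across the coproduct via \eqref{antpod} and the counit axioms so that the composite action collapses to $\tilde S_1\gamma_e\rhd-$. The left-mate case via $\tilde S^{-1}$ and the barred folds is handled in the paper exactly as you describe.
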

\begin{proof}
    We can use the above fold maps to construct a functor $F'^*:\cD'^*\rightarrow \cD^*$ by
\begin{equation*}
    F'^* = \ell_{\cD^*}\circ (\operatorname{ev}_{\cD'}\boxtimes\cD)\circ (\cD'^* \boxtimes F \boxtimes \cD^*) \circ (\cD'^*\boxtimes\operatorname{cev}_{\cD}) \circ r_{\cD'^*}^{-1},
\end{equation*}
where we have used  the invertible associator 2-morphism arising from the coassociativity of $\tilde\Delta_1$ to disambiguiate the middle factor
\begin{equation*}
    \alpha_{\cD'^*F\cD^*}^{-1}: \cD'^* \boxtimes (F \boxtimes \cD^*)\Rightarrow (\cD'^*\boxtimes F)\boxtimes \cD^*.
\end{equation*}
By construction, $F^*,F'^*$ takes the same values on objects, hence it suffices to prove that they have the same $\cE$-structure \eqref{Vstruct}. Recall $F^*$ inherits a $\cE$-structure given by $\tilde S_1\gamma_e$. The computation using the antipode \eqref{antpod} and counit axioms lead to a series of natural isomorphisms
\begin{align*}
    (-\times-\otimes 1)\circ (\tilde S_1\otimes & 1\otimes\tilde S_1) \circ (\tilde\Delta_1\otimes1)\circ\tilde\Delta_1 \\
    & = (((-\times-)\circ (\tilde S_1\otimes 1)\circ \tilde\Delta_1) \otimes 1 )\circ ((1\otimes \tilde S_1)\circ \tilde \Delta_1 \\ 
    &= (\tilde\epsilon_1 \cdot1 \otimes 1)\circ ((1\otimes\tilde S_1)\circ\tilde\Delta) \\
    &= \tilde S_1\circ(\tilde\epsilon_1\otimes1)\circ\tilde\Delta = \tilde S_1\circ \id,
\end{align*}
which determines the same $\cE$-structure as $F'^*$. This gives the $\tilde C$-module functor identification $F^*\cong F'^*$. The same argument with $\tilde S^{-1}$ works for the left-mate.
\end{proof}
\noindent By computing the double dual-mate $F^{**}$ in two different ways, we can deduce 
\begin{equation}
 \operatorname{ev}^*_{\cD} \cong \operatorname{cev}_{\cD^*},\qquad \operatorname{cev}^*_\cD \cong \operatorname{ev}_{\cD^*}.\label{dualintertwine}
\end{equation}
Note this does \textit{not} determine the object-level pre-duality datum, since $\operatorname{ev}_F$ is only defined for endofunctors.

\paragraph{A pivotality condition.} Recall that we have the identification $\,^*\cD^*\cong \cD$, thus if we replace $\cD$ by its left-dual $\cD^*$, then the above proposition allows us to write the {\it right}-mate of a functor $F:\cD^*\rightarrow \cD'^*$ as
\begin{equation*}
    \,^*F \cong r_{\cD}\circ (\cD\boxtimes \overline{\operatorname{ev}}_{\cD'^*})\circ (\cD \boxtimes F \boxtimes \,^*\cD') \circ (\overline{\operatorname{cev}}_{\cD^*}\boxtimes \cD') \circ \ell_{\cD'}^{-1}: \cD'\rightarrow\cD.
\end{equation*}
If we further impose a condition in which the left-dual is involutive $(\cD^*)^*\cong \cD$ such that the left- and right-folds coincide, then we acquire a {pivotality condition} $\,^*\cD\cong \cD^*$, and the above formula recovers the one given in pg. 49 of \cite{Douglas:2018} (modulo the convention in which the duals are folded). 

Of course, this condition in general does not hold unless $\tilde S^2\cong 1_{\tilde C}$ is unipotent --- namely $\tilde C$ itself has a "copivotal" structure. This is why $\operatorname{2Rep}(\tilde C;\tilde R)$ is not a bona fide pivotal 2-category in the sense of \cite{Douglas:2018} --- the failure is measured by the difference between the notions of left- and right-duality $\cD^*,\,^*\cD$. We will analyze this issue in much greater detail in \S \ref{2catdimension} and \S \ref{ribbon2cat}.

\begin{rmk}\label{classicalantipode}
    Note $\tilde S=\tilde S$ is indeed unipotent in the undeformed/classical case, as it is merely given by the (horizontal) inversion $\zeta\mapsto \zeta^{-1}$ of 2-gauge parameters. Here, the 2-$R$-matrix $\tilde R= \id\otimes\id$ is simply the unit, hence it would be possible for $\operatorname{2Rep}(\mathbb{U}_{q=0}\G;\id\otimes\id)$ to be pivotal. This situation is similar to many well-known examples of compact quantum groups \cite{Majid:1996kd,Woronowicz1988}: the quantum deformation destroys the symmetry of their representation categories.
\end{rmk}

\subsubsection{snakerators/cusps/cusps of the left dual}
An immediate consequence of {\bf Proposition \ref{mates}} is the following. Setting $F = 1_\cD \in \operatorname{End}(\cD)$, we obtain a formula for the left- and right-mates of the identity
\begin{align*}
    & 1_\cD^* \cong \ell_{\cD^*}\circ (\operatorname{ev}_{\cD}\boxtimes\cD^*)\circ (\cD^*\boxtimes\operatorname{cev}_{\cD}) \circ r_{\cD^*}^{-1},\\
    & \,^* 1_\cD \cong r_{\,^*\cD}\circ (\,^*\cD\boxtimes \overline{\operatorname{ev}}_{\cD}) \circ (\overline{\operatorname{cev}}_{\cD}\boxtimes\,^*\cD) \circ \ell_{\,^*\cD}^{-1}.
\end{align*}
However, we know from \eqref{dualfunctor} that these are in fact formulas for $1_{\cD^*},1_{\,^*\cD}$. The identification $\cD\cong \,^*\cD^*$ and {\it Remark \ref{preduals}} then allow us to define the following invertible 2-morphisms (neglecting the invertible associators and unitors) 
\begin{equation}
    \varrho_{\cD}: 1_{\cD^*}\Rightarrow (\operatorname{ev}_{\cD}\boxtimes\cD^*)\circ (\cD^*\boxtimes\operatorname{cev}_{\cD}),\qquad \varphi_\cD: (\cD\boxtimes \operatorname{ev}_{\cD})\circ ({\operatorname{cev}}_{\cD}\boxtimes \cD)\Rightarrow 1_{\cD},\label{cusps}
\end{equation}
which we call the \textit{snakerators/cusps} for the folds. It is then easy to see that 
\begin{equation*}
    \varphi_\cI = \id_{1_\cI},\qquad \varrho_{\cI} = \id_{1_\cI}
\end{equation*}
are identity 2-morphisms. 

\begin{proposition}\label{dualevals}
    There are 2-morphisms such that
    \begin{equation*}
        a_\cD:\,^*\operatorname{ev}_\cD\Rightarrow \overline{\operatorname{cev}}_\cD,\qquad b_\cD: \overline{\operatorname{ev}}_\cD\Rightarrow \,^*\operatorname{cev}_\cD.
    \end{equation*}
\end{proposition}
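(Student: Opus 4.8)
The plan is to mirror the template of \textbf{Proposition \ref{mates}}: for each of $a_\cD$ and $b_\cD$ I would verify that the two functors it connects are parallel, agree on objects, and carry the same $\cE$-structure, whence an invertible $\tilde C$-module natural transformation between them follows. Since the left-mate $\,^*(-)$ is contravariant and sends Deligne products to $\,^*(\cD\boxtimes\cA)=\,^*\cA\boxtimes\,^*\cD$, and since $\,^*\cD^*\cong\cD$ and $\,^*\cI=\cI$, the functor $\,^*\operatorname{ev}_\cD$ has signature $\cI\to\,^*(\cD^*\boxtimes\cD)\cong\,^*\cD\boxtimes\cD$, which is exactly that of $\overline{\operatorname{cev}}_\cD$; dually, $\,^*\operatorname{cev}_\cD$ has signature $\,^*(\cD\boxtimes\cD^*)\cong\cD\boxtimes\,^*\cD\to\cI$, matching $\overline{\operatorname{ev}}_\cD$. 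So both pairs are genuinely parallel.

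The computational core is the $\cE$-structure. Writing $\tilde\Delta_1\gamma_e=\sum\gamma_{(1)}\boxtimes\gamma_{(2)}$, the functor $\operatorname{ev}_\cD$ inherits from the action $(\tilde S_1\boxtimes 1)\tilde\Delta_1$ on $\cD^*\boxtimes\cD$ its edge-structure, and taking the left-mate replaces the label $\gamma_e$ by $\tilde S_1^{-1}\gamma_e$. I would then invoke the fact that $\tilde S$ lands in the comonoidal opposite, which at the arrow level reads $\tilde\Delta_1\circ\tilde S_1^{-1}\cong\sigma\circ(\tilde S_1^{-1}\boxtimes\tilde S_1^{-1})\circ\tilde\Delta_1$; substituting this and cancelling $\tilde S_1\tilde S_1^{-1}\cong\id$ turns the edge-structure of $\,^*\operatorname{ev}_\cD$ into $\sum\gamma_{(2)}\boxtimes\tilde S_1^{-1}\gamma_{(1)}$. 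The flip $\sigma$ appearing here is precisely the reordering already absorbed into the canonical identification $\,^*(\cD^*\boxtimes\cD)\cong\,^*\cD\boxtimes\cD$, so after that identification the edge-structure becomes $(\tilde S_1^{-1}\boxtimes 1)\tilde\Delta_1$ — exactly the one carried by $\overline{\operatorname{cev}}_\cD$. The antipode axioms \eqref{antpod} ensure that the residual $\tilde\epsilon_1$-factors collapse onto the trivial module structure of $\cI$, just as in \textbf{Proposition \ref{mates}}. Matching objects and $\cE$-structures yields the invertible $a_\cD$; the symmetric computation, with $\tilde S$ and $\tilde S^{-1}$ (equivalently $\operatorname{ev}$ and $\operatorname{cev}$) interchanged, produces $b_\cD$.

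The main obstacle I anticipate is the bookkeeping of the flip $\sigma$ arising from the anti-comonoidality of $\tilde S^{-1}$: one must confirm it reorders the Deligne factors into the order dictated by $\,^*\cD\boxtimes\cD$ rather than $\cD\boxtimes\,^*\cD$, and that this is compatible both with $\,^*\cD^*\cong\cD$ and with the convention fixing $\,^*(\cD\boxtimes\cA)$. As flagged in \emph{Remark \ref{preduals}}, I would take care \emph{not} to compose $a_\cD$ with the isomorphisms $\overline{\operatorname{ev}}_{\cD^*}\cong\operatorname{ev}_\cD$ and $\overline{\operatorname{cev}}_{\cD^*}\cong\operatorname{cev}_\cD$ so as to conclude $\,^*\cD\cong\cD^*$ — this would collapse the very distinction between the left and right dualities that the rest of the section is at pains to keep separate.
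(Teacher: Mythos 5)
Your route is sound in outline but it is genuinely different from the paper's. The paper does not re-run an $\cE$-structure computation here: it applies \textbf{Proposition \ref{mates}} to the functor $F=\operatorname{ev}_\cD:\cD^*\boxtimes\cD\rightarrow\cI$, uses $\overline{\operatorname{ev}}_\cI\cong 1_\cI$ together with the fold-tensor identity \eqref{foldtensor} and \textit{Remark \ref{preduals}} to rewrite
\begin{equation*}
\,^*\operatorname{ev}_\cD\;\cong\;\Big(\,^*\cD\boxtimes\big((\cD\boxtimes\operatorname{ev}_\cD)\circ(\operatorname{cev}_\cD\boxtimes\cD)\big)\Big)\circ\overline{\operatorname{cev}}_\cD,
\end{equation*}
and then collapses the inner snake with the cusp $\varphi_\cD$ from \eqref{cusps} to land on $\overline{\operatorname{cev}}_\cD$; the companion $b_\cD$ is obtained the same way from $\varrho_\cD$. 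What the paper's route buys is that $a_\cD,b_\cD$ come out as explicit composites of already-certified structural $2$-morphisms (interchangers, unitors, $\varphi_\cD,\varrho_\cD$), so their properties (e.g.\ invertibility, planar-unitarity) are traceable to those of the cusps. What your route buys is a shorter, more computational argument that exposes exactly where the anti-comonoidality of $\tilde S^{-1}$ and the cancellation $\tilde S_1\tilde S_1^{-1}\cong\id$ enter; your bookkeeping of the flip $\sigma$ against the convention $\,^*(\cD^*\boxtimes\cD)=\,^*\cD\boxtimes\,^*\cD^*$ is correct, and it does reproduce the $(\tilde S_1^{-1}\otimes 1)\tilde\Delta_1$ structure of $\overline{\operatorname{cev}}_\cD$. (Minor point: the identity you actually need for the source is $\tilde\epsilon\circ\tilde S^{-1}=\tilde\epsilon$, not the antipode axiom \eqref{antpod} itself, which is only needed for the folds to exist in the first place.)

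The one soft spot in your plan is the clause ``agree on objects.'' In \textbf{Proposition \ref{mates}} both functors being compared are manufactured from the \emph{same} $F$, so agreement of the underlying plain functors is immediate and only the $\cE$-structures remain to be matched. Here $\,^*\operatorname{ev}_\cD$ (the transpose of a pairing) and $\overline{\operatorname{cev}}_\cD$ (an independently defined copairing attached to $\tilde S^{-1}$) are a priori different underlying functors $\cI\rightarrow\cD\boxtimes\cD$, and matching module structures alone does not force parallel functors to be isomorphic. This is precisely the content that the paper's detour through \textbf{Proposition \ref{mates}} and the snakerator supplies for free, and it is the step you would still have to argue (e.g.\ by identifying both underlying functors with the canonical copairing in $\mathsf{2Hilb}$) to close your version of the proof.
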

\begin{proof}
    First, from {\it Remark \ref{preduals}} and \eqref{foldtensor}, we have
    \begin{equation*}
        \operatorname{cev}_{\,^*\cD\boxtimes\cD} = (\,^*\cD\boxtimes\operatorname{cev}_\cD\boxtimes\cD)\circ\operatorname{cev}_{\,^*\cD}\cong (\,^*\cD\boxtimes\operatorname{cev}_\cD\boxtimes\cD)\circ\overline{\operatorname{cev}}_{\cD}.
    \end{equation*}
    Then from the fact that $\overline{\operatorname{ev}}_{\cI} \cong 1_\cI$ through the invertible unitors, {\bf Proposition \ref{mates}} gives 
    \begin{align*}
        \,^*\operatorname{ev}_\cD &\cong  (\,^*(\cD^*\boxtimes\cD)\boxtimes\operatorname{ev}_\cD\boxtimes\cI)\circ (\overline{\operatorname{cev}}_{\,^*\cD\boxtimes\cD}\boxtimes\cI) \\
        & = \,^*\cD \boxtimes\big( (\cD\boxtimes\operatorname{ev}_\cD)\circ (\operatorname{cev}_\cD\boxtimes\cD)\big) \circ \overline{\operatorname{cev}}_\cD,
    \end{align*}
    which by \eqref{cusps} admits a snakerator $\,^*\cD \boxtimes\varphi_\cD \circ \overline{\operatorname{cev}}_\cD $ into 
    \begin{equation*}
        \,^*\cD \boxtimes 1_\cD \circ \overline{\operatorname{cev}}_\cD = \overline{\operatorname{cev}}_\cD,
    \end{equation*}
    as desired. A similar argument can be applied to $\,^*\operatorname{cev}_\cD$ by using the snakerator $\varrho_\cD$.
\end{proof}

Now consider the tensor product $\cD\boxtimes\cA$ and the snakerator $\varphi_{\cD\boxtimes\cA}$. The domain of this 2-morphism is a 1-morphism
\begin{equation*}
    \cD\boxtimes\cA \rightarrow (\cD\boxtimes\cA)\boxtimes(\cD\boxtimes\cA)^*\boxtimes(\cD\boxtimes\cA) \rightarrow \cD\boxtimes\cA,
\end{equation*}
in which the object in the centre is $(\cD\boxtimes\cA)\boxtimes(\cA^*\boxtimes\cD^*)\boxtimes(\cD\boxtimes\cA)$. By naturality, this object fits into a diagram of the form (here we have neglected the associators and the symbol $\boxtimes$ to save space)
\[\begin{tikzcd}
	{\cD\cA} && {\cD\cD^*\cD\cA} && {\cD\cA\cA^*\cD^*\cD\cA} \\
	&& {\cD\cA} && {\cD\cA\cA^*\cA} \\
	&&&& {\cD\cA}
	\arrow["{{\operatorname{cev}}_{\cD}\cD\cA}", from=1-1, to=1-3]
	\arrow[""{name=0, anchor=center, inner sep=0}, "{1_{\cD\cA}}"', curve={height=12pt}, from=1-1, to=2-3]
	\arrow["{\cD{\operatorname{cev}}_{\cA}\cD^*\cD\cA}", from=1-3, to=1-5]
	\arrow["{\cD{\operatorname{ev}}_{\cD}\cA}", from=1-3, to=2-3]
	\arrow["{\cD\cA\cA^*{\operatorname{ev}}_{\cD}\cA}", from=1-5, to=2-5]
	\arrow[shorten <=15pt, shorten >=15pt, Rightarrow, from=2-3, to=1-5]
	\arrow["{\cD{\operatorname{cev}}_{\cA}\cA}"{description}, from=2-3, to=2-5]
	\arrow[""{name=1, anchor=center, inner sep=0}, "{1_{\cD\cA}}"', curve={height=30pt}, from=2-3, to=3-5]
	\arrow["{\cD\cA{\operatorname{ev}}_{\cA}}", from=2-5, to=3-5]
	\arrow["{\varphi_{\cD^*}\cA}", shorten <=7pt, Rightarrow, from=0, to=1-3]
	\arrow["{\cD\varphi_{\cA^*}}", shorten <=12pt, shorten >=12pt, Rightarrow, from=1, to=2-5]
\end{tikzcd}\]
in which the 2-cell in the middle is filled by the interchanger $\cD \upsilon_{{\operatorname{cev}}_{\cA},{\operatorname{ev}}_{\cD}}\cA$. In other words, we have the following formula
\begin{equation*}
    \varphi_{\cD\cA} = (\varphi_\cD\cA \circ\cD\varphi_\cA)\bullet({\operatorname{ev}}_{\cD}\cD\cA\circ \cD \upsilon_{{\operatorname{cev}}_{\cA},{\operatorname{ev}}_{\cD}}\cA\circ \cD\cA{\operatorname{cev}}_{\cA}).
\end{equation*}
Similar arguments lead to the formula
\begin{equation*}
    \varrho_{\cD\cA} = (\cA^*\cD^*{\operatorname{ev}}_{\cD}\circ \cA^* \upsilon_{{\operatorname{ev}}_{\cD},{\operatorname{cev}}_{\cA}}\cD^*\circ {\operatorname{cev}}_{\cA}\cA^*\cD^*)\bullet (\cA^*\varrho_\cD \circ \varrho_\cA\cD^*).
\end{equation*}

\subsubsection{The swallowtail 2-morphisms}\label{swallotail}
Consider the identity natural transformation on the fold $\id_{\operatorname{ev}_\cD}: \operatorname{ev}_\cD\Rightarrow\operatorname{ev}_\cD$, whose legs are $\cD^*\cD\rightarrow \cI$. Notice that there are two ways in which to write the identity functor $1_{\cD\boxtimes\cD^*} = 1_\cD\boxtimes\cD^* = \cD\boxtimes1_{\cD^*}$. From  \eqref{cusps}, the former has a cusp $\cD^*\varphi_\cD$ into it and the latter has a cusp $\varrho_\cD\cD$ out of it. The consistency of these two expressions for the cusps are mediated by the interchanger, as can be seen in the following diagram
\[\mathcal{S}'_\cD = \begin{tikzcd}
	& {\cD^*\cD} & {\cD^*\cD\cD^*\cD} & {\cD^*\cD} \\
	{\cD^*\cD\cD^*\cD} && \cI \\
	& {\cD^*\cD} & {\cD^*\cD\cD^*\cD} & {\cD^*\cD}
	\arrow[from=1-2, to=1-3]
	\arrow[""{name=0, anchor=center, inner sep=0}, "{1_{\cD^*}\cD}", curve={height=-30pt}, from=1-2, to=1-4]
	\arrow[""{name=1, anchor=center, inner sep=0}, from=1-2, to=2-3]
	\arrow[from=1-3, to=1-4]
	\arrow[from=1-4, to=2-3]
	\arrow[""{name=2, anchor=center, inner sep=0},"=", from=1-4, to=3-4]
	\arrow[from=2-1, to=1-2]
	\arrow[""{name=3, anchor=center, inner sep=0}, from=2-1, to=3-2]
	\arrow[from=3-2, to=2-3]
	\arrow[from=3-2, to=3-3]
	\arrow[""{name=4, anchor=center, inner sep=0}, "{\cD^*1_{\cD}}"', curve={height=30pt}, from=3-2, to=3-4]
	\arrow[from=3-3, to=3-4]
	\arrow[from=3-4, to=2-3]
	\arrow["{\varrho_\cD\cD}", shorten <=3pt, Rightarrow, from=0, to=1-3]
	\arrow["{\id_{\operatorname{ev}_\cD}}"', shorten <=11pt, shorten >=11pt, Rightarrow, from=2, to=2-3]
	\arrow["{\upsilon_{\operatorname{ev}_\cD,\operatorname{ev}_\cD}}", shorten <=13pt, shorten >=13pt, Rightarrow, from=3, to=1]
	\arrow["{\cD^*\varphi_\cD}", shorten >=3pt, Rightarrow, from=3-3, to=4]
\end{tikzcd}.\]
Similarly, the identity natural transform $\id_{\operatorname{cev}_\cD}: \operatorname{cev}_\cD\Rightarrow\operatorname{cev}_\cD$ has each of it legs given by $\cI\rightarrow \cD\cD^*$. The identity functor $1_{\cD\cD^*} = \cD1_{\cD^*}=1_{\cD}\cD^*$ leads to the following 
\[\mathcal{S}_\cD=\begin{tikzcd}
	& {\cD\cD^*} & {\cD\cD^*\cD\cD^*} & {\cD\cD^*} \\
	{\cD\cD^*\cD \cD^*} && \cI \\
	& {\cD \cD^*} & {\cD\cD^*\cD \cD^*} & {\cD \cD^*}
	\arrow[from=1-2, to=2-1]
	\arrow[from=1-3, to=1-2]
	\arrow[""{name=0, anchor=center, inner sep=0}, "{\cD 1_{\cD^*}}"', curve={height=30pt}, from=1-4, to=1-2]
	\arrow[from=1-4, to=1-3]
	\arrow[""{name=1, anchor=center, inner sep=0}, "{=}", from=1-4, to=3-4]
	\arrow[""{name=2, anchor=center, inner sep=0}, from=2-3, to=1-2]
	\arrow[from=2-3, to=1-4]
	\arrow[from=2-3, to=3-2]
	\arrow[from=2-3, to=3-4]
	\arrow[""{name=3, anchor=center, inner sep=0}, from=3-2, to=2-1]
	\arrow[from=3-3, to=3-2]
	\arrow[""{name=4, anchor=center, inner sep=0}, "{1_{\cD} \cD^*}", curve={height=-30pt}, from=3-4, to=3-2]
	\arrow[from=3-4, to=3-3]
	\arrow["{\cD \varrho_\cD}", shorten <=3pt, Rightarrow, from=0, to=1-3]
	\arrow["{\upsilon_{\operatorname{cev}_\cD,\operatorname{cev}_\cD}}", shorten <=13pt, shorten >=13pt, Rightarrow, from=2, to=3]
	\arrow["{\id_{\operatorname{cev}_\cD}}"', shorten <=11pt, shorten >=11pt, Rightarrow, from=2-3, to=1]
	\arrow["{\varphi_\cD \cD^*}", shorten >=3pt, Rightarrow, from=3-3, to=4]
\end{tikzcd}.\]
The commutativity of these diagrams of 2-morphisms, namely the equations $$\mathcal{S}_\cD=\id_{1_{\cD\boxtimes\cD^*}},\qquad \mathcal{S}'_\cD =\id_{1_{\cD^*\boxtimes\cD}},$$ are known as the \textbf{swallowtail equations}; see C1, Definition 2.2.4 in \cite{Douglas:2018}, as well as fig. 19 (c,d) in \cite{Barrett_2024}. However, we shall impose a more general condition involving a larger diagram of 2-morphisms, which "pastes" these swallowtail diagrams together using the braiding structure $c$. This condition is explained in \S \ref{adj-dualbraid}.

\medskip

So far, the above sections dealt with structures that have been known for some time. They were studied mainly in the context of {\it spherical} 2-categories \cite{Mackaay:ek,Douglas:2018,Johnson_Freyd_2023}, but the fact that the left- and right-duals do not coincide means that $\operatorname{2Rep}(\tilde C;\tilde R)$ is not pivotal, and hence cannot be spherical. In the following, we will work to analyze its structures more thoroughly.

\section{Rigid dagger tensor 2-category $\operatorname{2Rep}(\tilde C;\tilde R)$}\label{rigidagger}
Let us now return to $\operatorname{2Rep}(\tilde C;\tilde R)$ and investigate the interplay between its duality and adjunction.  Recall in \S \ref{adjoints} that we have introduced a notion of {\it planar-unitarity}. The natural condition to impose is then to ask for all of the structural functors, eg. the associators $\alpha$, unitors $r_\cD,\ell_\cD$, and particularly the folds $\operatorname{ev}_\cD,\operatorname{cev}_\cD$, to be planar-unitary. 

Following the definition of a \textit{rigid} dagger tensor category in \cite{Jones:2017}, we propose the following definition.
\begin{definition}
    A \textit{rigid dagger tensor 2-category} is a tensor 2-category with adjoints equipped with (natural) left-duals $\cD^*$ for each object $\cD$ such that
    \begin{enumerate}
        \item there exists a "left-dual" $\,^*\cD$ such that it is the pre-dual of the right-dual, $\,^*\cD^*\cong \cD$, and
        \item all folds $\operatorname{ev}_\cD,\operatorname{cev}_{\cD}$ and snakerators/cusps $\varphi_\cD,\varrho_\cD$ are planar-unitary. 
    \end{enumerate}
\end{definition}
\noindent It is sensible to impose these conditions on $\operatorname{2Rep}(\tilde C;\tilde R)$, as its duals and adjunctions --- which we recall are defined by the antipode and the dagger structure $\tilde S,\tilde S_v=-^\dagger$ --- are compatible.
\begin{equation*}
    \tilde S^\text{op}\circ \tilde S_v = \tilde S_v^\text{m-op,c-op}\circ \tilde S.
\end{equation*}
The goal in this section is to exhibit the coherence relations which make this definition more transparent.

\begin{rmk}
    Note that in \cite{Deligne1982}, the definition of a "rigid tensor category" has reflexitivity built in: there are isomorphisms $D\cong (D^*)^*$ trivializing the double dual of every object $D$. We do not a priori assume this property for $\operatorname{2Rep}(\tilde C;\tilde R)$, but we shall see in \S \ref{reidemeisterI} that those objects which \textit{are} reflexive have particular geometric properties.
\end{rmk}

\subsection{Compatibility between the folds}
We begin by noting that duality and adjunctions {\it strongly} commute on endofunctors $F:\cD\rightarrow \cD$,
\begin{equation}
    (F^\dagger )^*  = (F^*)^\dagger,\qquad \,^*(F^\dagger) = (\,^* F)^\dagger,\label{daggercommute}
\end{equation}
which allows the following triangles to commute,
\[\begin{tikzcd}
	& {\cD^*\boxtimes \cD} \\
	{\cD^*\boxtimes \cD} && {\cD^*\boxtimes \cD}
	\arrow["{F^*\boxtimes F}", from=1-2, to=2-3]
	\arrow["{(F^\dagger)^*\boxtimes F^\dagger}", from=2-1, to=1-2]
	\arrow[""{name=0, anchor=center, inner sep=0}, "{((F^*)^\dagger\circ F^*)\boxtimes(F^\dagger \circ F)}"', from=2-1, to=2-3]
	\arrow["\cong"{description}, draw=none, from=1-2, to=0]
\end{tikzcd},\qquad 
\begin{tikzcd}
	& {\cD\boxtimes \cD ^*} \\
	{\cD\boxtimes \cD ^*} && {\cD\boxtimes \cD ^*}
	\arrow["{F^\dagger\boxtimes (F^\dagger)^*}", from=1-2, to=2-3]
	\arrow["{F\boxtimes F^*}", from=2-1, to=1-2]
	\arrow[""{name=0, anchor=center, inner sep=0}, "{((F^*)^\dagger\circ F^*)\boxtimes(F^\dagger \circ F)}"', from=2-1, to=2-3]
	\arrow["\cong"{description}, draw=none, from=1-2, to=0]
\end{tikzcd}.\]
These then lead to the following 2-morphism commutative diagrams (neglecting to label the 1-morphisms),
\[\begin{tikzcd}
	\cI & \cI & \cI \\
	& {\cD^* \boxtimes\cD} \\
	{\cD^* \boxtimes\cD} && {\cD^* \boxtimes\cD} \\
	\\
	\cI && \cI
	\arrow["\shortmid\shortmid"{marking}, from=1-1, to=1-2]
	\arrow["\shortmid\shortmid"{marking}, from=1-2, to=1-3]
	\arrow[from=2-2, to=1-2]
	\arrow["{\operatorname{ev}_{F}}"', shorten <=7pt, shorten >=7pt, Rightarrow, from=2-2, to=1-3]
	\arrow[from=2-2, to=3-3]
	\arrow["\shortmid"{marking}, from=3-1, to=1-1]
	\arrow["{\operatorname{ev}_{F^\dagger}}", shift left=3, shorten <=10pt, shorten >=10pt, Rightarrow, from=3-1, to=1-2]
	\arrow[from=3-1, to=2-2]
	\arrow[""{name=0, anchor=center, inner sep=0}, from=3-1, to=3-3]
	\arrow[""{name=1, anchor=center, inner sep=0}, "{1_{\cD^* \boxtimes\cD}}"{description}, curve={height=30pt}, from=3-1, to=3-3]
	\arrow["\shortmid"{marking}, from=3-1, to=5-1]
	\arrow["\shortmid"{marking},from=3-3, to=1-3]
	\arrow["\shortmid"{marking}, from=3-3, to=5-3]
	\arrow[""{name=2, anchor=center, inner sep=0}, "\shortmid\shortmid"{marking}, from=5-1, to=5-3]
	\arrow["\cong"{description, pos=0.4}, draw=none, from=2-2, to=0]
	\arrow["{e_{F^*}\boxtimes e_{F}}", shorten <=4pt, shorten >=4pt, Rightarrow, from=0, to=1]
	\arrow["{\operatorname{ev}_{1_{\cD^* \boxtimes\cD}}}"'{pos=0.6}, shorten <=9pt, shorten >=2pt, Rightarrow, from=1, to=2]
\end{tikzcd},\qquad 
\begin{tikzcd}
	\cI & \cI & \cI \\
	& {\cD \boxtimes\cD^*} \\
	{\cD \boxtimes\cD^*} && {\cD \boxtimes\cD^*} \\
	\\
	\cI && \cI
	\arrow["\shortmid\shortmid"{marking}, from=1-1, to=1-2]
	\arrow["{\operatorname{cev}_F}"', shift right=2, shorten <=7pt, shorten >=7pt, Rightarrow, from=1-1, to=2-2]
	\arrow["\shortmid"{marking}, from=1-1, to=3-1]
	\arrow["\shortmid\shortmid"{marking}, from=1-2, to=1-3]
	\arrow[from=1-2, to=2-2]
	\arrow["{\operatorname{cev}_{F^\dagger}}", shift left=3, shorten <=10pt, shorten >=10pt, Rightarrow, from=1-2, to=3-3]
	\arrow[from=1-3, to=3-3]
	\arrow[from=2-2, to=3-3]
	\arrow[from=3-1, to=2-2]
	\arrow[""{name=0, anchor=center, inner sep=0}, from=3-1, to=3-3]
	\arrow[""{name=1, anchor=center, inner sep=0}, "{1_{\cD \boxtimes\cD^*}}"{description}, curve={height=30pt}, from=3-1, to=3-3]
	\arrow["\shortmid"{marking}, from=5-1, to=3-1]
	\arrow[""{name=2, anchor=center, inner sep=0}, "\shortmid\shortmid"{marking}, from=5-1, to=5-3]
	\arrow["\shortmid"{marking}, from=5-3, to=3-3]
	\arrow["\cong"{description, pos=0.4}, draw=none, from=2-2, to=0]
	\arrow["{\iota_F\boxtimes \iota_{F^*}}"', shorten <=4pt, shorten >=4pt, Rightarrow, from=1, to=0]
	\arrow["{\operatorname{cev}_{1_{\cD\boxtimes\cD^*}}}"{pos=0.6}, shorten <=5pt, shorten >=7pt, Rightarrow, from=2, to=1]
\end{tikzcd}\]

Since the bottom square commutes on the nose, $\operatorname{ev}_{1_{\cD^*\boxtimes\cD}},\operatorname{cev}_{1_{\cD^*\boxtimes\cD}}$ are trivial whence we obtain the following compatibility between the left-folds and the adjunction-folds
\begin{align}
    & \operatorname{ev}_{F^\dagger}\circ \operatorname{ev}_F = e_{F^*}\boxtimes e_{F},\qquad \operatorname{cev}_{F^\dagger}\circ\operatorname{cev}_F= \iota_F\boxtimes \iota_{F^*}, \nonumber\\
    &  \operatorname{ev}_{F}\circ\operatorname{ev}_{F^\dagger} =(\iota_{F^*}\boxtimes\iota_F)^{-1},\qquad \operatorname{cev}_F\circ\operatorname{cev}_{F^\dagger}= (e_F\boxtimes e_{F^*})^{-1},\label{foldfold}
\end{align}
where the second row of conditions can be deduced from swapping the order of the composition of $F,F^\dagger$ in the central triangles above.

\subsection{Defect decorations on closed surfaces}\label{2catdimension}
Consider the adjunction-folds on the left-fold maps,
\begin{align*}
    &e_{\operatorname{ev}_\cD}: \operatorname{ev}_\cD^\dagger\circ \operatorname{ev}_\cD \Rightarrow 1_{\cD^*\boxtimes\cD},\qquad \iota_{\operatorname{ev}_\cD}: 1_{\cI}\Rightarrow \operatorname{ev}_\cD\circ\operatorname{ev}_\cD^\dagger,\\
    & e_{\operatorname{cev}_\cD}: \operatorname{cev}_\cD^\dagger\circ\operatorname{cev}_\cD\Rightarrow 1_{\cI},\qquad \iota_{\operatorname{cev}_\cD}: 1_{\cD\boxtimes\cD^*}\Rightarrow \operatorname{cev}_\cD\circ\operatorname{cev}_\cD^\dagger.
\end{align*}
These were called respectively the "crotch", "birth-of-a-circle", "saddle", and "death-of-a-circle" 2-morphisms in \cite{Douglas:2018} (see figs. 27 \& 28 in \cite{BAEZ2003705}, and also fig. 19 (a,b) in \cite{Barrett_2024}); collectively, we shall refer to them as the \textit{fold-on-folds} 2-morphisms. These can be used to construct surface defects decorated with 2-morphisms in $\operatorname{2Rep}(\tilde C;\tilde R)$. 

By naturality and planar-unitarity, these 2-morphisms admit an action by $\tilde C$-module endofunctors $F\in\operatorname{End}(\cD)$ via
\begin{align}
    & e_{\operatorname{ev}_\cD}\mapsto e_{\operatorname{ev}_\cD} \bullet (\operatorname{ev}_F^\dagger\circ\operatorname{ev}_F) ,\qquad \iota_{\operatorname{ev}_\cD}\mapsto (\operatorname{ev}_F\circ \operatorname{ev}_F^\dagger)\bullet \iota_{\operatorname{ev}_\cD},\nonumber \\
    & e_{\operatorname{cev}_\cD}\mapsto e_{\operatorname{cev}_\cD}\bullet (\operatorname{cev}_F\circ\operatorname{cev}_F^\dagger),\qquad \iota_{\operatorname{cev}_\cD}\mapsto (\operatorname{cev}_F^\dagger\circ\operatorname{cev}_F)\bullet\iota_{\operatorname{cev}_\cD}.\label{F-module}
\end{align}
It is important to note here that since our left- and right-duals do not coincide, $\operatorname{cev}_{\cD^*}: \cI\rightarrow \cD^*\boxtimes(\cD^*)^*$ is not parallel with $\operatorname{ev}_\cD^\dagger$. We will see how this can be addressed through the ribbon balancings in \S \ref{ribbon}.

\begin{rmk}\label{warning}
Suppose for the moment that the left- and right-duals coincide, so that the duality is involutive. The fact that $\operatorname{cev}_{\cD^*}: \cI\rightarrow \cD^*\boxtimes(\cD^*)^*\cong \cD^*\boxtimes\cD$ now is parallel with $\operatorname{ev}_\cD^\dagger$ allows us to impose (C5, C6 Definition 2.2.4, \cite{Douglas:2018})
\begin{equation}
    \operatorname{ev}_{\cD^*}=\operatorname{cev}_\cD^\dagger,\qquad \varphi_{\cD^*}=\varrho_\cD^\dagger,\label{adjdual}
\end{equation}
which states that adjunction intertwines the duality folds. Under these conditions, one can form the 2-morphism
\begin{equation*}
    \mathfrak{Dim}(\cD)= e_{\operatorname{cev}_{\cD^*}}\bullet\iota_{\operatorname{ev}_{\cD}} : 1_\cI\Rightarrow 1_\cI,
\end{equation*}
defining the 2-categorical dimension of $\cD$ (see Definition 2.3.8 of \cite{Douglas:2018}) assigned to a $\cD$-decorated closed 2-sphere. However, this 2-morphism $\mathfrak{Dim}(\cD)$ admits an action by $\operatorname{End}(\cD)$ according to \eqref{F-module}, hence we see that the value of this 2-sphere is in general \textit{not} invariant under taking equivalence classes of $\cD$. 
\end{rmk}

    \begin{tcolorbox}[breakable]
    \paragraph{Pairing conventions.} We pause here to give a brief comment regarding the pairing conventions. In the case of the ordinary rigid (pivotal) tensor 1-category $C$ (cf. \cite{etingof2016tensor,Bartlett:2016}), the folds define the duality datum $(c^\dagger,e_c,\iota_c)$ of an object $c\in C$. By pivotality, the pre-duality datum for $c^\dagger$ can be uniquely determined by fixing the value of the quantum dimension,
     \begin{equation*}
        \operatorname{dim}(c)^2 = (e_c\circ \iota_{c^\dagger})\otimes (\iota_c\circ e_{c^\dagger}),
    \end{equation*}
    such that $(c^\dagger,e_{c^\dagger}= \iota_c^\dagger,\iota_{c^\dagger}=e_c^\dagger)$. This is the planar-pivotal pairing convention we have used for the hom-categories of $\operatorname{2Rep}(\tilde{C};\tilde R)$ in \S \ref{adjsnake}. One of course wishes to adopt an analogous pairing convention for the object-level duality. However, as we have noted above, the 2-categorical dimension $\mathfrak{Dim}(\cD)$ cannot be computed naturally without the pivotality condition \eqref{adjdual}, and even if we do have pivotality, the issues mentioned in \textit{Remark \ref{warning}} makes it not clear if the pre-duality datum for $\cD^*$ enforced by \eqref{adjdual} is unique. The most we have access to are the "fold-on-fold 2-morphisms" defined above in \S \ref{2catdimension}; we will use them in \S \ref{hopflinks}.
\end{tcolorbox}



\section{Braiding and rigidity}\label{adj-dualbraid}
We now include the braiding into our analysis in this section, which shall lead to a natural notion of a rigid braided 2-category. Moreover, we shall define the ribbon twist from the braiding and relate this construction to the over-/under-twists defined in \cite{Douglas:2018}. This will then guide us to propose a notion of a \textit{ribbon tensor 2-category}, which are known to play major roles in both the physics of 4d TQFTs and the mathematics of 2-tangles \cite{BAEZ2003705}.

\subsection{Braiding and adjunctions; the second Reidemeister move}\label{adjbraid}
Recall that adjunctions on $\operatorname{2Rep}(\tilde C;\tilde R)$ is induced by an involution $\tilde C\rightarrow\tilde C^\text{op}$ on $\tilde C$. This involution is contravariant, and hence only effects the cocomposition in $\tilde C$. Denoting the image of $\tilde C$ under this involution by  $\overline{C}=(\tilde C)^\text{op}$, it is clear that $\overline{C}$ is equipped with a cobraiding given by 
\begin{equation*}
    \overline{\tilde R} \cong -^\dagger\circ \tilde R\circ-^\dagger.
\end{equation*}
We shall write $\overline{R}$ also for the associated 2-$R$-matrix induced by the cobraiding via \textit{Remark \ref{quantumcobraiding}}.

We are now in the position to investigate the planar-unitarity of the braiding. Let $\cD,\cA\in\operatorname{2Rep}(\tilde C;\tilde R)$. We define the adjoint of the braiding functor $c_{\cD,\cA}^\dagger: \cA\boxtimes\cD\Rightarrow \cD\boxtimes\cA$ to be the action by $\overline{R}$,
\begin{gather*}
    c_{\cD,\cA}^\dagger =\text{flip}\circ(\rho_\cD\otimes\rho_\cA)(\overline{R}),\\
    c_{F,\cA}^\dagger = \text{flip}\circ (\rho_\cD\otimes\rho_\cA)(\overline{R}_1^l),\qquad c_{\cD, F'}^\dagger = \text{flip}\circ(\rho_\cD\otimes\rho_\cA)(\overline{R}_1^r),
\end{gather*}
where $F:\cD\rightarrow\cD',F':\cA\rightarrow\cA'$.  

Now recall that $\tilde S_v=-^\dagger$ implements an orientation reversal. This geometric understanding implies the existence of invertible 2-morphisms 
\begin{equation}
    e_{c_{\cD,\cA}}: c_{\cD,\cA}^\dagger \circ c_{\cD,\cA}\Rightarrow 1_{\cD\boxtimes\cA},\qquad \iota_{c_{\cD,\cA}}:1_{\cA\boxtimes\cD}\Rightarrow c_{\cD,\cA}\circ c_{\cD,\cA}^\dagger\label{secondreide}
\end{equation}
for each $\cD,\cA\in\operatorname{2Rep}(\tilde C;\tilde R)$. The structure of planar-unitarity for the braiding functos then identify \eqref{secondreide} as the adjunction-folds on the braiding, which witness the \textbf{second Reidemeister move} (see fig. 26 in \cite{BAEZ2003705}, and fig. 55 (b) in \cite{Barrett_2024}). By the fact that the adjunction is involutive $(F^\dagger)^\dagger\cong F$ on $\tilde C$-module functors $F$, the (left-)adjunction-mates of these folds read
\begin{equation*}
    e_{c_{\cD,\cA}}^\dagger: 1_{\cD\boxtimes\cA}\Rightarrow c_{\cD,\cA}^\dagger \circ c_{\cD,\cA},\qquad \iota_{c_{\cD,\cA}}^\dagger:1_{\cA\boxtimes\cD}\Rightarrow c_{\cD,\cA}\circ c_{\cD,\cA}^\dagger.
\end{equation*}
The definition of planar-unitarity then implies the unitarity of these adjunction-fold 2-morphisms. Similarly, the hexagonators $\Omega,\Omega^\dagger$ are also unitary.

\begin{rmk}\label{syllepsis}
    Note crucially that, despite the functors $c_{\cD,\cA}^\dagger,c_{\cA,\cD}$ are parallel, they are {\it not} 2-isomorphic; we in general cannot even find a 2-morphism between them. However, if the 2-$R$-matrix $\tilde R$ on $\tilde C$ is \textbf{Hermitian} --- that is, if it satisfies $\overline{R}=\tilde R^T$ where $\tilde R^t= \sigma\tilde R$ denotes the transposed 2-$R$-matrix --- then we can indeed find a(n invertible) $\tilde C$-module natural transformation $c_{\cA,\cD}\Rightarrow c_{\cD,\cA}^\dagger$. Composing this with the fold $e_{c_{\cD,\cA}}$ then gives a 2-morphism which trivializes the \textit{full braiding}
    \begin{equation*}
            C_{\cD,\cA} =c_{\cA,\cD}\circ c_{\cD,\cA}\Rightarrow c_{\cD,\cA}^\dagger\circ c_{\cD,\cA}\Rightarrow 1_{\cD\boxtimes\cA}.
    \end{equation*}
    This makes $\operatorname{2Rep}(\tilde C;\tilde R)$ sylleptic! Indeed, the syllepsis coherences follow directly from underlyinh $\tilde C$-module structures. In other words, a Hermitian 2-$R$-matrix gives rise to a sylleptic representation 2-category.
\end{rmk}

\subsubsection{Braiding on the adjoints}
Of course, for each $\tilde C$-module functor $F:\cD\rightarrow\cD'$ and $F':\cA\rightarrow\cA'$ we have, by naturalty, the following conditions
\begin{equation*}
    e_{c_{\cD',\cA}}= e_{c_{\cD,\cA}}\bullet (c_{F,\cA}^\dagger\circ c_{F,\cA}),\qquad \iota_{c_{\cD,\cA'}}= (c_{\cD,F'}\circ c^\dagger_{\cD,F'})\bullet \iota_{c_{\cD,\cA}}.
\end{equation*}
However, together with the folds $e_F\boxtimes \cA: F^\dagger \boxtimes\cA \circ F\boxtimes\cA\rightarrow 1_{\cD\boxtimes\cA}$, we can form the following diagram
\[\begin{tikzcd}
	{\cD\boxtimes\cA} && {\cA\boxtimes\cD} && {\cD\boxtimes\cA} \\
	{\cD'\boxtimes\cA} && {\cA\boxtimes\cD'} && {\cD'\boxtimes \cA} \\
	{\cD\boxtimes\cA} && {\cA\boxtimes\cD} && {\cD\boxtimes\cA}
	\arrow[from=1-1, to=1-3]
	\arrow[""{name=0, anchor=center, inner sep=0}, "{1_{\cD\boxtimes\cA}}", curve={height=-25pt}, from=1-1, to=1-5]
	\arrow[from=1-1, to=2-1]
	\arrow[""{name=1, anchor=center, inner sep=0}, "{1_{\cD\boxtimes\cA}}"', curve={height=75pt}, from=1-1, to=3-1]
	\arrow[from=1-3, to=1-5]
	\arrow[from=1-3, to=2-3]
	\arrow[from=1-5, to=2-5]
	\arrow[""{name=2, anchor=center, inner sep=0}, "{1_{\cD\boxtimes\cA}}", curve={height=-75pt}, from=1-5, to=3-5]
	\arrow["{c_{F,\cA}}", shorten <=10pt, shorten >=10pt, Rightarrow, from=2-1, to=1-3]
	\arrow[from=2-1, to=2-3]
	\arrow[from=2-1, to=3-1]
	\arrow["{c_{F,\cA}^\dagger}"', shorten <=10pt, shorten >=10pt, Rightarrow, from=2-3, to=1-5]
	\arrow[from=2-3, to=2-5]
	\arrow[from=2-3, to=3-3]
	\arrow[from=2-5, to=3-5]
	\arrow["{c_{F^\dagger,\cA}}", shorten <=10pt, shorten >=10pt, Rightarrow, from=3-1, to=2-3]
	\arrow[from=3-1, to=3-3]
	\arrow[""{name=3, anchor=center, inner sep=0}, "{1_{\cD\boxtimes\cA}}"', curve={height=25pt}, from=3-1, to=3-5]
	\arrow["{c_{F^\dagger,\cA}^\dagger}"', Rightarrow, from=3-3, to=2-5]
	\arrow[from=3-3, to=3-5]
	\arrow["{e_{c_{\cD,\cA}}}", shorten >=2pt, Rightarrow, from=1-3, to=0]
	\arrow["{e_F\boxtimes\cA}", shorten >=1pt, Rightarrow, from=2-1, to=1]
	\arrow["{e_F\boxtimes \cA}", shorten >=1pt, Rightarrow, from=2-5, to=2]
	\arrow["{e_{c_{\cD,\cA}}}"', shorten >=2pt, Rightarrow, from=3-3, to=3]
\end{tikzcd}\]
which expresses the adjoints of the mixed braiding 2-morphisms $c_{F,\cA}$; similarly for $c_{\cD,F'}$. 

To conclude that the 2-morphism $c_{F,\cA}$ is unitary, we must now exhibit a coherence relation between its two adjoints. This is done by super-imposing the top-right square in the above diagram by an orientation reversal of the bottom-left square, whence we achieve of the following commutative diagram of 2-morphisms
\[\begin{tikzcd}[ampersand replacement=\&]
	{\cA\boxtimes\cD} \&\&\&\& {\cD\boxtimes\cA} \\
	\& {\cA\boxtimes\cD} \&\& {\cD\boxtimes\cA} \\
	\\
	\& {\cA\boxtimes\cD’} \&\& {\cD’\boxtimes \cA} \\
	{\cA\boxtimes\cD’} \&\&\&\& {\cD’\boxtimes\cA}
	\arrow[from=1-5, to=1-1]
	\arrow[""{name=0, anchor=center, inner sep=0}, "{1_{\cA\boxtimes\cD}}"', from=2-2, to=1-1]
	\arrow[from=2-2, to=2-4]
	\arrow[from=2-2, to=4-2]
	\arrow["{c_{\cA,F}^\dagger}", shift left=3, shorten <=13pt, shorten >=13pt, Rightarrow, from=2-2, to=4-4]
	\arrow[""{name=1, anchor=center, inner sep=0}, "{1_{\cD\boxtimes\cA}}"', from=2-4, to=1-5]
	\arrow[from=2-4, to=4-4]
	\arrow[from=4-2, to=4-4]
	\arrow[""{name=2, anchor=center, inner sep=0}, "{1_{\cA\boxtimes\cD'}}"', from=4-2, to=5-1]
	\arrow[""{name=3, anchor=center, inner sep=0}, "{1_{\cD'\boxtimes\cA}}"', from=4-4, to=5-5]
	\arrow[from=5-1, to=1-1]
	\arrow[from=5-5, to=1-5]
	\arrow[from=5-5, to=5-1]
	\arrow["{e_{c_{\cD,\cA}}}", shorten <=42pt, shorten >=42pt, Rightarrow, from=0, to=1]
	\arrow["{e_{F\boxtimes A}}", shorten <=19pt, shorten >=19pt, Rightarrow, from=1, to=3]
	\arrow["{\iota_{\cA\boxtimes F}}", shorten <=19pt, shorten >=19pt, Rightarrow, from=2, to=0]
	\arrow["{\iota_{c_{\cD',\cA}}}", shorten <=42pt, shorten >=42pt, Rightarrow, from=3, to=2]
\end{tikzcd}=
\begin{tikzcd}[ampersand replacement=\&]
	{\cA\boxtimes\cD} \&\& {\cD\boxtimes\cA} \\
	\\
	{\cA\boxtimes\cD’} \&\& {\cD’\boxtimes\cA}
	\arrow[from=1-3, to=1-1]
	\arrow[from=3-1, to=1-1]
	\arrow["{c_{F^\dagger,\cA}}"', shorten <=13pt, shorten >=13pt, Rightarrow, from=3-3, to=1-1]
	\arrow[from=3-3, to=1-3]
	\arrow[from=3-3, to=3-1]
\end{tikzcd}\]

This can be written concisely using \eqref{adjfoldcompose} as
\begin{equation}
    e_{F\boxtimes\cA \circ c_{\cD\boxtimes\cA}^\dagger} \bullet c_{F,\cA}^\dagger\bullet \iota_{\cA\boxtimes F^\dagger \circ c_{\cD'\boxtimes\cA}} = \iota_{\cA\boxtimes F^\dagger \circ c_{\cD'\boxtimes\cA}}^\dagger  \bullet c_{F^\dagger,\cA}\bullet e_{F\boxtimes\cA \circ c_{\cD\boxtimes\cA}^\dagger}^\dagger.\label{adjfoldbraid}
\end{equation}
From Definition 12 of \cite{BAEZ2003705}, we thus have the following.
\begin{theorem}\label{withduals}
    The rigid dagger tensor 2-category $\operatorname{2Rep}(\tilde C;\tilde R)$, equipped with planar-unitary braiding $c,$ is a "braided monoidal 2-category with duals".
\end{theorem}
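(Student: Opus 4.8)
The plan is to verify each clause of Definition 12 of \cite{BAEZ2003705} by assembling the structures already in place, so that the bulk of the argument is a matching exercise rather than a fresh construction. First I would collect the monoidal-2-category-with-duals data: the monoidal structure from the coproduct $\tilde\Delta$ (whose strictness makes associators and pentagonators trivial, so we sit in the semistrict/$\mathsf{Gray}$ setting that \cite{BAEZ2003705} assumes); the object-level duals $\cD^*$ with folds $\operatorname{ev}_\cD,\operatorname{cev}_\cD$ coming from the antipode $\tilde S$ (\S\ref{duals}); the $1$-morphism adjoints $F^\dagger$ with adjunction-folds $e_F,\iota_F$ coming from the dagger $\tilde S_v=-^\dagger$ (\S\ref{adjoints}); and the duals on $2$-morphisms as the daggers inherited from the $\mathsf{2Hilb}$-enrichment, whose \emph{unitarity} is precisely planar-unitarity. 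The snakerator/cusp data $\varphi_\cD,\varrho_\cD$ together with the swallowtail identities $\mathcal{S}_\cD=\id$, $\mathcal{S}'_\cD=\id$ of \S\ref{swallotail}, and the rigid dagger compatibility of \S\ref{rigidagger}, then discharge the remaining coherence axioms of a monoidal $2$-category with duals.

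The genuinely new content concerns the braiding. The key point is that each braiding $1$-morphism $c_{\cD,\cA}$ acquires an adjoint $c_{\cD,\cA}^\dagger$, defined through the conjugate $2$-$R$-matrix $\overline{R}\cong -^\dagger\circ\tilde R\circ-^\dagger$, together with the adjunction-folds $e_{c_{\cD,\cA}},\iota_{c_{\cD,\cA}}$ of \eqref{secondreide} that implement the second Reidemeister move. I would then invoke the coherence identity \eqref{adjfoldbraid}, already derived above, as precisely the statement that the mixed braiding $2$-morphisms $c_{F,\cA},c_{\cD,F'}$ are planar-unitary, i.e. that their two adjunction-mates agree; combined with the unitarity of the hexagonators $\Omega,\Omega^\dagger$, this furnishes the braiding with a full dagger structure compatible with the ambient one, as required.

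What remains --- and where the real work lies --- is checking the compatibility axioms of \cite{BAEZ2003705} that intertwine the braiding with the \emph{object-level} duals: the \emph{fold-crossing} $2$-morphisms (a crossing sliding past a cap or a cup) and the \emph{writhing} $2$-morphisms. I expect to extract these from the quasitriangularity/second-Reidemeister relations \eqref{2ndreidemeister} for the $2$-$R$-matrix, since those are exactly the identities expressing how $\tilde R$ interacts with the antipode $\tilde S$; under the representation functor $\rho$, a relation such as $(\tilde S\boxtimes 1)\tilde R\otimes\tilde R\cong\tilde I\boxtimes\tilde I$ should translate into the invertible $2$-morphism that trivializes $c_{\cD,\cA}$ against $\operatorname{ev}_\cD/\operatorname{cev}_\cD$. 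The main obstacle will be the bookkeeping: for each such candidate $2$-morphism I must confirm that the $\cE$-structures (edge decorations) on the two sides coincide so that it is a genuine $\tilde C$-module natural transformation, and then verify that these fold-crossing and writhing data satisfy the hexagon- and swallowtail-type pastings of \cite{BAEZ2003705} rather than merely existing --- a verification I would organize diagrammatically, reusing the interchanger $\upsilon$ and the braid-exchange $2$-morphism of \S\ref{braidingstructure} to reduce each axiom to the module coherences of $\tilde C$.
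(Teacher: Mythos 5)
Your first two paragraphs reproduce the paper's actual argument: the monoidal, dual, and adjoint data are assembled from $\tilde\Delta$, $\tilde S$ and $\tilde S_v=-^\dagger$; the braiding acquires an adjoint via $\overline{R}$; the second-Reidemeister adjunction-folds \eqref{secondreide} are produced; and the coherence \eqref{adjfoldbraid} together with the unitarity of the hexagonators is precisely what the paper cites before invoking Definition 12 of \cite{BAEZ2003705}. Up to that point you and the paper agree essentially line for line.

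Where you diverge is the third paragraph. You locate ``the real work'' in verifying the compatibilities between the braiding and the \emph{object-level} duals, i.e.\ the fold-crossings and the writhings. The paper deliberately excludes this from Theorem \ref{withduals}: immediately after the statement it warns that the notion of ``duals'' in the theorem refers to the adjunction $-^\dagger$, not object-level duality, precisely because \cite{BAEZ2003705} conflates two notions that this paper keeps separate. The fold-crossings $H_{\cD,\cA},G_{\cD,\cA}$ and the writhings $K_\cD,\bar K_\cD$ are only constructed later (\S\ref{dualitybraid}, \S\ref{writhe}), and --- this is the substantive point --- their coherence against the cusps (the rigid writhing conditions of \S\ref{rigidwrithes}) is established only \emph{under the additional hypothesis that $\tilde R$ is quasi-Hermitian} \eqref{quasiherm}, which supplies the 2-isomorphism $c_{\cD^*,\cD}\cong c_{\cD,\cD^*}^\dagger$ that those pastings rely on. If your proof of Theorem \ref{withduals} genuinely needs that material, the theorem acquires a hypothesis it does not carry in the paper. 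So either you restrict the scope as the paper does (in which case your third paragraph is not needed and the proof closes after \eqref{adjfoldbraid}), or you are proving a strictly stronger statement that requires quasi-Hermiticity and the invertibility assumptions of \S\ref{reidemeisterI} --- a worthwhile statement, but not this one.
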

\noindent Note crucially that the notion of "duals" in this theorem refers to the adjunction $-^\dagger$, not actual object-level duality. This is because of the way the word "dual" is used in \cite{BAEZ2003705} is different from how we are using it here. 

It is thus not possible to leverage the result in \cite{Baez1996HigherDimensionalAI} and use adjunctions to construct a ribbon balancing/twist; we must use the object-level duality for this. This will be the subject of \S \ref{ribbon}.

\subsubsection{Adjunctions and higher-dagger structures}\label{higherdagger}
    We pause here to elaborate more on the above comment. The object-level duality described in \cite{BAEZ2003705} --- specifically for the "2-category of 2-tangles" --- can in a sense be understood as a conflation of both the words "duals" and "adjoints" used in this paper and \cite{Douglas:2018}: the functor adjoints coincide with their duality-mates. This can be attributed to the interpretation in \cite{BAEZ2003705} that the "generator of the 2-tangles" exist in the \textit{unframed} universe. We now know how to refine this notion; indeed, it was conjectured in \cite{ferrer2024daggerncategories} (Conjecture 5.3) that the dagger-autoequivalences $\operatorname{Aut}^\dagger(\mathsf{AdjCat}_{(\infty,n)})$ of $(\infty,n)$-categories with all adjoints is equivalent to the piecewise linear group $PL(n)$, away from $n=4$. The group $PL(n)$ certainly does not just consist of one single duality $\bbZ_2$, unless $n=1$. This issue matters, as it underpins the cobordism hypothesis and the validity of graphical calculus. 
    
    The attentive reader may notice that the double delooping $B^2\operatorname{2Rep}(\tilde C;\tilde R)$ has precisely $n=4$, for which the statement $PL(4)\simeq \operatorname{Aut}_{\text{sym}\otimes}(\mathsf{Bord}_f^\text{fr})$ is equivalent to the open 4d PL Schoenflies conjecture \cite{lurie2008classification}. It thus seems that we must solve this open problem in geometric topology before being able to do graphical calculus with $\operatorname{2Rep}(\tilde C;\tilde R)$. However, this 2-category only has two adjoints, similar to the $\mathsf{Gray}$-categories with duals \cite{Barrett_2024} and the defect tricategories \cite{Carqueville:2016kdq} studied recently. As such, doing graphical calculus with $\operatorname{2Rep}(\tilde C;\tilde R)$ may not be as sophisticated as the PL Schoenflies conjecture.

\subsection{Braiding and duality}\label{dualitybraid}
We begin our analysis here in a slightly different way as in the previous section. For each $\cD,\cA\in\operatorname{2Rep}(\tilde C;\tilde R)$, the condition \eqref{2ndreidemeister} can be seen to  imply the existence of $\tilde C$-module 2-morphisms
\begin{equation*}
    (\cD^*\boxtimes c_{\cA,\cD})\circ(c_{\cA,\cD^*}\boxtimes\cD)\Rightarrow 1_\cA,\qquad 1_\cA \Rightarrow (\cD\boxtimes c_{\cA,\cD^*})\circ(c_{\cA,\cD}\boxtimes\cD^*)
\end{equation*}
trivializing the subsequent braiding of $\cA$ against $\cD$ with its dual. These can be seen to arise from the following diagram,
\[\begin{tikzcd}
	{(\cA\cD^*)\cD} & {(\cD^*\cA)\cD} & {\cD^*(\cA\cD)} & {\cD^*(\cD\cA)} \\
	{\cA(\cD^*\cD)} &&& {(\cD^*\cD)\cA} \\
	& \cA & \cA \\
	{\cA(\cD\cD^*)} &&& {(\cD\cD^*)\cA} \\
	{(\cA\cD)\cD^*} & {(\cD\cA)\cD^*} & {\cD(\cA\cD^*)} & {\cD(\cD^*\cA)}
	\arrow["{c_{\cA,\cD^*}\cD}", from=1-1, to=1-2]
	\arrow[from=1-1, to=2-1]
	\arrow["{\Omega_{\cA|\cD^*\cD}}", shorten <=29pt, shorten >=29pt, Rightarrow, from=1-1, to=2-4]
	\arrow[from=1-2, to=1-3]
	\arrow["{\cD^*c_{\cA,\cD}}", from=1-3, to=1-4]
	\arrow[""{name=0, anchor=center, inner sep=0}, "{c_{\cA,\cD^*\boxtimes\cD}}", from=2-1, to=2-4]
	\arrow[from=2-1, to=3-2]
	\arrow[from=2-4, to=1-4]
	\arrow[from=2-4, to=3-3]
	\arrow[""{name=1, anchor=center, inner sep=0}, "{1_\cA}"{description}, from=3-2, to=3-3]
	\arrow[from=3-2, to=4-1]
	\arrow[from=3-3, to=4-4]
	\arrow[""{name=2, anchor=center, inner sep=0}, "{c_{\cA,\cD\boxtimes\cD^*}}"', from=4-1, to=4-4]
	\arrow["{\Omega_{\cA|\cD\cD^*}}"', shorten <=29pt, shorten >=29pt, Rightarrow, from=4-1, to=5-4]
	\arrow[from=5-1, to=4-1]
	\arrow["{c_{\cA\boxtimes\cD}\cD^*}"', from=5-1, to=5-2]
	\arrow[from=5-2, to=5-3]
	\arrow["{\cD c_{\cA,\cD^*}}"', from=5-3, to=5-4]
	\arrow[from=5-4, to=4-4]
	\arrow["{c_{\cA,\operatorname{ev}_\cD}}", shorten <=4pt, shorten >=4pt, Rightarrow, from=0, to=1]
	\arrow["{c_{\cA,\operatorname{cev}_\cD}}"', shorten <=4pt, shorten >=4pt, Rightarrow, from=1, to=2]
\end{tikzcd}\]
in which these 2-morphisms arise from composing the unitary hexagonators $\Omega$ with the 2-morphism 
\begin{equation*}
    c_{\operatorname{ev}_\cD,\cA}: c_{\cD^*\boxtimes\cD,\cA}\Rightarrow 1_\cA,\qquad c_{\operatorname{cev}_\cD,\cA}: 1_\cA\rightarrow c_{\cD\boxtimes\cD^*,\cA}.
\end{equation*}
Together with the braid-exchange 2-morphism $c_{\cA,c_{\cD^*,\cD}}$, this diagram can be seen as a trivialization of the 2-morphism $\Omega_{c_{\cA,\cD^*}|\cD}$ for each $\cA,\cD$. In other words, the presence of rigid duals in a braided monoidal 2-category allows us to trivialize the unitary hexagonators $\Omega_{-|--}$, as well as the 2-morphisms $\Omega_{c_{-,-}|-}$, whenever any two of the arguments in them are mutually dual.

We shall demonstrate in \S \ref{writhe} the compatibility of the braiding against both the rigid duality and the involutive $\dagger$-adjunctions, and the square in the centre of the above diagram will play an important role. To set up this exposition, we will first describe an important class of braiding structures.

\subsubsection{Braiding structures on duals}\label{braidduals}
For each $\cD\in\operatorname{2Rep}(\tilde C;\tilde R)$ and endofunctor $F:\cD\rightarrow\cD$, consider the braiding map $c_{\cD,\cD^*}:\cD\boxtimes\cD^*\rightarrow\cD^*\boxtimes\cD$, given by the structures
\begin{gather*}
    c_{\cD,\cD^*} = \text{flip}\circ(\rho \boxtimes\rho^*)(\tilde R_0) = \text{flip} \circ(\rho \boxtimes\rho)((1\otimes \tilde S_0)\tilde R_0),\\
    c_{F,\cD^*} = \text{flip}\circ (\rho \boxtimes\rho)((1\otimes \tilde S_1)\tilde R_1^l),\qquad c_{\cD, F^*} = \text{flip}\circ(\rho \boxtimes\rho)((1\otimes \tilde S_0)\tilde R_1^r).
\end{gather*}
Through the module associator $(\rho(-)\otimes\rho(-))(-) \Rightarrow \rho(-\cdot -)(-)$, these braiding maps are determined by the following quantities
\begin{equation*}
    \nu= (-\cdot -) ( \tilde S\otimes1) \tilde R^T \equiv \nu^l\oplus \nu^r,
\end{equation*}
in $\tilde C$, where the horizontal transpose 2-$R$-matrix $\tilde R^T$ is intertwined by the flip map,
\begin{equation*}
    ((\rho_2\otimes\rho_1)\tilde R^T_\mathrm{h}) \circ \text{flip} = \text{flip}\circ((\rho_1\otimes\rho_2)\tilde R).
\end{equation*}

The naturality of the cobraiding transformation $\tilde R$ implies the following {\it nudging equations}
\begin{equation}
    \nu^l \cdot \nu^r = \nu^r\cdot\nu^l.\label{nudging}
\end{equation}
Notice in the setting of weak 2-gauge theory (see {\it Remark \ref{descendant}}), the associator $\tau$ representing the Postnikov class of $\mathbb{G}$ will contribute to \eqref{nudging}. Similar constructions can be made for the braiding $c_{\cD^*,\cD}:\cD^*\boxtimes \cD\rightarrow\cD\boxtimes\cD^*$ with the dual on the other side, which are associated to the following elements
\begin{equation*}
        \mu= (-\cdot -) ( 1\otimes\tilde S) \tilde R^T
\end{equation*}
satisfying its own nudging equations.  

\begin{rmk}\label{radf}
    This Hopf category object $\nu = (-\cdot-)(\tilde S\otimes 1)\tilde R^T$ is a categorification of a key piece of structural data for Hopf algebras. It is known \cite{RADFORD19921} that for finite-dimensional quasitriangular Hopf algebras $H$, the analogous Hopf algebra element $\nu=m(S\otimes 1)R^T\in H$ is invertible and represents the antipode-squared as an inner automorphism,
    \begin{equation*}
        S^2(a) = \nu a \nu^{-1},\qquad \forall~ a\in H.
    \end{equation*}
    As such, the centrality $\nu\in Z(H)$ implies that $S^2=\id$. This element also plays a central role in the proof of the Radford $S^4$-formula \cite{Radford:1976,Delvaux2006ANO}. It would therefore be reasonable to posit that, if the Hopf category object $\nu$ lift to the Drinfel'd centre $Z_1(\tilde C)$,\footnote{Note $\tilde C$ is only just cobraided, hence it makes sense to talk about its Drinfel'd centre.} then $\tilde C$ is a "cospherical Hopf category" and $\operatorname{2Rep}(\tilde C;\tilde R)$ becomes pivotal in the sense of \cite{Douglas:2018}.
\end{rmk}


\subsubsection{Writhing}\label{writhe}
A direct computation with the 2-$R$-matrix condition $(\tilde \epsilon\otimes1)\tilde R=\id = (1\otimes\tilde\epsilon)\tilde R$ shows that the composite functor $c_{\cD,\cD^*}\circ \operatorname{cev}_\cD: \cI\rightarrow \cD^*\boxtimes\cD$ has the same (trivial) $\tilde\cC$-module structure as $\operatorname{ev}_\cD^\dagger$, and similarly for $c_{\cD^*,\cD}$. As such, we can find 2-morphisms
\begin{equation*}
    K_\cD:c_{\cD,\cD^*}\circ\operatorname{cev}_\cD\Rightarrow \operatorname{ev}_\cD^\dagger,\qquad \bar K_\cD:  \operatorname{cev}_\cD\Rightarrow c_{\cD^*,\cD} \circ\operatorname{ev}_\cD^\dagger
\end{equation*}
called the \textbf{over-/under-writhings}; see fig. 31 of \cite{BAEZ2003705}. Note ibid. used $\operatorname{cev}_{\cD^*}$ instead of $\operatorname{ev}_\cD^\dagger$ in the writhing, hence we recover their definition provided the dual is involutive and the condition \eqref{adjdual} holds.

\begin{rmk}\label{firstreide}
Geometrically, the over-/under-writhings implement a rotation the top portion of a fold. This was interpreted as a null-homotopy witnessing the first Reidemeister move in \cite{BAEZ2003705}, but we will not take this perspective here. We shall return to this issue in \S \ref{reidemeisterI}.
\end{rmk}


We now prove how the second Reidemeister moves can be related to two applications of the writhing.
\begin{proposition}\label{writheadj}
    Suppose the writhings are invertible and the objects $\nu,\mu$ satisfy 
    \begin{equation}
        \bar\nu = \mu^T = (-\otimes -) (\tilde S\boxtimes1) \tilde R,\qquad \bar\mu = \nu^T\label{quasiherm}
    \end{equation}
    under orientation reversal, then we have
    \begin{equation*}
        K_\cD\bullet(c_{\cD,\cD^*}\circ\bar K_\cD^{-1}) = e_{c_{\cD^*,\cD}}\circ \operatorname{ev}_\cD^\dagger,\qquad (c_{\cD^*,\cD}\circ K_\cD^{-1})\bullet \bar K_\cD = \iota_{c_{\cD,\cD^*}}\circ\operatorname{cev}_\cD.
    \end{equation*}
\end{proposition}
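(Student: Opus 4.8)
The plan is to read both identities as expressing that a single application of the over-writhing followed by the inverse under-writhing (suitably whiskered) reproduces the canonical second-Reidemeister fold of \eqref{secondreide}; the entire statement rests on the quasi-Hermiticity hypothesis \eqref{quasiherm}, so I would organise the argument around it. Concretely the argument splits into (i) using \eqref{quasiherm} to identify the braidings on the dual pair with the relevant adjoint braidings, so that the source and target $1$-morphisms on the two sides of each equation literally agree, and (ii) showing that the two $2$-morphisms between these agree.

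First I would carry out the functor identification. Recall from \S\ref{braidduals} that $c_{\cD,\cD^*}$ and $c_{\cD^*,\cD}$ are governed by the Hopf-category objects $\nu=(-\cdot -)(\tilde S\otimes1)\tilde R^T$ and $\mu=(-\cdot -)(1\otimes\tilde S)\tilde R^T$, while $c^\dagger$ is the action of $\overline R\cong{-}^\dagger\circ\tilde R\circ{-}^\dagger$, i.e.\ is governed by the orientation-reversed objects $\bar\nu,\bar\mu$. The two halves of \eqref{quasiherm} then read $c_{\cD^*,\cD}^\dagger\cong c_{\cD,\cD^*}$ and $c_{\cD,\cD^*}^\dagger\cong c_{\cD^*,\cD}$ as $\tilde C$-module functors, the transpose in $\tilde R^T$ absorbing the flip map in the definition of the braiding. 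I would verify each identification by checking that both functors carry the same $\cE$-structure \eqref{Vstruct}, exactly in the style of the proof of \textbf{Proposition \ref{mates}}. With these in hand, the source $(c_{\cD,\cD^*}\circ c_{\cD^*,\cD})\circ\operatorname{ev}_\cD^\dagger$ of the left-hand side of the first identity becomes $(c_{\cD^*,\cD}^\dagger\circ c_{\cD^*,\cD})\circ\operatorname{ev}_\cD^\dagger$, which is precisely the source of $e_{c_{\cD^*,\cD}}\circ\operatorname{ev}_\cD^\dagger$; the analogous matching holds for the second identity with $\operatorname{cev}_\cD$ and the birth fold $\iota$ of \eqref{secondreide}.

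Next I would prove the $2$-morphism equality. The writhings $K_\cD,\bar K_\cD$ were built in \S\ref{writhe} as the canonical comparison isomorphisms between functors carrying the same (trivial) $\tilde C$-module structure, so $K_\cD\bullet(c_{\cD,\cD^*}\circ\bar K_\cD^{-1})$ is a trivialization of the endofunctor $c_{\cD,\cD^*}\circ c_{\cD^*,\cD}$ whiskered onto $\operatorname{ev}_\cD^\dagger$, while $e_{c_{\cD^*,\cD}}\circ\operatorname{ev}_\cD^\dagger$ is the unitary trivialization supplied by planar-unitarity. I would argue they coincide by the uniqueness mechanism used throughout the paper: both are $2$-morphisms with identical source and target, and tracing their $\cE$-structures through \eqref{Vstruct} and the braiding naturality squares shows they induce the same edge-decoration intertwiner. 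Ultimately this reduces, after passing through the module structure, to the quasitriangularity/second-Reidemeister relation \eqref{2ndreidemeister} for $\tilde R$ --- note that $\bar\nu=\mu^T=(-\otimes -)(\tilde S\boxtimes1)\tilde R$ in \eqref{quasiherm} is exactly the combination $(\tilde S\boxtimes1)\tilde R$ appearing there --- so that the reassembly $\bar K_\cD^{-1}$ then $K_\cD$ across the braid on the dual pair is the algebraic shadow of the geometric second Reidemeister move applied to $\operatorname{ev}_\cD^\dagger$ (cf.\ figs.\ 26, 31 of \cite{BAEZ2003705}). The second identity follows by the up--down/orientation-reversed version of the same computation, with $\operatorname{cev}_\cD$ in place of $\operatorname{ev}_\cD^\dagger$ and $\iota$ in place of $e$.

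I expect the main obstacle to be this last step --- upgrading the mere existence of the fold $e_{c_{\cD^*,\cD}}$ (which is only characterised abstractly by planar-unitarity) to a literal equality with the geometrically assembled writhing composite. A priori the two trivializations could differ by an automorphism of $\operatorname{ev}_\cD^\dagger$, and ruling this out is exactly where the $\cE$-structure bookkeeping must pin that automorphism down to the identity; this is in turn where the hypothesis that the writhings are invertible, and the precise (rather than merely Hermitian) form of \eqref{quasiherm}, are genuinely used. A secondary technical nuisance will be keeping the associators, unitors, and the flip-versus-transpose conventions consistent throughout, since these are suppressed in the statement but must be restored to make the whiskerings composable.
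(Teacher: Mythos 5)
Your proposal matches the paper's proof in both structure and substance: the paper likewise uses quasi-Hermiticity (via the argument of \textit{Remark \ref{syllepsis}}) to obtain the invertible $2$-morphism $c_{\cD^*,\cD}\cong c_{\cD,\cD^*}^\dagger$ aligning sources and targets, and then reads off the identity as the commutativity of the triangle \eqref{dualwrithe} (identified with Lemma 16 of \cite{BAEZ2003705}), with the second identity obtained by the mirror-image argument for $\bar k_\cD$. Your closing observation about pinning down the residual automorphism of $\operatorname{ev}_\cD^\dagger$ via $\cE$-structure bookkeeping is exactly the uniqueness mechanism the paper relies on implicitly, so the two arguments are essentially the same.
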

\noindent We call the 2-$R$-matrix $\tilde{R}$ satisfying \eqref{quasiherm} \textbf{quasi-Hermitian}.\footnote{It is clear that a Hermitian $\tilde R$ (see \textit{Remark \ref{syllepsis}}) is quasi-Hermitian, by applying an antipode $1\boxtimes \tilde S$ and then a contraction $(-\otimes-)$ to the definition $\bar R= \tilde R^T$.}
\begin{proof}
    We begin by post-composing $\bar K_\cD^{-1}$ with $c_{\cD,\cD^*}$ to achieve the 2-morphism
    \begin{equation}
        k_\cD: C_{\cD^*,\cD}\circ\operatorname{ev}_\cD^\dagger=c_{\cD,\cD^*}\circ c_{\cD^*,\cD}\circ\operatorname{ev}_\cD^\dagger \xRightarrow{c_{\cD,\cD^*}\circ\bar K_\cD^{-1}} c_{\cD,\cD^*}\circ \operatorname{cev}_\cD \xRightarrow{K_\cD}\operatorname{ev}_\cD^\dagger.\label{eq1}
    \end{equation}
If \eqref{quasiherm} holds, then the same argument as from {\it Remark \ref{syllepsis}} implies that there is a(n invertible) 2-morphism $c_{\cD^*,\cD}\cong c_{\cD,\cD^*}^\dagger$. This allows us to form the commutative triangle
\begin{equation}\begin{tikzcd}
	{c_{\cD,\cD^*}\circ\operatorname{cev}_\cD} & {\operatorname{ev}_\cD^\dagger} \\
	{c_{\cD^*,\cD}^\dagger\circ c_{\cD^*,\cD}\circ\operatorname{ev}_\cD^\dagger}
	\arrow["{K_\cD}", Rightarrow, from=1-1, to=1-2]
	\arrow["{c_{\cD,\cD^*}\circ \bar K_\cD}"', Rightarrow, from=1-1, to=2-1]
	\arrow["{e_{c_{\cD^*,\cD}}\circ\operatorname{ev}_\cD^\dagger}"', Rightarrow, from=2-1, to=1-2]
\end{tikzcd}\label{dualwrithe}\end{equation}
which states that the composition in \eqref{eq1} is nothing but $e_{c_{\cD^*,\cD}}\circ \operatorname{ev}_\cD^\dagger$. A completely analogous argument holds for the composite
    \begin{equation}
        \bar k_\cD: \operatorname{cev}_\cD\xRightarrow{\bar K_\cD} c_{\cD^*,\cD}\circ\operatorname{ev}_\cD^\dagger\xRightarrow{c_{\cD^*,\cD}\circ K_\cD^{-1}}c_{\cD^*,\cD}\circ c_{\cD,\cD^*}\circ \operatorname{cev}_\cD = C_{\cD,\cD^*}\circ\operatorname{cev}_\cD.\label{eq2}
    \end{equation}
Notice the commutative triangle \eqref{dualwrithe} is nothing but Lemma 16 in \cite{BAEZ2003705}. 



    
\end{proof}
\noindent Throughout the following, we will often assume that the 2-$R$-matrix $\tilde R$ is quasi-Hermitian. 

\begin{rmk}\label{quasihermcondition}
    Let $\tilde{C}$ denote a cobraided Hopf category, and $\tilde{C}^\text{m-op}$ its monoidal opposite. Its monoidal product is given by $(-\cdot-)^T = (-\cdot-)\sigma$, where $\sigma: \tilde{C}\times\tilde{C}\rightarrow\tilde{C}\times\tilde{C}$ is a swapping of factors. As such, quasi-Hermiticity \eqref{quasiherm} can be understood as a condition relating the orientation reversal of the object $\nu\in\tilde{C}$ with the opposite one $\mu^T\in\tilde{C}^\text{m-op}$. If a monoidal natural transformation $(-\cdot-)\Rightarrow (-\cdot-)^T$ exists --- that is to say, if $\tilde{C}$ were \textit{braided} --- then the quasi-Hermitian condition appears as a coherence condition between the cobraiding and the braiding. However, there is nothing in the underlying 4d gauge theory which indicates $\tilde C$ should have a braiding.
\end{rmk}

\medskip

For posterity, we define a few structural 2-morphisms that will play an important role later. Composing the writhes $K_\cD,\bar K_\cD$ respectively with the folds $\operatorname{cev}_\cD^\dagger,\operatorname{ev}_\cD$ yield the 2-morphisms
\begin{align*}
    & K_\cD':c_{\cD,\cD^*}\xRightarrow{c_{\cD,\cD^*}\circ \iota_{\operatorname{cev}_\cD}}c_{\cD,\cD^*}\circ (\operatorname{cev}_\cD\circ\operatorname{cev}_\cD^\dagger)\xRightarrow{K_\cD\circ\operatorname{cev}_\cD^\dagger}\operatorname{ev}_\cD^\dagger\circ\operatorname{cev}_\cD^\dagger,\\
    & \bar K_\cD': \operatorname{cev}_\cD\circ\operatorname{ev}_\cD \xRightarrow{\bar K_\cD\circ \operatorname{ev}_\cD} c_{\cD^*,\cD}\circ \operatorname{ev}_\cD^\dagger\circ\operatorname{ev}_\cD \xRightarrow{c_{\cD^*,\cD}\circ e_{\operatorname{ev}_\cD}} c_{\cD^*,\cD}.
\end{align*}
On the other hand, consider the so-called "double point arc crossing a fold line" 2-morphisms \cite{CARTER19971}
\begin{align*}
    H_{\cD,\cA}&= \Omega_{\cD|\cD^*\cA}\bullet c_{\operatorname{cev}_\cD^\dagger,\cA}\bullet \cD\boxtimes \iota_{c_{\cD^*,\cA}}\\
    &\qquad \qquad : (\operatorname{cev}_\cD^\dagger\boxtimes\cA)\circ  (\cD\boxtimes c_{\cD^*,\cA}^\dagger)\Rightarrow (\cA\boxtimes\operatorname{cev}_\cD^\dagger)\circ (c_{\cA,\cD}\boxtimes\cD^*), \\ 
    G_{\cD,\cA} &= \Omega_{\cD^*|\cD\cA}\bullet c_{\operatorname{ev}_\cD,\cA} \bullet \cD^*\boxtimes\iota_{c_{\cD,\cA}}\\
    &\qquad \qquad: (\operatorname{ev}_\cD\boxtimes\cA)\circ (\cD^*\boxtimes c_{\cD,\cA}^\dagger)\Rightarrow (\cA\boxtimes\operatorname{ev}_\cD)\circ(c_{\cD^*,\cA}\boxtimes\cD),
\end{align*}
which we shall shorten to \textbf{fold-crossings}; see fig. 32 of \cite{BAEZ2003705}. There are also "dual" versions of these 2-morphisms, denoted by $\bar H,\bar G$, in which the folds are replaced by their appropriate barred versions.

\subsection{Rigid dagger structures and the writhing}
We now study the compatibility of the writhing 2-morphisms with the rigid duality structure. Let us begin with the fold-crossing maps defined above.  Consider the following 2-morphisms
\[(K\bullet H)_{\cD,\cA}=\begin{tikzcd}
	{(\cD\cD^*)\cA} && {\cD\cA\cD^*} && {\cA(\cD\cD^*)} \\
	\\
	&& \cA \\
	\\
	{(\cD^*\cD)\cA} &&&& {\cA(\cD^*\cD)}
	\arrow[""{name=0, anchor=center, inner sep=0}, curve={height=-6pt}, from=1-1, to=1-3]
	\arrow[""{name=1, anchor=center, inner sep=0}, shift left, curve={height=24pt}, from=1-1, to=1-5]
	\arrow[from=1-1, to=3-3]
	\arrow[""{name=2, anchor=center, inner sep=0}, from=1-1, to=5-1]
	\arrow[""{name=3, anchor=center, inner sep=0}, curve={height=-6pt}, from=1-3, to=1-1]
	\arrow[from=1-3, to=1-5]
	\arrow[from=1-5, to=3-3]
	\arrow[""{name=4, anchor=center, inner sep=0}, from=1-5, to=5-5]
	\arrow[""{name=5, anchor=center, inner sep=0}, from=3-3, to=5-1]
	\arrow[from=3-3, to=5-5]
	\arrow[from=5-1, to=5-5]
	\arrow["\dashv"{anchor=center, rotate=-90}, draw=none, from=0, to=3]
	\arrow["{K_\cD’\cA}", shorten <=17pt, shorten >=17pt, Rightarrow, from=2, to=3-3]
	\arrow["{c_{\operatorname{cev}_\cD^\dagger,\cA}}"{pos=0.6}, shorten <=5pt, shorten >=5pt, Rightarrow, from=1, to=3-3]
	\arrow["{\Omega_{\cD|\cD^*\cA}}", shorten <=1pt, shorten >=2pt, Rightarrow, from=1-3, to=1]
	\arrow["{\cA K_\cD’}"', shorten <=17pt, shorten >=17pt, Rightarrow, from=3-3, to=4]
	\arrow["{c_{\operatorname{ev}_\cD^\dagger,\cA}}", shorten <=23pt, shorten >=23pt, Rightarrow, from=5, to=5-5]
\end{tikzcd}\]
\[(\bar K\bullet G)_{\cD,\cA}=\begin{tikzcd}
	{(\cD^*\cD)\cA} &&&& {(\cD\cD^*)\cA} \\
	\\
	{\cD^*\cA\cD} &&& \cA \\
	\\
	{\cA(\cD^*\cD)} &&&& {\cA(\cD\cD^*)}
	\arrow[from=1-1, to=1-5]
	\arrow[""{name=0, anchor=center, inner sep=0}, shift right=3, curve={height=6pt}, from=1-1, to=3-1]
	\arrow[from=1-1, to=3-4]
	\arrow[""{name=1, anchor=center, inner sep=0}, shift left=2, curve={height=-30pt}, from=1-1, to=5-1]
	\arrow[""{name=2, anchor=center, inner sep=0}, from=1-5, to=5-5]
	\arrow[""{name=3, anchor=center, inner sep=0}, curve={height=6pt}, from=3-1, to=1-1]
	\arrow[from=3-1, to=5-1]
	\arrow[""{name=4, anchor=center, inner sep=0}, from=3-4, to=1-5]
	\arrow[from=3-4, to=5-5]
	\arrow[from=5-1, to=3-4]
	\arrow[""{name=5, anchor=center, inner sep=0}, from=5-1, to=5-5]
	\arrow["{\bar K’_\cD\cA}"', shorten <=22pt, shorten >=22pt, Rightarrow, from=1-1, to=4]
	\arrow["{c_{\operatorname{ev}_\cD,\cA}}"{pos=0.6}, shorten <=14pt, shorten >=14pt, Rightarrow, from=1, to=3-4]
	\arrow["\dashv"{anchor=center, rotate=-180}, draw=none, from=3, to=0]
	\arrow["{\Omega_{\cD^*|\cA\cD}}"{pos=0.3}, shorten <=7pt, shorten >=7pt, Rightarrow, from=3, to=1]
	\arrow["{\cA \bar K’_\cD}", shorten <=9pt, shorten >=9pt, Rightarrow, from=3-4, to=5]
	\arrow["{c_{\operatorname{cev}_\cD,\cA}}"', shorten <=4pt, shorten >=4pt, Rightarrow, from=3-4, to=2]
\end{tikzcd}\]
which expresses the crossing of over-/under-writhings over the folds. Similar constructions can be made for the adjoint writhe-crossings $(K\bullet  H^\dagger),(\bar K\bullet G^\dagger)$, in which braidings of the form $c_{\cD\boxtimes\cD^*,\cA},$ are replaced with their adjoints $c_{\cD\boxtimes\cD^*,\cA}^\dagger$.

\subsubsection{Rigid writhing conditions}\label{rigidwrithes}
We now exhibit the compatibility between the writhing and the rigid duality, under the assumption that $\tilde R$ is quasi-Hermitian (ie. \eqref{quasiherm} holds). These are expressed by the so-called \textit{rigid writhing conditions} \eqref{rigidwrithe1}, \eqref{rigidwrithe2}, which we shall now deduce.

Starting from quasi-Hermiticity \eqref{quasiherm}, we deduce a 2-isomorphism $c_{\cD^*,\cD}\cong c_{\cD,\cD^*}^\dagger$. Taking the writhe-crossing $(K\bullet H^\dagger)_{\cD,\cD^*}$ at $\cA=\cD^*$, we see that the cube corresponding to \eqref{adjfoldbraid} --- with $F=c_{\cD,\cD^*}$ --- allows us to stack $(K\bullet H^\dagger)_{\cD,\cD^*}$ with $(\bar H\bullet \bar K)_{\cD,\cD^*}$ by use of the associator equivalences:
\[\begin{tikzcd}
	& {(\cD\cD^*)\cD^*} && {\cD^*(\cD\cD^*)} & {(\cD^*\cD)\cD^*} && {(\cD\cD^*)\cD^*} \\
	{\cD^*} &&&&&&& {\cD^*} \\
	& {(\cD^*\cD)\cD^*} && {\cD^*(\cD^*\cD)} & {\cD^*(\cD^*\cD)} && {\cD^*(\cD\cD^*)}
	\arrow[from=1-2, to=1-4]
	\arrow[draw={rgb,255:red,255;green,51;blue,54}, curve={height=-30pt}, equals, from=1-2, to=1-7]
	\arrow[draw={rgb,255:red,255;green,51;blue,54}, from=1-2, to=2-1]
	\arrow[""{name=0, anchor=center, inner sep=0}, from=1-2, to=3-2]
	\arrow["\simeq", from=1-4, to=1-5]
	\arrow[from=1-4, to=2-1]
	\arrow[""{name=1, anchor=center, inner sep=0}, from=1-4, to=3-4]
	\arrow[from=1-5, to=1-7]
	\arrow[from=1-5, to=2-8]
	\arrow[""{name=2, anchor=center, inner sep=0}, from=1-5, to=3-5]
	\arrow[""{name=3, anchor=center, inner sep=0}, from=1-7, to=3-7]
	\arrow[from=2-1, to=3-2]
	\arrow[draw={rgb,255:red,54;green,51;blue,255}, from=2-1, to=3-4]
	\arrow[draw={rgb,255:red,255;green,51;blue,54}, from=2-8, to=1-7]
	\arrow[from=2-8, to=3-7]
	\arrow[from=3-2, to=3-4]
	\arrow["\simeq", curve={height=24pt}, from=3-2, to=3-7]
	\arrow[draw={rgb,255:red,54;green,51;blue,255}, equals, from=3-4, to=3-5]
	\arrow[draw={rgb,255:red,54;green,51;blue,255}, from=3-5, to=2-8]
	\arrow[from=3-5, to=3-7]
	\arrow["{(K\bullet H)_{\cD,\cA}}", shorten <=20pt, shorten >=20pt, Rightarrow, from=0, to=1]
	\arrow["{(\bar K\bullet G)_{\cD,\cA}}", shorten <=20pt, shorten >=20pt, Rightarrow, from=2, to=3]
\end{tikzcd}\]
where we have utilized a shorthand to express $(K\bullet H)_{\cD,\cA}$ and $(\bar K\bullet G)_{\cD,\cA}$ as square diagrams. Notice crucially that we have kept track of the functors that enter/exit the position $\cA=\cD^*$.

This is important, as we notice that the compositions of the functors  $\cD^*\rightarrow\cD^*$ marked in blue (and respectively those marked in red)  are given precisely respectively by the cusp ${(\operatorname{ev}_{\cD}\boxtimes\cD^*)\circ (\cD^*\boxtimes\operatorname{cev}_{\cD})}$ and its adjoint. We can then apply the snakerators/cusps \eqref{cusps} $${\color{blue}\varphi_\cD: 1_{\cD^*}\Rightarrow (\operatorname{ev}_{\cD}\boxtimes\cD^*)\circ (\cD^*\boxtimes\operatorname{cev}_{\cD})},\qquad {\color{red}\varphi_\cD^\dagger: (\cD^*\boxtimes\operatorname{cev}_{\cD}^\dagger)\circ (\operatorname{ev}_{\cD}^\dagger\boxtimes\cD^*) \Rightarrow 1_{\cD^*}},$$ colour-coded for clarity, to this diagram. 

The unitarity of the snakerator $\varphi_\cD$, which follows from the planar-unitarity of the rigid duality structure described in \S \ref{rigidagger}. We are then able to form the following (rather convoluted) diagram of 2-morphisms
\begin{equation}
    \begin{tikzcd}
	& {(\cD^*\cD)\cD^*} &&& {\cD^*(\cD^*\cD)} \\
	{\cD^*(\cD^*\cD)} &&&&& {(\cD^*\cD)\cD^*} \\
	&& {\cD^*} & {\cD^*} \\
	{\cD^*(\cD\cD^*)} &&&&& {(\cD\cD^*)\cD^*} \\
	& {(\cD\cD^*)\cD^*} &&& {\cD^*(\cD\cD^*)}
	\arrow[""{name=0, anchor=center, inner sep=0}, "\sim", Rightarrow, no head, from=1-2, to=2-6]
	\arrow["\sim"', Rightarrow, no head, from=1-5, to=2-1]
	\arrow[dotted, from=1-5, to=3-4]
	\arrow[from=1-5, to=5-5]
	\arrow[""{name=1, anchor=center, inner sep=0}, from=2-1, to=1-2]
	\arrow[""{name=2, anchor=center, inner sep=0}, from=2-6, to=1-5]
	\arrow[from=2-6, to=3-4]
	\arrow[from=2-6, to=4-6]
	\arrow[from=3-3, to=1-2]
	\arrow[dotted, from=3-3, to=2-1]
	\arrow[""{name=3, anchor=center, inner sep=0}, "{1_{\cD^*}}"', from=3-3, to=3-4]
	\arrow[from=3-4, to=4-6]
	\arrow[dotted, from=3-4, to=5-5]
	\arrow[from=4-1, to=2-1]
	\arrow[dotted, from=4-1, to=3-3]
	\arrow[""{name=4, anchor=center, inner sep=0}, from=4-1, to=5-2]
	\arrow[""{name=5, anchor=center, inner sep=0}, "\sim", Rightarrow, no head, from=4-1, to=5-5]
	\arrow[""{name=6, anchor=center, inner sep=0}, from=4-6, to=5-5]
	\arrow[from=5-2, to=1-2]
	\arrow[from=5-2, to=3-3]
	\arrow["\sim", Rightarrow, no head, from=5-2, to=4-6]
	\arrow[""{name=7, anchor=center, inner sep=0}, draw=none, from=1, to=4]
	\arrow[""{name=8, anchor=center, inner sep=0}, draw=none, from=2, to=6]
	\arrow["{\varphi_\cD}"', shorten <=9pt, shorten >=9pt, Rightarrow, from=3, to=0]
	\arrow["{\varphi_\cD^\dagger}"', shorten <=9pt, shorten >=9pt, Rightarrow, from=5, to=3]
	\arrow["{(\bar K\bullet G)_{\cD,\cD^*}}", shift right, shorten <=7pt, shorten >=15pt, Rightarrow, from=8, to=3-4]
	\arrow["{(K\bullet  H^\dagger)_{\cD,\cD^*}}"', shorten <=15pt, shorten >=15pt, Rightarrow, from=3-3, to=7]
\end{tikzcd}.\label{rigidwrithe1}
\end{equation}
This diagram comes with another version constructed from $(\bar K\bullet \bar H^\dagger)_{\cD,\cD}$. Stacking it with $(K\bullet H)_{\cD,\cD}$ and the unitarity of the snakerator $\varrho_\cD$ gives
\begin{equation}\begin{tikzcd}
	& {(\cD^*\cD)\cD} &&& {\cD(\cD^*\cD)} \\
	{\cD(\cD^*\cD)} &&&&& {(\cD^*\cD)\cD} \\
	&& \cD & \cD \\
	{\cD(\cD\cD^*)} &&&&& {(\cD\cD^*)\cD} \\
	& {(\cD\cD^*)\cD} &&& {\cD(\cD\cD^*)}
	\arrow[""{name=0, anchor=center, inner sep=0}, from=1-2, to=2-1]
	\arrow["\sim", Rightarrow, no head, from=1-2, to=2-6]
	\arrow[""{name=1, anchor=center, inner sep=0}, "\sim"', Rightarrow, no head, from=1-5, to=2-1]
	\arrow[""{name=2, anchor=center, inner sep=0}, from=1-5, to=2-6]
	\arrow[dotted, from=1-5, to=3-4]
	\arrow[from=1-5, to=5-5]
	\arrow[from=2-6, to=3-4]
	\arrow[from=2-6, to=4-6]
	\arrow[from=3-3, to=1-2]
	\arrow[dotted, from=3-3, to=2-1]
	\arrow[""{name=3, anchor=center, inner sep=0}, "{1_{\cD}}"', from=3-3, to=3-4]
	\arrow[from=3-4, to=4-6]
	\arrow[dotted, from=3-4, to=5-5]
	\arrow[from=4-1, to=2-1]
	\arrow[dotted, from=4-1, to=3-3]
	\arrow["\sim", Rightarrow, no head, from=4-1, to=5-5]
	\arrow[from=5-2, to=1-2]
	\arrow[from=5-2, to=3-3]
	\arrow[""{name=4, anchor=center, inner sep=0}, from=5-2, to=4-1]
	\arrow[""{name=5, anchor=center, inner sep=0}, "\sim", Rightarrow, no head, from=5-2, to=4-6]
	\arrow[""{name=6, anchor=center, inner sep=0}, from=5-5, to=4-6]
	\arrow[""{name=7, anchor=center, inner sep=0}, draw=none, from=0, to=4]
	\arrow["{\varrho_\cD}", shorten <=8pt, shorten >=8pt, Rightarrow, from=1, to=3]
	\arrow[""{name=8, anchor=center, inner sep=0}, draw=none, from=2, to=6]
	\arrow["{\varrho_\cD^\dagger}", shorten <=8pt, shorten >=8pt, Rightarrow, from=3, to=5]
	\arrow["{(\bar K\bullet  G^\dagger)_{\cD,\cD}}", shift right, shorten <=7pt, shorten >=13pt, Rightarrow, from=8, to=3-4]
	\arrow["{(K\bullet H)_{\cD,\cD}}"', shorten <=13pt, shorten >=13pt, Rightarrow, from=3-3, to=7]
\end{tikzcd}.\label{rigidwrithe2}
\end{equation}
The commutativity of these diagrams \eqref{rigidwrithe1}, \eqref{rigidwrithe2}, together with their dual versions involving $\bar H,\bar G$, are the desired \textbf{rigid writhing conditions}. 



These conditions involving \eqref{rigidwrithe1}, \eqref{rigidwrithe2} can be understood as framed versions of the writhing coherence axioms given in \cite{BAEZ2003705}, Definition 14. Geometrically, these conditions express the equivalence between the two ways in which a writhing can be passed through duality cusps $\varphi,\varrho$; see fig. 33 of \cite{BAEZ2003705}. But here, the difference between left and right duals allows us to keep track of the framing --- we will see this in \S \ref{ribbon}, specifically \textbf{Corollary \ref{generalC8}}. 

\subsubsection{The swallowking equations}\label{swallowkingeqs}
Equipped with the writhing, we now construct the diagram which "pastes" the two swallowtail equations together, still under the assumption that $\tilde R$ is quasi-Hemritian. Consider the braiding map $c_{\cD,\cD^*}: \cD\cD^*\rightarrow \cD^*\cD$ and its adjoint $c_{\cD^*,\cD}\cong c_{\cD,\cD^*}^\dagger:\cD\cD^*\rightarrow\cD^*\cD$. The adjoint-mate of the under-writhe
\begin{equation*}
    \bar K_\cD'^\dagger:c_{\cD^*,\cD}^\dagger\cong c_{\cD,\cD^*}\Rightarrow \operatorname{ev}_\cD^\dagger \operatorname{cev}_\cD^\dagger,
\end{equation*}
together with the interchangers, allows us to form the following 2-morphism
\begin{equation}
    \begin{tikzcd}
	& {\cD\cD^*} && {\cD^*\cD} \\
	{\cD\cD^*\cD\cD^*} && \cI && {\cD\cD^*\cD^*\cD} \\
	& {\cD^*\cD} && {\cD\cD^*}
    	\arrow[""{name=0, anchor=center, inner sep=0}, from=1-2, to=1-4]
	\arrow[from=1-2, to=2-3]
	\arrow[from=1-4, to=2-5]
	\arrow["{\upsilon_{\operatorname{cev}_\cD,\operatorname{ev}_\cD}}"{description}, shorten <=6pt, shorten >=6pt, Rightarrow, from=1-4, to=3-4]
	\arrow[from=2-1, to=1-2]
	\arrow[from=2-3, to=1-4]
	\arrow[from=2-3, to=3-2]
	\arrow[from=2-5, to=3-4]
	\arrow["{\upsilon_{\operatorname{ev}_\cD,\operatorname{cev}_\cD}}"{description}, shorten <=6pt, shorten >=6pt, Rightarrow, from=3-2, to=1-2]
	\arrow[from=3-2, to=2-1]
	\arrow[from=3-4, to=2-3]
	\arrow[""{name=1, anchor=center, inner sep=0}, from=3-4, to=3-2]
	\arrow["{\bar K_\cD'^\dagger}"', shorten <=3pt, shorten >=3pt, Rightarrow, from=0, to=2-3]
	\arrow["{K_\cD'}", shorten <=3pt, shorten >=3pt, Rightarrow, from=1, to=2-3]
\end{tikzcd}\label{writheswap}
\end{equation}
On the quadruples $\cD^*\cD\cD^*\cD$ and $\cD\cD^*\cD\cD^*$, we also have the following interchangers,
\[\upsilon_{c_{\cD,\cD^*},c_{\cD,\cD^*}}=\begin{tikzcd}
	{\cD\cD^*\cD\cD^*} & {\cD\cD^*\cD^*\cD} \\
	{\cD^*\cD\cD\cD^*} & {\cD^*\cD\cD^*\cD}
	\arrow[from=1-1, to=1-2]
	\arrow[from=1-1, to=2-1]
	\arrow[shorten <=8pt, shorten >=8pt, Rightarrow, from=1-1, to=2-2]
	\arrow[from=1-2, to=2-2]
	\arrow[from=2-1, to=2-2]
\end{tikzcd},\qquad \upsilon_{c_{\cD^*,\cD},c_{\cD^*,\cD}}=
\begin{tikzcd}
	{\cD^*\cD\cD^*\cD} & {\cD\cD^*\cD^*\cD} \\
	{\cD^*\cD\cD\cD^*} & {\cD\cD^*\cD\cD^*}
	\arrow[from=1-1, to=1-2]
	\arrow[from=1-1, to=2-1]
	\arrow[shorten <=8pt, shorten >=8pt, Rightarrow, from=1-1, to=2-2]
	\arrow[from=1-2, to=2-2]
	\arrow[from=2-1, to=2-2]
\end{tikzcd},\]
which are adjoints of each other $\upsilon_{c_{\cD,\cD^*},c_{\cD,\cD^*}}=\upsilon_{c_{\cD^*,\cD},c_{\cD^*,\cD}}^\dagger$ by quasi-Hemriticity. 

This interchanger fits into the centre of the following diamond 
\begin{equation}\begin{tikzcd}
	&&& \cI \\
	&& {\cD\cD^*} && {\cD\cD^*} \\
	& {\cD^*\cD} && {\cD\cD^*\cD\cD^*} && {\cD^*\cD} \\
	\cI && {\cD\cD^*\cD^*\cD} && {\cD^*\cD\cD\cD^*} && \cI \\
	& {\cD\cD^*} && {\cD^*\cD\cD^*\cD} && {\cD\cD^*} \\
	&& {\cD^*\cD} && {\cD^*\cD} \\
	&&& \cI
	\arrow[from=1-4, to=2-3]
	\arrow[from=1-4, to=2-5]
	\arrow["{\upsilon_{\operatorname{cev}_\cD,\operatorname{cev}_\cD}}"{description}, shorten <=6pt, shorten >=6pt, Rightarrow, from=1-4, to=3-4]
	\arrow[from=2-3, to=3-2]
	\arrow[from=2-3, to=3-4]
	\arrow["{\upsilon_{\operatorname{cev}_\cD,c_{\cD,\cD^*}}}"{description}, shorten <=6pt, shorten >=6pt, Rightarrow, from=2-3, to=4-3]
	\arrow[from=2-5, to=3-4]
	\arrow[from=2-5, to=3-6]
	\arrow["{\upsilon_{c_{\cD,\cD^*},\operatorname{cev}_\cD}}"{description}, shorten <=6pt, shorten >=6pt, Rightarrow, from=2-5, to=4-5]
	\arrow[from=3-2, to=4-1]
	\arrow[from=3-2, to=4-3]
	\arrow["{\upsilon_{\operatorname{cev}_\cD,\operatorname{ev}_\cD}}"{description}, shorten <=6pt, shorten >=6pt, Rightarrow, from=3-2, to=5-2]
	\arrow[from=3-4, to=4-3]
	\arrow[from=3-4, to=4-5]
	\arrow["{\upsilon_{c_{\cD,\cD^*},c_{\cD,\cD^*}}}"{description}, shorten <=6pt, shorten >=6pt, Rightarrow, from=3-4, to=5-4]
	\arrow[from=3-6, to=4-5]
	\arrow[from=3-6, to=4-7]
	\arrow["{\upsilon_{\operatorname{ev}_\cD,\operatorname{cev}_\cD}}"{description}, shorten <=6pt, shorten >=6pt, Rightarrow, from=3-6, to=5-6]
	\arrow[from=4-1, to=5-2]
	\arrow[from=4-3, to=5-2]
	\arrow[from=4-3, to=5-4]
	\arrow["{\upsilon_{c_{\cD,\cD^*},\operatorname{ev}_\cD}}"{description}, shorten <=6pt, shorten >=6pt, Rightarrow, from=4-3, to=6-3]
	\arrow[from=4-5, to=5-4]
	\arrow[from=4-5, to=5-6]
	\arrow["{\upsilon_{\operatorname{ev}_\cD,c_{\cD,\cD^*}}}"{description}, shorten <=6pt, shorten >=6pt, Rightarrow, from=4-5, to=6-5]
	\arrow[from=4-7, to=5-6]
	\arrow[from=5-2, to=6-3]
	\arrow[from=5-4, to=6-3]
	\arrow[from=5-4, to=6-5]
	\arrow["{\upsilon_{\operatorname{ev}_\cD,\operatorname{ev}_\cD}}"{description}, shorten <=6pt, shorten >=6pt, Rightarrow, from=5-4, to=7-4]
	\arrow[from=5-6, to=6-5]
	\arrow[from=6-3, to=7-4]
	\arrow[from=6-5, to=7-4]
\end{tikzcd},\label{diamond}
\end{equation}
from which we notice that the top-/bottom-most squares involve the interchangers in the swallowtails $\mathcal{S}_\cD,\mathcal{S}'_\cD$, and the left-/right-most squares fit into the 2-morphism \eqref{writheswap}. 

\medskip

Let $\mathcal{K}_\cD$ denote the 2-morphism we obtain from the above diamond by "gluing" its left and right sides with \eqref{writheswap}. By subsequently pasting $\mathcal{S}_\cD,\mathcal{S}'_\cD$ to the top/bottom squares, we obtain another 2-morphism which we denote suggestively by $\mathcal{S}_\cD\circ \mathcal{K}_\cD \circ\mathcal{S}'_\cD$. An adjoint version of this 2-morphism can also be constructed, which involves $\operatorname{cev}_\cD^\dagger,\operatorname{ev}_\cD^\dagger$ and the adjoint of the over-writhe $K_\cD^\dagger$. The triviality of these resulting 2-morphisms,
\begin{equation}
    \mathcal{S}_\cD\circ \mathcal{K}_\cD \circ\mathcal{S}'_\cD = \id_{c_{\cD,\cD^*}},\qquad \mathcal{S}^\dagger_\cD\circ \mathcal{K}^\dagger_\cD \circ \mathcal{S}'^\dagger_\cD = \id_{c_{\cD^*,\cD}} ,\label{swallowking}
\end{equation}
is dubbed \textbf{the swallowking equations}. Since this condition glues the two swallowtail 2-morphisms together, the author has named it after the \textit{rat king} phenomenon, where the tails of a group of rats become entangled. We will use this condition later in \S \ref{hopfswallow} and \S \ref{classicalpivotal}.

\section{Ribbon tensor 2-category $\operatorname{2Rep}(\tilde C;\tilde R)$}\label{ribbon2cat}
Equipped with the structural functors and natural transformations that we have deduced from the braiding and rigidity of $\operatorname{2Rep}(\tilde C;\tilde R)$, we now study how these come together in a compatible manner. Following the classic results \cite{Reshetikhin:1990pr,Woronowicz1988,Majid:1996kd} that representations of quantum groups form ribbon tensor categories, we shall leverage the following to develop a notion of a \textbf{ribbon tensor 2-category} equipped with adjoints. 

The central motivation for this is the following; let $\mathcal{T}$ is the ribbon 2-category of 2-tangles \cite{BAEZ2003705}, then one can define decorated 2-ribbons as {\it ribbon 2-functors}
\begin{equation*}
    \mathcal{T}\rightarrow\operatorname{2Rep}(\tilde C;\tilde R)
\end{equation*}
into $\operatorname{2Rep}(\tilde C;\tilde R)$, which serves as an explicit example of the 2-tangle hypothesis in \cite{BAEZ1996196,baez19982}. This moreover allows us to define the 2-Chern-Simons TQFT $Z_{2CS}$ through a \textit{higher-skein} perspective (cf. \cite{Manolescu2022SkeinLM,Morrison2019InvariantsO4}), where as a functor $Z_{2CS}$ assigns a 3-dimensional manifold the space of embedded 2-ribbons. This idea has been pursued in the sequel \cite{chen2:2025}.



\begin{rmk}
    Strictly speaking, the framework developed in \cite{Chen1:2025?} a priori only leads to the definition of the \textit{discretized} 2-Chern-Simons theory, $Z_{d2CS}$. In this context, 2-tangles are treated as embedded as (2-)graphs into the lattice $\Gamma$, such that its boundaries live on the 1-skeleton $\Gamma^1$. The "true" 2-Chern-Simons TQFT by taking a direct limit over the lattice refinements. However, the idea of the 2-tangle hypothesis is that, given the ribbon 2-functors into $\operatorname{2Rep}(\tilde{C};\tilde R)$,  the 2-Chern-Simons TQFT can also directly be obtained through higher-skein theory.
\end{rmk}

\subsection{Tortile objects; the ribbon balancing}\label{ribbon}
We now finally come to the ribbon balancing/twist. Consider first the braiding structure $c_{\cD^*,\,^*\cD}: \cD^*\boxtimes\,^*\cD\rightarrow\,^*\cD\boxtimes\cD^*$. For the same reason as described in \S \ref{braidduals}, these are associated with the following quantities
\begin{equation*}
    \mathfrak{t}= (-\cdot-)(\tilde{S}^{-1}\otimes\tilde{S})(\tilde R^T)=\frak{t}^l+\frak{t}^r,
\end{equation*}
which comes with its own nudging equations,
\begin{equation*}
    \frak{t}^l\cdot \frak{t}^r=\frak{t}^r\cdot \frak{t}^l
\end{equation*}
following from the compatibility of the cobraiding natural transformation $\tilde R$. We call these, specifically $\frak{t}$, the \textbf{tortile object} of $\tilde C$. Note $\tilde R$ being quasi-Hermitian does {\it not} imply $c_{\cD^*,\,^*\cD}^\dagger \cong c_{\,^*\cD,\cD^*}$, since the tortile object is different from the objects $\nu,\mu$ introduced in \S \ref{braidduals}.

From the braiding map $c_{\cD^*,\,^*\cD}$, we define for each object $\cD$ the \textbf{left-over/right-under balancings}
\begin{align*}
    & \vartheta_\cD= (\overline{\operatorname{ev}}_\cD\boxtimes\cD^*)\circ (\cD\boxtimes c_{\cD^*,\,^*\cD})\circ(\operatorname{cev}_\cD\boxtimes\,^*\cD): \,^*\cD\rightarrow \cD^* \\
    & \bar\vartheta_\cD = (\,^*\cD\boxtimes{\operatorname{ev}}_\cD)\circ (c_{\cD^*,\,^*\cD}\boxtimes\cD)\circ(\cD^*\boxtimes\overline{\operatorname{cev}}_\cD):\cD^*\rightarrow\,^*\cD.
\end{align*}
The naturality of their composites gives, for each $\tilde C$-module functor $F:\cD\rightarrow\cD'$, the balancing 2-morphisms
\[\begin{tikzcd}
	{\,^*\cD'} & {\cD'^*} \\
	{\,^*\cD} & {\cD^*}
	\arrow["{\vartheta_{\cD'}}", from=1-1, to=1-2]
	\arrow["{\,^*F}"', from=1-1, to=2-1]
	\arrow["{F^*}", from=1-2, to=2-2]
	\arrow["{\vartheta_F}"', shorten <=4pt, shorten >=4pt, Rightarrow, from=2-1, to=1-2]
	\arrow["{\vartheta_\cD}"', from=2-1, to=2-2]
\end{tikzcd},\qquad \begin{tikzcd}
	{\cD'^*} & {\,^*\cD'} \\
	{\cD^*} & {\,^*\cD}
	\arrow["{\bar\vartheta_{\cD'}}", from=1-1, to=1-2]
	\arrow["{F^*}"', from=1-1, to=2-1]
	\arrow["{\,^*F}", from=1-2, to=2-2]
	\arrow["{\bar\vartheta_F}"', shorten <=4pt, shorten >=4pt, Rightarrow, from=2-1, to=1-2]
	\arrow["{\bar\vartheta_\cD}"', from=2-1, to=2-2]
\end{tikzcd}.\]
One can show (through tedious but straightforward computations) from \eqref{foldtensor} and \eqref{quadbraid}, as well as \eqref{antpod}, \eqref{bimonoid}, that we have the following {balancing} equations
\begin{align}
    & \vartheta_{\cD\boxtimes\cA}\cong c_{\cD^*,\cA^*}\circ (\vartheta_\cD\boxtimes\vartheta_\cA)\circ c_{\,^*\cA,\,^*\cD},\nonumber\\
    & \bar\vartheta_{\cD\boxtimes\cA}\cong c_{\,^*\cD,\,^*\cA}\circ (\bar\vartheta_\cD\boxtimes\bar\vartheta_\cA)\circ c_{\cA^*,\cD^*}\label{balancing}
\end{align}
for each $\cD,\cA\in\operatorname{2Rep}(\tilde C;\tilde R)$; indeed, \eqref{quadbraid} implies that these 2-morphisms are given by the unitary hexagonators, and so are unitary themselves and therefore invertible.

\begin{rmk}\label{pivottwist}
    Notice $\vartheta_{\cD^*}: \cD\rightarrow(\cD^*)^*$ is precisely the functor comparing $\cD$ and its double-dual mentioned in {\it Remark \ref{warning}}. It is then clear that the 2-categorical dimension $\mathfrak{Dim}(\cD)$ admits an action only by the centralizer subcategory $C_{\operatorname{End}(\cD)}(\vartheta)$, which consist of functors $F\in\operatorname{End}(\cD)$ that commute with $\vartheta_{\cD^*}$, and natural transformations that commute with the 2-Drinfel'd modification $\omega_{\cD}$ (which we shall introduce soon). The issue raised in {\it Warning 2.2.5} of \cite{Douglas:2018} can thus be circumvented if $C_{\operatorname{End}(\cD)}(\vartheta)\simeq\mathsf{Hilb}$ is trivial. We call braided rigid tensor 2-categories with this property \textbf{maximally imbalanced}.
\end{rmk}

\subsubsection{2-Drinfel'd modifications}
Recall $\tilde S^2,\tilde S^{-2}:\tilde C\rightarrow\tilde C$ are by hypothesis monoidal autoequivalences. By combining \eqref{dualintertwine} and \textit{Remark \ref{preduals}}, we construct invertible 2-morphisms such that
\begin{equation}
    \overline{\operatorname{c/ev}}_\cD^{**}\Rightarrow \operatorname{c/ev}_{\cD^*},\qquad \,^{**}\operatorname{c/ev}_\cD\Rightarrow \overline{\operatorname{c/ev}}_{\,^*\cD},\label{dualinside}
\end{equation}
where "$\operatorname{c/ev}$" means either $\operatorname{ev}$ or $\operatorname{cev}$.
\begin{proposition}\label{2drinfeld}
     There are \textbf{2-Drinfel'd modifications}
        \begin{equation*}
        \omega_\cD: \bar\vartheta_\cD^*\Rightarrow \vartheta_{\cD^*},\qquad \bar\omega_\cD: \,^*\vartheta_{\cD}\Rightarrow \bar\vartheta_{\,^*\cD},
    \end{equation*}
    which witness the homotopy between left-over and right-under balancings upon a \textit{reversal} of the framing.
\end{proposition}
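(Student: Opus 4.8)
The plan is to follow the paper's recurring strategy for comparing two parallel $\tilde C$-module functors: exhibit that $\bar\vartheta_\cD^*$ and $\vartheta_{\cD^*}$ agree on objects and carry the same $\cE$-structure \eqref{Vstruct}, whence a canonical invertible module natural transformation between them exists. First I would check the two functors are indeed parallel. Using that $\tilde S$ is an equivalence, the pre-duality identifications $\,^*\cD^*\cong\cD$ and $(\,^*\cD)^*\cong\cD$ from \textit{Remark \ref{preduals}} turn $\bar\vartheta_\cD^*\colon(\,^*\cD)^*\to(\cD^*)^*$ and $\vartheta_{\cD^*}\colon\,^*(\cD^*)\to(\cD^*)^*$ into functors $\cD\to\cD^{**}$, where the double dual $\cD^{**}$ is governed by the monoidal autoequivalence $\tilde S^2$ (cf. \textit{Remark \ref{radf}}). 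This is exactly the categorified Drinfel'd element picture, with $\omega_\cD$ playing the role of the witness that $\tilde S^2$ is ``inner'' on the balancings.

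For the object-level agreement I would expand $\bar\vartheta_\cD^*$ using the mate rules: the right-mate $(-)^*$ reverses the order of $\circ$ and swaps the $\boxtimes$-factors, so the mate of $\bar\vartheta_\cD=(\,^*\cD\boxtimes\operatorname{ev}_\cD)\circ(c_{\cD^*,\,^*\cD}\boxtimes\cD)\circ(\cD^*\boxtimes\overline{\operatorname{cev}}_\cD)$ is the reverse composite of the mates of its factors, distributed across $\boxtimes$ via \eqref{foldtensor}. I would then rewrite each fold-factor: \eqref{dualintertwine} gives $\operatorname{ev}_\cD^*\cong\operatorname{cev}_{\cD^*}$, \textit{Proposition \ref{dualevals}} together with the double-dual comparison 2-morphisms \eqref{dualinside} converts the barred fold $\overline{\operatorname{cev}}_\cD$ into the unbarred fold of $\cD^*$, and the left-over datum $(\overline{\operatorname{ev}},\operatorname{cev})$ of $\vartheta_{\cD^*}$ emerges from the right-under datum $(\operatorname{ev},\overline{\operatorname{cev}})$ of $\bar\vartheta_\cD$ precisely under this interchange. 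Every factor then matches the corresponding factor of $\vartheta_{\cD^*}$ read off directly from its definition, except for the braiding.

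The hard part will be the braiding factor. I must identify the mate $(c_{\cD^*,\,^*\cD})^*$ with the braiding $c_{(\cD^*)^*,\,^*(\cD^*)}\cong c_{\cD^{**},\cD}$ appearing in $\vartheta_{\cD^*}$. This is not formal: by \S\ref{braidduals} the braiding on duals is controlled by the tortile object $\mathfrak{t}=(-\cdot-)(\tilde S^{-1}\otimes\tilde S)(\tilde R^T)$, and passing to the mate applies a further antipode, conjugating $\mathfrak{t}$ by $\tilde S$. I would show, using the naturality of the cobraiding $\tilde R$, the 2-Yang-Baxter relations \eqref{2yb}, and the quadruple-braiding compatibility \eqref{quadbraid} (whose hexagonators $\Omega$ are unitary, hence invertible), that the $\tilde S$-conjugate of the tortile datum of $c_{\cD^*,\,^*\cD}$ coincides with the tortile datum of $c_{\cD^{**},\cD}$; the monoidal autoequivalence property of $\tilde S^2$ is exactly what guarantees this identification closes up.

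Once both the object values and the fold/braiding data are matched, the $\cE$-structure equality is a computation with the antipode axioms \eqref{antpod} and counitality, entirely parallel to the proof of \textit{Proposition \ref{mates}}: both $\bar\vartheta_\cD^*$ and $\vartheta_{\cD^*}$ inherit their edge-decoration from $\tilde S_1$-images of the same underlying data, so they carry the same $\cE$-structure. This produces the canonical invertible module modification $\omega_\cD$, invertibility being automatic since a module natural transformation witnessing equality of $\cE$-structures is an isomorphism. The barred modification $\bar\omega_\cD\colon\,^*\vartheta_\cD\Rightarrow\bar\vartheta_{\,^*\cD}$ follows by the identical argument with $\tilde S$ replaced by $\tilde S^{-1}$ and $\operatorname{c/ev}\leftrightarrow\overline{\operatorname{c/ev}}$ interchanged. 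The geometric ``framing reversal'' is encoded precisely in the $(\overline{\operatorname{ev}},\operatorname{cev})\leftrightarrow(\operatorname{ev},\overline{\operatorname{cev}})$ swap effected by the mate.
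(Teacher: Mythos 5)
Your proposal is correct and follows essentially the same route as the paper: expand the mate $\bar\vartheta_\cD^*$ factor by factor, convert the fold mates via \eqref{dualintertwine}, \textbf{Proposition \ref{dualevals}}, \eqref{dualinside} and \textit{Remark \ref{preduals}}, and handle the braiding factor by showing the antipode carries the tortile object $\mathfrak{t}$ to the tortile datum $(-\cdot-)(\tilde S^2\otimes 1)\tilde R$ governing $c_{(\cD^*)^*,\cD}$, which is exactly the paper's key step \eqref{dualbraid}. The only cosmetic difference is that you frame the conclusion as an $\cE$-structure comparison in the style of \textbf{Proposition \ref{mates}}, whereas the paper simply presents $\omega_\cD$ as the explicit composite of these previously constructed invertible 2-morphisms; both are sound.
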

\begin{proof}
Recall the (horizontal) antipode $\tilde S:\tilde C\rightarrow\tilde C^{\text{m-op},\text{c-op}}$ is a (strict) op-monoidal functor $\tilde S\circ (-\cdot -) = (-\cdot -)^\text{op}\circ (\tilde S\otimes\tilde S)$. Its adjunctions $\tilde S^{-1}\circ \tilde S\cong 1_{\tilde C} \cong \tilde S\circ\tilde S^{-1}$ lead to the following identifications
\begin{equation*}
    \tilde{S}\mathfrak{t}\cong (-\cdot -)(\tilde S^2\otimes 1)\tilde R,\qquad S^{-1}\mathfrak{t}\cong (-\cdot-)(1\otimes\tilde S^{-2})\tilde R
\end{equation*}
on the tortile objects. This in turn induces the invertible (unitary) 2-morphisms
\begin{equation}
    c_{-^*,-^*}: c_{\cD^*,\,^*\cD}^*\Rightarrow c_{(\cD^*)^*,\cD},\qquad c_{\,^*-,\,^*-}:\,^*c_{\cD^*,\,^*\cD}\Rightarrow c_{\cD,\,^*(\,^*\cD)},\label{dualbraid}
\end{equation}
corresponding to the braiding on the the duals $-^*$ or the pre-duals $\,^*-$. 

By horizontally composing $c_{-^*,-^*}$ with the 2-morphisms in {\bf Proposition \ref{dualevals}}, we have with \eqref{dualinside},
\begin{align*}
    \bar\vartheta_\cD^* &= \big(\overline{\operatorname{cev}}^*_\cD\boxtimes (\cD^*)^*\big)\circ (\cD^*\boxtimes c_{\cD^*,\,^*\cD}^*) \circ (\operatorname{ev}_\cD^*\boxtimes\cD) \\
    &\Rightarrow \big(\operatorname{ev}_\cD\boxtimes(\cD^*)^*\big)\circ(\cD^*\boxtimes c_{(\cD^*)^*,\cD})\circ (\overline{\operatorname{cev}}_\cD^{**}\boxtimes\cD) \\
    &\cong \big(\overline{\operatorname{ev}}_{\cD^*}\boxtimes(\cD^*)^*\big)\circ (\cD^*\boxtimes c_{(\cD^*)^*,\cD})\circ (\operatorname{cev}_{\cD^*}\boxtimes\cD) = \vartheta_{\cD^*},
\end{align*}
where we have used the equivalences mentioned in {\it Remark \ref{preduals}}. This defines $\omega_\cD$. The other 2-morphism $\bar\omega_\cD$ can be constructed in an analogous way. 
\end{proof}
\noindent Moreover, the balancing \eqref{balancing} and \eqref{dualbraid} allow us to achieve the identifications
\begin{equation*}
    \omega_{\cD\boxtimes\cA} = c_{-^*,-^*}\circ (\omega_\cD\boxtimes\omega_\cA)\circ c_{-^*,-^*},\qquad \bar\omega_{\cD\boxtimes\cA} = c_{\,^*-,\,^*-}\circ (\bar\omega_\cD\boxtimes\bar\omega_\cA)\circ c_{\,^*-,\,^*-}
\end{equation*}
for each $\cD,\cA$, hence the 2-Drinfel'd modifications are compatible with the tensor product.

If we suppose for the moment that we have a monoidal pseudonatural isomorphism $-^{****}\simeq \id$ on $\operatorname{2Rep}(\tilde C;\tilde R)$, then the invertibility of \eqref{dualinside} gives rise to 
    \begin{equation*}
         \overline{\operatorname{c/ev}}_\cD \cong  (\overline{\operatorname{c/ev}})^{****}_\cD\cong \operatorname{c/ev}_{\cD^*}^{**},
    \end{equation*}
which in essence identifies barred quantities as a "half-way" to the quadruple dual. We will prove the triviality of the quadruple dual in a separate paper.

\begin{rmk}\label{categoricalS4}
   An alternative proof of $-^{****}\simeq \id$ is to leverage a "categorical" Radford $S^4$-formula\footnote{Recall that the usual Radford $S^4$-formula \cite{Delvaux2006ANO} states that the action of $S^4$ in a finite-dimensional Hopf algebra is an inner automorphism by grouplike elements.} proven in \cite{Etingof:2004}: there exists a \textit{distinguished invertible object} $c\in C$ in a finite tensor category $C$ with a natural isomorphism $\delta: -^{**} \Rightarrow c\otimes \,^{**}-\otimes c^{-1}$ --- the idea is that the rigid duality $-^*$ gives rise to an antipode functor on the dual $C^*$. Thus it is reasonable to expect a categorical Radford $S^4$-formula for the Hopf category $\tilde{C}=\tilde C$ to hold. One subtlety is that $\tilde C$ itself is not finite, but it can be made unimodular with the use of a \textit{Haar cointegral} (see \S 5.2.3 in \cite{chen2:2025}).
\end{rmk}   

Due to the naturality of the ribbon balancing functors, an immediate corollary is the following.
\begin{corollary}\label{generalC8}
    When $\omega_\cD,\bar\omega_\cD$ are invertible, then for all $\tilde C$-module functors $F:\cD\rightarrow\cD'$ we have
    \begin{equation}
    \vartheta_{F^*} = \bar\vartheta_F^*,\qquad \vartheta_{\,^* F} = \,^*\bar\vartheta_F.\label{rigidC8}
\end{equation}
As such, $\operatorname{End}(\cI)$ is a ribbon tensor category.
\end{corollary}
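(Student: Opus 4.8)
The plan is to read \eqref{rigidC8} directly off the modification axiom satisfied by the 2-Drinfel'd modifications of \textbf{Proposition \ref{2drinfeld}}. First I would observe that the assignments $\cD\mapsto\bar\vartheta_\cD^*$ and $\cD\mapsto\vartheta_{\cD^*}$ are the component 1-morphisms $\cD\to(\cD^*)^*$ of two pseudonatural transformations $\id\Rightarrow(-^*)^*$ into the double-dual pseudofunctor. Using the identifications $\,^*(\cD^*)\cong\cD\cong(\,^*\cD)^*$ — which hold because $-^*$ and $\,^*-$ are built from the mutually inverse $\tilde S,\tilde S^{-1}$, cf. \textit{Remark \ref{preduals}} — one checks that the structure 2-cells of these two pseudonatural transformations are precisely $\bar\vartheta_F^*$ and $\vartheta_{F^*}$, each of the common type $(-)_{\cD'}\circ F\Rightarrow F^{**}\circ(-)_{\cD}$. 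The modification $\omega$ with components $\omega_\cD:\bar\vartheta_\cD^*\Rightarrow\vartheta_{\cD^*}$ then satisfies the coherence square $(F^{**}\circ\omega_\cD)\bullet\bar\vartheta_F^*=\vartheta_{F^*}\bullet(\omega_{\cD'}\circ F)$. When $\omega_\cD,\omega_{\cD'}$ are invertible I would use them to identify the parallel 1-morphisms $\bar\vartheta_\cD^*\cong\vartheta_{\cD^*}$, under which this square collapses to $\vartheta_{F^*}=\bar\vartheta_F^*$. The second identity $\vartheta_{\,^*F}=\,^*\bar\vartheta_F$ follows verbatim from the invertibility of the companion modification $\bar\omega_\cD:\,^*\vartheta_\cD\Rightarrow\bar\vartheta_{\,^*\cD}$.

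For the ribbon claim I would restrict all structure to the unit. Since $\cI=\cI^*=\,^*\cI$ on the nose and $\operatorname{cev}_\cI=\operatorname{ev}_\cI=1_\cI$ (likewise for the barred folds), the defining formula for $\vartheta_\cD$ gives $\vartheta_\cI\cong1_\cI\cong\bar\vartheta_\cI$. Consequently, for an object $F\in\operatorname{End}(\cI)$ the balancing 2-cell $\vartheta_F:\vartheta_\cI\circ\,^*F\Rightarrow F^*\circ\vartheta_\cI$ reduces to a 2-morphism $\,^*F\Rightarrow F^*$; composing with the canonical identification $F^*\cong\,^*F$ available in the symmetric category $\operatorname{End}(\cI)$ (\textbf{Proposition \ref{symmetric}}) and transporting through the rigid duality yields a natural automorphism $\theta_F:F\Rightarrow F$. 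This is the candidate ribbon twist.

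It then remains to verify the tortile axioms for $\theta$. I would obtain the twist-on-tensor law from the balancing equations \eqref{balancing}, which express $\vartheta_{\cD\boxtimes\cA}$ through the (symmetric) hexagonators and $\vartheta_\cD,\vartheta_\cA$; triviality on the unit $\theta_{1_\cI}=\id$ is immediate from $\vartheta_\cI\cong1_\cI$; and the compatibility with duals $\theta_{F^*}=(\theta_F)^*$ is exactly \eqref{rigidC8} specialized to $\cD=\cI$, where the symmetric identification $\bar\vartheta\cong\vartheta$ turns $\vartheta_{F^*}=\bar\vartheta_F^*$ into $\theta_{F^*}=\theta_F^*$. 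Together with the symmetric braiding and the rigid duality inherited from $\operatorname{2Rep}(\tilde C;\tilde R)$, these axioms make $\operatorname{End}(\cI)$ a ribbon tensor category.

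I expect the main obstacle to be bookkeeping rather than conceptual. On the \eqref{rigidC8} side, verifying that the modification square genuinely collapses to a strict equality requires pinning down the pseudonaturality conventions and confirming that the identifications $\,^*(F^*)\cong F\cong(\,^*F)^*$ are compatible with the whiskerings appearing in the modification axiom. On the $\operatorname{End}(\cI)$ side, the delicate point is showing that the reduced 2-cell $\,^*F\Rightarrow F^*$ assembles into a \emph{well-defined} natural automorphism of each object (rather than merely an unstructured family) and that its transport through duality is compatible with \eqref{balancing}, so that the twist-on-tensor axiom holds on the nose rather than only up to a further coherence isomorphism.
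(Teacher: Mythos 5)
Your proposal is correct and takes essentially the same route as the paper: the paper's own argument is the one-line observation that \eqref{rigidC8} follows "from the naturality of the ribbon balancing functors" together with the invertible 2-Drinfel'd modifications of \textbf{Proposition \ref{2drinfeld}}, which is precisely your modification-axiom square collapsing under invertibility of $\omega_\cD,\bar\omega_\cD$, and the ribbon claim for $\operatorname{End}(\cI)$ is likewise left as the immediate specialization to the unit that you spell out. The one bookkeeping point worth noting (and which you already flag) is that with the paper's conventions $\bar\vartheta_F^*$ and $\vartheta_{F^*}$ are oppositely oriented 2-cells, so the "equality" in \eqref{rigidC8} should be read as the $\omega$-conjugation square after inverting one of the (invertible) balancing 2-morphisms.
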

\noindent Notice these are rigid dagger generalizations of the C8 condition in \cite{Douglas:2018}, Definition 2.2.4.

\begin{rmk}
    The reason we call $\omega_\cD,\bar\omega_\cD$ the "2-Drinfel'd modifications" is the following. In the scenario where $\vartheta,\bar\vartheta$ are genuine "twists" --- ie. (pseudo)natural transformations on the identity on $\operatorname{2Rep}(\tilde C;\tilde R)$ (which cannot happen unless the duality is involutive) --- then  $\omega=\bar\omega^{-1}$ is an invertible modification between these pseudonatural twists. As such, they serve to "change" the pivotal structures on rigid 2-categories, and hence appears in \textit{Remark \ref{pivottwist}} as part of the notion of "maximal imbalancing".
\end{rmk}


\subsubsection{Reidemeister II: double-twist cancellations and the belt-buckle move}\label{beltbuckle}
In the rigid dagger setting, each of the composites in the definition of the ribbon balancings $\vartheta_\cD,\bar\vartheta_\cD$ are planar-unitary, and hence are themselves planar-unitary. As such they admit folds against their adjoints. The folds
\begin{equation*}
    e_{\vartheta_\cD}:\vartheta_\cD^\dagger\circ\vartheta_\cD\Rightarrow 1_{\,^*\cD},\qquad e_{\bar\vartheta_\cD}: \bar\vartheta_\cD^\dagger\circ\bar\vartheta_\cD\Rightarrow 1_{\cD^*},
\end{equation*}
in particular, are known as the \textbf{Kauffman double twist cancellations}; see fig. 55 (d) in \cite{Barrett_2024}. These adjunctions witness the cancellation of twists that live on the {\it same side} of the tangle. We call such twist cancellations "of the first type". 

As the name suggests, there is another type of double twist cancellation, in which the twists lie on different sides of the tangle. 
\begin{proposition}
    Suppose $\tilde R$ is quasi-Hemritian \eqref{quasiherm}. There are 2-morphisms
    \begin{equation*}
    \kappa_\cD:\big(\cD\boxtimes(\vartheta_\cD\circ\bar\vartheta_\cD)\big)\circ\operatorname{cev}_\cD\Rightarrow \operatorname{cev}_\cD,\qquad \bar\kappa_\cD:\big((\bar\vartheta_\cD\circ \vartheta_\cD)\boxtimes\cD\big)\circ\overline{\operatorname{cev}}_\cD\Rightarrow\overline{\operatorname{cev}}_\cD
    \end{equation*}
    which witness the null-homotopy of distinct twist types on a fold.
\end{proposition}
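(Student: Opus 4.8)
The plan is to produce $\kappa_\cD$ by the same mechanism used to define the writhings $K_\cD,\bar K_\cD$ in \S\ref{writhe}: show that the source $1$-morphism $\big(\cD\boxtimes(\vartheta_\cD\circ\bar\vartheta_\cD)\big)\circ\operatorname{cev}_\cD$ carries the same $\tilde C$-module ($\cE$-)structure \eqref{Vstruct} as $\operatorname{cev}_\cD$, whence a $\tilde C$-module $2$-morphism between them exists, precisely as in the proofs of \textbf{Proposition~\ref{mates}} and \textbf{Proposition~\ref{dualevals}}. First I would unfold the two balancings, writing the composite as a string of two tortile braidings $c_{\cD^*,\,^*\cD}$ sandwiched between the folds $\operatorname{cev}_\cD,\overline{\operatorname{ev}}_\cD$ coming from $\vartheta_\cD$ and $\overline{\operatorname{cev}}_\cD,\operatorname{ev}_\cD$ coming from $\bar\vartheta_\cD$, the whole precomposed with $\operatorname{cev}_\cD$. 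Since $\operatorname{cev}_\cD$ itself inherits the trivial $\cE$-structure of $1_\cI$ by the antipode axioms \eqref{antpod}, and the folds entering $\vartheta_\cD,\bar\vartheta_\cD$ likewise carry trivial structure, the only nontrivial contribution to the $\cE$-structure of the source comes from the two copies of the tortile object $\mathfrak{t}=(-\cdot-)(\tilde S^{-1}\otimes\tilde S)(\tilde R^T)$ governing the braidings.

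The crux is then to show these two tortile contributions annihilate one another. Here I would invoke quasi-Hermiticity \eqref{quasiherm}: transported through $\tilde S$, the identities $\bar\nu=\mu^T,\ \bar\mu=\nu^T$ yield the $2$-isomorphism $c_{\cD^*,\cD}\cong c_{\cD,\cD^*}^\dagger$ exploited in \textbf{Proposition~\ref{writheadj}}, together with the companion comparison $\tilde S\mathfrak{t}$ versus $\tilde S^{-1}\mathfrak{t}$ furnished by \eqref{dualbraid} in the proof of \textbf{Proposition~\ref{2drinfeld}}. Combined with the antipode/quasitriangularity relations \eqref{2ndreidemeister} and \eqref{antpod}, and after transporting across $\operatorname{cev}_\cD$ by the module associator, this forces the tortile factor of the over-twist $\vartheta_\cD$ to cancel that of the under-twist $\bar\vartheta_\cD$, leaving the trivial $\cE$-structure of $\operatorname{cev}_\cD$. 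Geometrically this realizes the belt-buckle move — the full twist $\vartheta_\cD\circ\bar\vartheta_\cD$ on the $\cD^*$ strand is slid across the cup and undone — which I would exhibit concretely by composing $K_\cD,\bar K_\cD$ with the braiding-over-fold trivializations $c_{\operatorname{cev}_\cD,\cA}$ of \S\ref{dualitybraid} and straightening the residual cusps with the snakerators $\varphi_\cD,\varrho_\cD$ \eqref{cusps}. The resulting canonical module $2$-morphism is $\kappa_\cD$; then $\bar\kappa_\cD$ follows verbatim with $\operatorname{ev},\operatorname{cev}$ replaced by their barred versions and $\cD^*$ by $\,^*\cD$, using the identifications of \textit{Remark~\ref{preduals}}.

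The main obstacle I anticipate is exactly the warning recorded just after the definition of $\mathfrak{t}$: because $c_{\cD^*,\,^*\cD}^\dagger\not\cong c_{\,^*\cD,\cD^*}$, the two tortile braidings of $\vartheta_\cD\circ\bar\vartheta_\cD$ cannot be cancelled by a naive dagger-flip, so quasi-Hermiticity does not apply to them directly. The cancellation must instead be routed through the double-dual comparison — that is, through the $2$-Drinfel'd modifications $\omega_\cD,\bar\omega_\cD$ of \textbf{Proposition~\ref{2drinfeld}} and the relation $\vartheta_{F^*}=\bar\vartheta_F^*$ of \textbf{Corollary~\ref{generalC8}} — so that $\vartheta_\cD$ and $\bar\vartheta_\cD$ are matched only after a framing reversal, the regime in which quasi-Hermiticity genuinely does the work. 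Keeping the predual/dual bookkeeping ($\,^*\cD$ versus $\cD^*$) consistent through the interchangers and associators, and checking the homotopy against the swallowtail/swallowking constraints \eqref{swallowking}, is where the bulk of the lengthy but routine diagrammatic verification will lie.
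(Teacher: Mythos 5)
Your overall geometric strategy --- slide the two twists over the cup using the writhings and the braiding-over-fold trivializations, then cancel --- is indeed the paper's strategy: it is exactly the content of the belt-buckle moves $\mathrm{P}_\cD,\mathrm{Q}_\cD$ introduced inside the proof, and your list of ingredients for that stage ($\bar K_\cD$, the hexagonators, $c_{\operatorname{cev}_\cD,\cA}$-type trivializations, interchangers) is essentially correct. The genuine gap is at the crux: how the two slid twists annihilate. The paper never cancels the two tortile braidings $c_{\cD^*,\,^*\cD}$ against each other, and it does not route anything through the 2-Drinfel'd modifications, \textbf{Corollary \ref{generalC8}}, or the swallowking equations. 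Instead, $\mathrm{P}_\cD$ and $\mathrm{Q}_\cD$ convert the twists into the \emph{leg}-braidings $c_{\,^*\cD,\cD}$ and $c_{\cD,\cD^*}$ sitting over the fold; after one application of braiding naturality ($c_{\cD,\vartheta_\cD}$) the composite becomes $c_{\cD^*,\cD}\circ c_{\cD,\cD^*}\circ\operatorname{cev}_\cD$, and \emph{these} braidings are governed by the objects $\nu,\mu$ --- precisely the pair related by quasi-Hermiticity \eqref{quasiherm} --- so $c_{\cD^*,\cD}\cong c_{\cD,\cD^*}^\dagger$ and the adjunction fold $e_{c_{\cD,\cD^*}}$ of \eqref{secondreide} (a Reidemeister II move) finishes the proof. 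Your proposed fix, matching $\vartheta_\cD$ and $\bar\vartheta_\cD$ only ``after a framing reversal'' via $\omega_\cD,\bar\omega_\cD$, does not produce the required 2-morphism: $\omega_\cD$ compares $\bar\vartheta_\cD^*$ with $\vartheta_{\cD^*}$, a statement about dual-mates of the balancings living on $\cD^{**}$, and says nothing about the composite $\vartheta_\cD\circ\bar\vartheta_\cD$ being null-homotopic on a fold.

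A secondary issue: your opening mechanism --- ``source and target have the same $\cE$-structure, hence a module 2-morphism exists'' --- is the device the paper uses for the writhings, but it cannot carry this proposition by itself. Both sides are module functors $\cI\rightarrow\cD\boxtimes\cD^*$, and the substantive question is whether they agree on the underlying object, i.e.\ whether $\cD\boxtimes(\vartheta_\cD\circ\bar\vartheta_\cD)$ acts trivially on the image of $\operatorname{cev}_\cD$; that is essentially the statement to be proven, and the $\cE$-structure comparison does not decide it. The explicit belt-buckle construction is where the content lies; you should follow your own sketch of it through to the double braiding and invoke \eqref{secondreide}, rather than detouring through the double-dual comparison and the swallowking constraints, neither of which is needed here.
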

\begin{proof}
    To prove this, we need to introduce the so-called \textbf{belt-buckle moves}. Geometrically, these are isotopies which "drag" twists on the \textit{outside} (and only the outside!) of a fold to the top. They are implemented by the following 2-morphisms
\begin{equation*}
    \mathrm{P}_\cD: (\cD\boxtimes\bar\vartheta_\cD)\circ\operatorname{cev}_\cD\Rightarrow c_{\,^*\cD,\cD}\circ\overline{\operatorname{cev}}_{\cD},\qquad \mathrm{Q}_\cD: (\vartheta_{\cD}\boxtimes \cD)\circ \overline{\operatorname{cev}}_\cD\Rightarrow c_{\cD,\cD^*}\circ \operatorname{cev}_\cD.
\end{equation*}
We shall describe the construction of $\mathrm{P}_\cD$ in our current context; $\mathrm{Q}_\cD$ can be obtained in a similar manner. From the quasi-Hermitian hypothesis $c_{\,^*\cD,\cD}^\dagger\cong c_{\cD,\,^*\cD}$, we have an adjunction
\[e_{c_{\,^*\cD,\cD}}\cD^*\cD: \begin{tikzcd}
	{(\,^*\cD)\cD\cD^*\cD} && {\cD(\,^*\cD)\cD^*\cD}
	\arrow[""{name=0, anchor=center, inner sep=0}, "{c_{\,^*\cD,\cD}\cD^*\cD}"', curve={height=12pt}, from=1-1, to=1-3]
	\arrow[""{name=1, anchor=center, inner sep=0}, "{c_{\cD, \,^*\cD}\cD^*\cD}"', curve={height=12pt}, from=1-3, to=1-1]
	\arrow["\vdash"{description}, draw=none, shorten <=3pt, shorten >=3pt, Rightarrow, from=0, to=1]
\end{tikzcd}: \iota_{c_{\,^*\cD,\cD}}\cD^*\cD.\]
The 2-morphism $\mathrm{P}_\cD$ is given by the diagram
\[\mathrm{P}_\cD= \begin{tikzcd}
	& \cI \\
	{\cD\cD^*} && {(\,^*\cD)\cD} & {\cD(\,^*\cD)} \\
	& {\cD\cD^*(\,^*\cD)\cD} & {(\,^*\cD)\cD\cD^*\cD} & {\cD(\,^*\cD)\cD^*\cD}
	\arrow[from=1-2, to=2-1]
	\arrow[from=1-2, to=2-3]
	\arrow["{\upsilon_{\overline{\operatorname{cev}}_\cD,\operatorname{cev}_\cD}}", shorten <=6pt, shorten >=6pt, Rightarrow, from=1-2, to=3-2]
	\arrow[from=2-1, to=3-2]
	\arrow[from=2-3, to=2-4]
	\arrow[from=2-3, to=3-2]
	\arrow[""{name=0, anchor=center, inner sep=0}, from=3-2, to=3-3]
	\arrow[""{name=1, anchor=center, inner sep=0}, curve={height=30pt}, from=3-2, to=3-4]
	\arrow[from=3-3, to=2-3]
	\arrow[shorten <=8pt, shorten >=8pt, Rightarrow, from=3-3, to=2-4]
	\arrow[""{name=2, anchor=center, inner sep=0}, draw=none, curve={height=-6pt}, from=3-3, to=3-4]
	\arrow[from=3-4, to=2-4]
	\arrow[""{name=3, anchor=center, inner sep=0}, draw=none,curve={height=-6pt}, from=3-4, to=3-3]
	\arrow[shorten <=6pt, shorten >=6pt, Rightarrow, from=2-3, to=0]
	\arrow["{\Omega_{\,^*\cD|\cD\cD^*}}"', shorten <=3pt, shorten >=1pt, Rightarrow, from=1, to=3-3]
	\arrow["\vdash"{description}, draw=none, from=2, to=3]
\end{tikzcd},\]
where the central triangle is filled by $c_{\,^*\cD\boxtimes\cD,\operatorname{cev}_\cD}\bullet \bar K_\cD'$, and the square to its right is ${\upsilon_{\operatorname{ev}_\cD,c_{\,^*\cD,\cD}}}$. 

Now starting with the functor 
\begin{equation*}
    \big(\cD\boxtimes(\vartheta_\cD\circ\bar\vartheta_\cD)\big)\circ\operatorname{cev}_\cD: \cI \rightarrow \cD\boxtimes\,^*\cD,
\end{equation*}
we can apply a series of 2-morphisms
\begin{align*}
    \big(\cD\boxtimes(\vartheta_\cD\circ\bar\vartheta_\cD)\big)\circ\operatorname{cev}_\cD &\xRightarrow{(\cD\boxtimes\vartheta_\cD)\circ \mathrm{P}_\cD} (\cD\boxtimes\vartheta_\cD)\circ c_{\,^*\cD,\cD}\circ\overline{\operatorname{cev}}_\cD \\
    & \xRightarrow{c_{\cD,\vartheta_\cD}\circ\overline{\operatorname{cev}}_\cD} c_{\cD^*,\cD}\circ (\vartheta_\cD\boxtimes\cD)\circ\overline{\operatorname{cev}}_\cD\xRightarrow{c_{\cD^*,\cD}\circ \mathrm{Q}_\cD}c_{\cD^*,\cD}\circ c_{\cD,\cD^*}\circ\operatorname{cev}_\cD\\
    & \cong c_{\cD,\cD^*}^\dagger\circ c_{\cD,\cD^*}\circ\operatorname{cev}_\cD \xRightarrow{e_{c_{\cD,\cD^*}}\circ\overline{\operatorname{cev}}_\cD} \overline{\operatorname{cev}}_\cD,
\end{align*}
where in the last line we have used the quasi-Hermimticity property $c_{\cD^*,\cD} \cong c_{\cD,\cD^*}^\dagger$. This defines $\kappa_\cD$; the other one $\bar\kappa_\cD$ can be obtained similarly. 
\end{proof}
\noindent The belt-buckle moves $\mathrm{P},\mathrm{Q}$ are also useful for transporting the rigidity conditions introduced in \S \ref{rigidwrithes}, \S \ref{swallowkingeqs} to the ribbon balancings.

There is a priori no reason for the different twist types to be related. Indeed, in the geometric picture, the $2\pi$-rotations of the framing introduced by the left-under $\vartheta_\cD^\dagger$ and right-under $\bar\vartheta_\cD$ balancings are related by a reflection about the vertical line. However, even if the two ribbon balancings do coincide, it still does not mean that they can be trivialized individually. This brings us to the first Reidemseister move mentioned in \textit{Remark \ref{firstreide}}.

\subsubsection{Reidemeister I: half-framed, unframed and self-dual objects}\label{reidemeisterI}
We now prove in this section that, under certain conditions, we can find invertible 2-morphisms which witness the \textbf{first Reidemeister moves}. However, due to the intricate duality structures involved, there are subtleties associated to the constructions --- specifically, we will construct \textit{two} different Reidemeister I witnesses in the following. 

For this section, we will assume $\tilde R$ is quasi-Hermitian \eqref{quasiherm}, and that  the fold-crossing 2-morphisms $H,G$ and the writhes $K,\bar K$, as defined in \S \ref{writhe}, are invertible. We now seek to construct two related, but a priori distinct, versions of the Reidemeister I move. These are invertible 2-morphisms
        \begin{equation*}
    \mathrm{R}_\cD: \bar\vartheta_\cD\Rightarrow (\operatorname{ev}_\cD\boxtimes\,^*\cD)\circ(\cD^*\boxtimes\overline{\operatorname{ev}}_\cD^\dagger),\qquad \mathrm{L}_\cD: (\overline{\operatorname{cev}}_\cD^\dagger\boxtimes\cD^*) \circ (\,^*\cD\boxtimes\operatorname{cev}_\cD)\Rightarrow\vartheta_\cD.
\end{equation*}
Note crucially that quasi-Hermiticity does \textit{not} imply $c_{\cD^*,\,^*\cD}^\dagger\cong c_{\,^*\cD,\cD^*}$!
\begin{enumerate}
    \item {\bf Definition of $\mathrm{R}_\cD$:} by definition,
    \begin{align*}
         \bar\vartheta_\cD &=   (\,^*\cD\boxtimes{\operatorname{ev}}_\cD)\circ (c_{\cD^*,\,^*\cD}\boxtimes\cD)\circ(\cD^*\boxtimes\overline{\operatorname{cev}}_\cD) \\
        & \qquad\xRightarrow{G_{\cD,\,^*\cD}^{-1}\circ (\cD^*\boxtimes\overline{\operatorname{cev}}_\cD)} (\operatorname{ev}_\cD\boxtimes\,^*\cD)\circ (\cD^*\boxtimes c_{\cD,\,^*\cD}^\dagger)\circ (\cD^*\boxtimes\overline{\operatorname{cev}}_\cD) \\
        &\qquad\cong (\operatorname{ev}_\cD\boxtimes\,^*\cD)\circ \big(\cD^*\boxtimes(c_{\,^*\cD,\cD}\circ \operatorname{cev}_{\,^*\cD})\big) \\
        &\qquad\xRightarrow{(\operatorname{ev}_\cD\boxtimes\,^*\cD)\circ (\cD^*\boxtimes K_{\,^*\cD})} (\operatorname{ev}_\cD\boxtimes \,^*\cD)\circ (\cD^*\boxtimes \overline{\operatorname{ev}}_\cD^\dagger),
    \end{align*}
    where in the third line we have used quasi-Hermiticity $c_{\cD,\,^*\cD}^\dagger\cong c_{\,^*\cD,\cD}$, and the equivalences described in \textit{Remark \ref{preduals}}. 
    \item \textbf{Definition of $\mathrm{L}_\cD$:} we begin instead with the adjoint $\vartheta_\cD^\dagger$. Then,
    \begin{align*}
          \vartheta_\cD^\dagger &= (\operatorname{cev}_\cD^\dagger\boxtimes\,^*\cD)\circ(\cD\boxtimes c_{\cD^*,\,^*\cD}^\dagger) \circ (\overline{\operatorname{ev}}_\cD^\dagger\boxtimes\cD^*) \\
        &\qquad\xRightarrow{H_{\cD,\,^*\cD}\circ (\overline{\operatorname{ev}}_\cD^\dagger\boxtimes\cD^*)} (\,^*\cD\boxtimes\operatorname{cev}_\cD^\dagger)\circ(c_{\cD,\,^*\cD}\boxtimes\cD^*)\circ (\overline{\operatorname{ev}}_\cD^\dagger\boxtimes\cD^*) \\
        &\qquad\cong (\,^*\cD\boxtimes\operatorname{cev}_\cD^\dagger)\circ \big((c_{\cD,\,^*\cD}\circ \operatorname{ev}_{\,^*\cD}^\dagger)\boxtimes\cD^*\big) \\
        &\qquad \xRightarrow{(\,^*\cD\boxtimes\operatorname{cev}_\cD^\dagger)\circ (\bar K_{\,^*\cD}^{-1}\boxtimes \cD^*)}(\,^*\cD\boxtimes\operatorname{cev}_\cD^\dagger)\circ (\overline{\operatorname{cev}}_\cD\boxtimes\cD^*),
    \end{align*}
    whence taking the adjunction-mate yields the desired 2-morphism $\mathrm{L}_\cD$.
\end{enumerate}

\begin{proposition}\label{1streidemeister1}
There exist \textbf{double-twist cancellations of the second type} $$m_\cD:\bar\vartheta_\cD\circ\vartheta_\cD\Rightarrow 1_{\,^*\cD},\qquad n_\cD: 1_{\cD^{*}}\Rightarrow \vartheta_\cD\circ \bar\vartheta_\cD$$ as mentioned previously in \S \ref{beltbuckle}.
\end{proposition}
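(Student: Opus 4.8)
The plan is to obtain $m_\cD$ and $n_\cD$ by stripping the folds off the \emph{fold-level} twist cancellations $\kappa_\cD,\bar\kappa_\cD$ produced in the preceding proposition, contracting the auxiliary coevaluation against a snakerator. Throughout I keep the standing hypotheses inherited from \S \ref{beltbuckle}: $\tilde R$ is quasi-Hermitian \eqref{quasiherm}, and all folds and snakerators are planar-unitary. In particular the cusps $\varphi_\cD,\varrho_\cD$ of \eqref{cusps}, together with their left-dual counterparts $\overline\varphi_\cD,\overline\varrho_\cD$ — obtained by transporting \eqref{cusps} through \textit{Remark \ref{preduals}} — are invertible, as is the interchanger $\upsilon$.

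For $m_\cD$ I write $g=\bar\vartheta_\cD\circ\vartheta_\cD:\,^*\cD\to\,^*\cD$ and begin from the left-dual snake. Whiskering $\overline\varphi_\cD^{-1}$ by $g$ gives
\[
\bar\vartheta_\cD\circ\vartheta_\cD \xRightarrow{g\circ\overline\varphi_\cD^{-1}} g\circ(\,^*\cD\boxtimes\overline{\operatorname{ev}}_\cD)\circ(\overline{\operatorname{cev}}_\cD\boxtimes\,^*\cD).
\]
Using $\upsilon$ to slide $g$ off the output $\,^*\cD$ strand, past the spectator leg of $\,^*\cD\boxtimes\overline{\operatorname{ev}}_\cD$, and onto the coevaluation, the target is identified with
\[
(\,^*\cD\boxtimes\overline{\operatorname{ev}}_\cD)\circ\big(\big(((\bar\vartheta_\cD\circ\vartheta_\cD)\boxtimes\cD)\circ\overline{\operatorname{cev}}_\cD\big)\boxtimes\,^*\cD\big),
\]
whose inner $1$-morphism is exactly the domain of $\bar\kappa_\cD$. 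Whiskering $\bar\kappa_\cD\boxtimes\,^*\cD$ by $\,^*\cD\boxtimes\overline{\operatorname{ev}}_\cD$ returns the domain $(\,^*\cD\boxtimes\overline{\operatorname{ev}}_\cD)\circ(\overline{\operatorname{cev}}_\cD\boxtimes\,^*\cD)$ of $\overline\varphi_\cD$, and a final application of $\overline\varphi_\cD$ lands on $1_{\,^*\cD}$. The vertical composite of these four steps is the desired $m_\cD:\bar\vartheta_\cD\circ\vartheta_\cD\Rightarrow 1_{\,^*\cD}$.

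The construction of $n_\cD$ is the mirror argument on the right-dual data. With $h=\vartheta_\cD\circ\bar\vartheta_\cD:\cD^*\to\cD^*$, I start from $\varrho_\cD:1_{\cD^*}\Rightarrow(\operatorname{ev}_\cD\boxtimes\cD^*)\circ(\cD^*\boxtimes\operatorname{cev}_\cD)$, insert $\cD^*\boxtimes\kappa_\cD^{-1}$ to replace $\operatorname{cev}_\cD$ by $(\cD\boxtimes h)\circ\operatorname{cev}_\cD$ (recall $\kappa_\cD:(\cD\boxtimes h)\circ\operatorname{cev}_\cD\Rightarrow\operatorname{cev}_\cD$), slide $h$ out onto the output $\cD^*$ by $\upsilon$, and close up with $h\circ\varrho_\cD^{-1}$, giving $n_\cD:1_{\cD^*}\Rightarrow\vartheta_\cD\circ\bar\vartheta_\cD$.

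I expect the main obstacle to be purely coherence bookkeeping rather than any single identity: one must track the interchangers $\upsilon$ and associators that (i) transport $g$ (resp.\ $h$) from the spectator strand through the evaluation leg onto the coevaluation, and (ii) ensure the whiskered $\bar\kappa_\cD$ (resp.\ $\kappa_\cD^{-1}$) lands \emph{on the nose} on the snakerator expression rather than on a nudged variant. Because the left- and right-duals differ, some care is needed that the two $\,^*\cD$ strands in the snake are genuinely the left-dual object and not $(\cD^*)$; this is exactly what the identifications of \textit{Remark \ref{preduals}} license, justifying the use of $\overline{\operatorname{ev}}_\cD,\overline{\operatorname{cev}}_\cD$ as a coherent snake pair. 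As a cross-check one can instead expand $\vartheta_\cD,\bar\vartheta_\cD$ directly through the Reidemeister-I witnesses $\mathrm{L}_\cD,\mathrm{R}_\cD$ and recognise the resulting fold-composite as a double zig-zag collapsing under $\varphi_\cD$ and the fold-on-folds $e_{\operatorname{ev}_\cD},\iota_{\operatorname{ev}_\cD}$; I expect the agreement of the two routes to follow from the rigid writhing conditions \eqref{rigidwrithe1}, \eqref{rigidwrithe2}.
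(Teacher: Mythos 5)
Your primary route is genuinely different from the paper's: the paper takes what you relegate to a ``cross-check'' as the \emph{only} construction, setting $m_\cD = \,^*\varphi_\cD^\dagger\bullet(\,^*\cD\boxtimes\varphi_\cD\boxtimes\,^*\cD)\bullet(\mathrm{R}_\cD\circ\mathrm{L}_\cD^{-1})$ and $n_\cD=(\mathrm{L}_\cD\circ\mathrm{R}_\cD^{-1})\bullet(\cD^*\boxtimes\,^*\varrho_\cD^\dagger\boxtimes\cD^*)\bullet\varrho_\cD$, i.e.\ expanding the double twist through the invertible Reidemeister-I witnesses and collapsing the resulting double zig-zag with two snakerators. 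The decisive gap in your route is the use of $\kappa_\cD^{-1}$ in the construction of $n_\cD$. The 2-morphism $\kappa_\cD$ is not invertible, and no invertibility is claimed for it: it is built from the belt-buckle moves $\mathrm{P}_\cD,\mathrm{Q}_\cD$, which factor through $\bar K_\cD'$ and $K_\cD'$, which in turn contain the fold-on-fold 2-morphisms $e_{\operatorname{ev}_\cD}$ and $\iota_{\operatorname{cev}_\cD}$ of \S\ref{2catdimension} --- the saddle and birth-of-a-circle generators, which are counits/units of genuine (non-equivalence) adjunctions and are never invertible. Since both $\kappa_\cD$ and $\bar\kappa_\cD$ point in the \emph{cancellation} direction, there is no on-the-fold 2-morphism available in \S\ref{beltbuckle} from which to extract the \emph{creation}-direction witness $n_\cD:1_{\cD^*}\Rightarrow\vartheta_\cD\circ\bar\vartheta_\cD$. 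The paper's route avoids this entirely because $\mathrm{R}_\cD,\mathrm{L}_\cD$ are invertible by the standing hypotheses of \S\ref{reidemeisterI} (invertibility of the fold-crossings $H,G$ and the writhes $K,\bar K$), so both composites $\mathrm{R}\circ\mathrm{L}^{-1}$ and $\mathrm{L}\circ\mathrm{R}^{-1}$ exist.

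Your construction of $m_\cD$ does type-check (whisker $\overline\varphi_\cD^{-1}$, slide $\bar\vartheta_\cD\circ\vartheta_\cD$ onto the coevaluation by interchangers, apply $\bar\kappa_\cD$ in its given direction, close with $\overline\varphi_\cD$), so it proves existence of \emph{a} 2-morphism of the stated type. But be aware that it produces an a priori \emph{different} witness from the paper's $m_\cD$ of \eqref{canceldoubletwist}: the paper explicitly flags, immediately after the proposition and in \S\ref{baezlangfordassumption}, that the relation between the on-the-fold cancellations $\kappa_\cD,\bar\kappa_\cD$ and the off-the-fold ones $m_\cD,n_\cD$ is the nontrivial coherence \eqref{baez-langfordreideI}, which characterizes the unframed self-dual situation of \cite{BAEZ2003705} and is not assumed in general. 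Deriving $m_\cD$ from $\bar\kappa_\cD$ therefore collapses a distinction that the later definitions of framing (e.g.\ condition \eqref{half-framed}) rely on.
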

\noindent These are "off-the-fold" versions of $\kappa_\cD,\bar\kappa_\cD$. To see how they are related, see \S \ref{baezlangfordassumption}.
\begin{proof}
   Consider the composite $\mathrm{R}\circ\mathrm{L}^{-1}$. Under the hypothesis of the previous proposition, $\mathrm{R}\circ\mathrm{L}^{-1}$ is a(n invertible) 2-morphism
\begin{equation*}
    \bar\vartheta_\cD\circ\vartheta_\cD \Rightarrow (\operatorname{ev}_\cD\boxtimes\,^*\cD)\circ(\cD^*\boxtimes\overline{\operatorname{ev}}_\cD^\dagger)\circ (\overline{\operatorname{cev}}_\cD^\dagger\boxtimes\cD^*) \circ (\,^*\cD\boxtimes\operatorname{cev}_\cD).
\end{equation*}
By \textit{Remark \ref{preduals}}, we can write $\overline{\operatorname{ev}}_\cD\cong \operatorname{ev}_{\,^*\cD} \cong \,^*\operatorname{cev}_\cD$, such that the functor on the right-hand side reads
\begin{equation}
(\operatorname{ev}_\cD\boxtimes\,^*\cD)\circ(\cD^*\boxtimes \,^*\operatorname{cev}_\cD^\dagger)\circ (\,^*\operatorname{ev}_\cD^\dagger\boxtimes\cD^*) \circ (\,^*\cD\boxtimes\operatorname{cev}_\cD),
    \label{doublesnake}
\end{equation}
Now by a series of interchangers, we can bring the barred-folds to the outside such that
\begin{equation*}
    (\,^*{\operatorname{ev}}_\cD^\dagger\boxtimes\,^*\cD)\circ \Big(\,^*\cD\boxtimes\big((\cD\boxtimes\operatorname{ev}_\cD)\circ( \operatorname{cev}_\cD\boxtimes \cD)\big)\boxtimes\,^*\cD\Big)\circ(\,^*\cD \boxtimes \,^*{\operatorname{cev}}_\cD^\dagger),
\end{equation*}
whence it becomes clear that it admits a tuple of snakerators/cusps 
\begin{align*}
    & (\,^*{\operatorname{ev}}_\cD^\dagger\boxtimes\,^*\cD)\circ \Big(\,^*\cD\boxtimes\big((\cD\boxtimes\operatorname{ev}_\cD)\circ( \operatorname{cev}_\cD\boxtimes \cD)\big)\boxtimes\,^*\cD\Big)\circ(\,^*\cD \boxtimes \,^*{\operatorname{cev}}_\cD^\dagger)\\
    &\qquad \quad \xRightarrow{    (\,^*{\operatorname{ev}}_\cD^\dagger\boxtimes\,^*\cD)\circ \Big(\,^*\cD\boxtimes\varphi_\cD\boxtimes\,^*\cD\Big)\circ(\,^*\cD \boxtimes \,^*{\operatorname{cev}}_\cD^\dagger)}     (\,^*{\operatorname{ev}}_\cD^\dagger\boxtimes\,^*\cD)\circ(\,^*\cD \boxtimes \,^*{\operatorname{cev}}_\cD^\dagger) \xRightarrow{\,^*\varphi_\cD^\dagger} 1_{\,^*\cD}.
\end{align*}
This allows us to directly trivialize the double twist; similar argument applies to $\mathrm{L}\circ \mathrm{R}^{-1}$, whence
\begin{align}
    &m_\cD = \,^*\varphi_\cD^\dagger \bullet (\,^*\cD\boxtimes\varphi_\cD\boxtimes\,^*\cD)\bullet (\mathrm{R}\circ\mathrm{L}^{-1}): \bar\vartheta_\cD\circ\vartheta_\cD\Rightarrow 1_{\,^*\cD}.\nonumber\\
    &n_\cD:(\mathrm{L}\circ \mathrm{R}^{-1})\bullet (\cD^{*}\boxtimes \,^*\varrho_{\cD}^\dagger\boxtimes \cD^*)\bullet\varrho_\cD: 1_{\cD^{*}}\Rightarrow \vartheta_\cD\circ \bar\vartheta_\cD.\label{canceldoubletwist}
\end{align}
\end{proof}

More refined notions of "framing" (cf. \cite{BAEZ2003705,Douglas:2018,Barrett_2024}) can thus be achieved.
\begin{definition}\label{framing}
Take $\cD\in\operatorname{2Rep}(\tilde C;\tilde R)$ and suppose the 2-Drinfel'd modifications $\omega_\cD,\bar\omega_\cD$ are invertible.
    \begin{enumerate}
        \item $\cD$ is \textbf{half-framed} if we have (i) an identification $\vartheta_\cD^\dagger\cong \bar\vartheta_\cD$, and (ii) the two types of Kaufmann double twist cancellations coincide,
    \begin{equation}
        m_\cD = e_{\vartheta_\cD},\qquad n_\cD = \iota_{\vartheta_\cD}.\label{half-framed}
    \end{equation}

\item a half-framed $\cD$ is \textbf{unframed} iff (i) its left- and right-duals coincide $\,^*\cD\cong\cD^*$ such that \eqref{adjdual} holds, and (ii) we have an identification $\vartheta_\cD\cong \bar\vartheta_\cD$ such that $\mathrm{L}^{-1}_\cD\bullet\mathrm{R}_\cD $ is a 2-isomorphism on $\vartheta_\cD$.

    \item a unframed $\cD$ is \textbf{self-dual} iff (i) $\cD^*\cong \cD$ and we have the identifications
    \begin{equation*}
        \operatorname{ev}_\cD \cong \operatorname{cev}_\cD^\dagger,\qquad \varphi_\cD =\varrho_\cD^\dagger,
    \end{equation*}
    and (ii) $\vartheta_\cD=\vartheta_\cD^*$ and $\mathrm{R}_\cD=\mathrm{L}_\cD^{-1}$.
    \end{enumerate}
\end{definition}

The notion of "unframed" defined above is justified as follows. From \textit{Remark \ref{warning}}, we see that the barred-folds coincide with the regular folds $\overline{\operatorname{ev}}_\cD\cong\operatorname{ev}_{\cD^*}$ whenever the dual is involutive $\cD^{**}\cong \cD$. If \eqref{adjdual} further holds, then the functor $(\operatorname{ev}_\cD\boxtimes\,^*\cD)\circ(\cD^*\boxtimes\overline{\operatorname{ev}}_\cD^\dagger)\cong (\operatorname{ev}_\cD\boxtimes \cD^*)\circ(\cD^*\boxtimes {\operatorname{cev}}_{\cD^{**}})$ admits a snakerator $\varrho_{\cD}$ from $1_{\cD^*}$. Composing its with $\mathrm{R}_\cD$ then kills a single twist
\begin{equation}
    \varrho_{\cD}^\dagger\bullet \mathrm{R}_\cD:\bar\vartheta_\cD\cong\vartheta_\cD\Rightarrow 1_{\cD^*};\label{reide1twist}
\end{equation}
similar arguments gives a trivialization of $\vartheta_\cD^\dagger$ from $\mathrm{L}_\cD^\dagger$. Having these witnesses for a single Reidemeister I move is what motivates the above definition of "unframed object". 

\begin{rmk}\label{strangesnakes}
    Let us take a closer look at the "double strange snakerators/cusps" \eqref{doublesnake}. Diagrammatically, they bear a striking resemblance to fig. 30 in \cite{Barrett_2024}, and the reason is as follows. Suppose the double-dual $-^{**}$ defines an involution $-^\#$ in accordance with \textit{Remark \ref{categoricalS4}}, then the barred folds can intuitively be understood as the double-duality datum. The condition \eqref{daggercommute} then gives a \textit{3}-endomorphism
    \begin{equation*}
        \Gamma_\text{pt}:  \text{pt}^{\#\dagger\#\dagger}\Rrightarrow \text{pt}
    \end{equation*}
    in the monoidal 3-category $B\operatorname{2Rep}(\tilde C;\tilde R)$. Together with the involutive-ness of the adjunction $\Theta_\text{pt}:\text{pt}^{\dagger\dagger}\Rrightarrow \text{pt}$, we achieve precisely \textbf{Theorem 4.5} in \cite{Barrett_2024}. Moreover, the "half-framing" condition \eqref{half-framed} identifies the two types of Kaufmann double twist cancellations, such that a correspondence with the structures in a $\mathsf{Gray}$-category with duals can be made. This substantiates the contents of Table \ref{tab:framing}.
\end{rmk}

The notions of framing introduced in {\bf Definition \ref{framing}} detect the geometry of a surface labelled by $\cD\in\operatorname{2Rep}(\tilde C;\tilde R)$ through its conditions on the ribbon twists. The situation can be summarized in the following table \ref{tab:framed}. 

\begin{table}[h]
    \centering
    \begin{tabular}{c|c|c|c|c}
       $\cD$ is\dots  & fully-framed & half-framed & unframed & self-dual \\
       \hline
        distinct twists & \makecell{$\vartheta_\cD,\vartheta_\cD^*,\vartheta_\cD^\dagger$ \\ $\bar\vartheta_\cD,\,^*\bar\vartheta_\cD,\bar\vartheta_\cD^\dagger$} & \makecell{$\vartheta_\cD,\bar\vartheta_\cD$ \\ $\vartheta_\cD^*,\,^*\bar\vartheta_\cD$} & $\vartheta_\cD,\vartheta_\cD^*$ & $\vartheta_\cD$ 
    \end{tabular}
    \caption{A table describing the number of twists on an object $\cD$ depending on how framed it is. Notice twists such as $\,^*\vartheta_\cD,\bar\vartheta_\cD^*$ did not appear due to the invertibility of the 2-Drinfel'd modifications, which relate these back to $\bar\vartheta,\vartheta$ respectively.}
    \label{tab:framed}
\end{table}

\noindent The following relations $\vartheta_\cD^{**}\cong \bar\vartheta_\cD, \,^{**}\bar\vartheta \cong \vartheta_\cD$ between the ribbon balancings can be seen from the triviality of the quadruple dual. Indeed, according to \textit{Remark \ref{categoricalS4}}, double-duals can be morally replaced by "barred" quantities.

\subsubsection{Unframed self-dual objects}\label{baezlangfordassumption}
We now turn to the notion of "self-dual" objects. If $\cD$ were self-dual, then its Reidemeister I moves in {\bf Proposition \ref{1streidemeister1}} coincide. Moreover, under quasi-Hermiticity, one also sees that the coherent writhings $K_\cD=\bar K_\cD^\dagger$ on $\cD$ coincide (see \S \ref{rigidwrithes}). A self-dual object is thus one which is equipped with (i) a single coherent writhing, and (ii) a {\it single} notion of twist and a Reidemsiter I move trivializing it. This is very close to what an "unframed self-dual object" $\cD$ means in \cite{BAEZ2003705}, but we require one more reduction.

Recall from \S \ref{beltbuckle} that, through a belt-buckle move, we can apply a writhing to remove a single twist on a fold. As such, the 2-morphisms $$K_{\,^*\cD}\bullet \mathrm{P}_\cD: (\cD\boxtimes\bar\vartheta_\cD)\circ\operatorname{cev}_\cD\Rightarrow \overline{\operatorname{ev}}_\cD^\dagger,\qquad K_\cD\bullet \mathrm{Q}_\cD: (\vartheta_{\cD}\boxtimes \cD)\circ \overline{\operatorname{cev}}_\cD \Rightarrow\operatorname{ev}_\cD^\dagger$$ should be closely related to the first Reidemeister moves \eqref{1streidemeister1}. Particularly in the \textit{unframed} case, a 2-morphism parallel with $K_{\,^*\cD}\bullet \mathrm{P}_\cD$ can be constructed from \eqref{reide1twist},
\begin{equation*}
    (\cD\boxtimes \varrho_{\cD}^*\bullet \mathrm{R}_\cD)\circ \operatorname{cev}_\cD: (\cD\boxtimes\bar\vartheta_\cD)\circ\operatorname{cev}_\cD\Rightarrow \operatorname{cev}_\cD\cong \widetilde{\operatorname{cev}}_\cD \cong \overline{\operatorname{ev}}^\dagger_\cD,
\end{equation*}
whence the condition stating that these moves are equivalent,
\begin{equation}
    (\cD\boxtimes \varrho_{\cD}^*\bullet \mathrm{R}_\cD)\circ \operatorname{cev}_\cD = K_{\,^*\cD}\bullet \mathrm{P}_\cD,\label{baez-langfordreideI}
\end{equation}
is the precise way to phrase the "writhes = first Reidemeister move" perspective taken in \cite{BAEZ2003705} (see also {\it Remark \ref{firstreide}}).


By applying \eqref{baez-langfordreideI} twice, one can immediately deduce that $\kappa_\cD$ coincides with $m_\cD$ in \eqref{canceldoubletwist}. The half-framing condition \eqref{half-framed} then tells us that the two types of twist cancellations coincide on a fold,
    \begin{equation*}
        \kappa_\cD = (\cD\boxtimes e_{\vartheta_\cD})\circ\operatorname{cev}_\cD.
    \end{equation*}
This finally allows us to recover the notion of a "unframed self-dual generator" in \cite{BAEZ2003705}.



\subsection{Double braiding; the Hopf links}\label{hopflinks}
Let us now consider the \textit{double braiding} map $C_{\cD,\cA}=c_{\cA,\cD}\circ c_{\cD,\cA}: \cD\boxtimes\cA\rightarrow\cD\boxtimes\cA$ mentioned in \textit{Remark \ref{syllepsis}}. It is an endofunctor, hence we are able to form the duality folds on them, $$\operatorname{ev}_{C_{\cD,\cA}}:C_{\cD,\cA}^*\boxtimes C_{\cD,\cA}\Rightarrow \id_{1_\cI},\qquad \operatorname{cev}_{C_{\cD,\cA}}:\id_{1_\cI}\Rightarrow C_{\cD,\cA}\boxtimes C_{\cD,\cA}^*.$$ They witness the null-homotopy of adjacent double braidings --- and hence Hopf links; see the following --- with the opposite framing. 

\subsubsection{Hopf links on distinct objects; the 2-Hopf modifications}
Let us first begin with a brief survey on the Hopf link 1-morphisms we can construct on distinct objects $\cD,\cA\in\operatorname{2Rep}(\tilde C;\tilde R)$. For each pair of objects $\cD,\cA$ and endomorphisms $F\in\operatorname{End}(\cD),G\in\operatorname{End}(\cA)$, we construct the following 1- and 2-morphisms
\begin{align*}
     &h_{\cD,\cA} = (\operatorname{cev}_\cD^\dagger\boxtimes\operatorname{cev}_\cA^\dagger) \circ  (\cD\boxtimes C_{\cD^*,\cA}\boxtimes \cA^*) \circ(\operatorname{cev}_\cD\boxtimes\operatorname{cev}_\cA): \cI\rightarrow\cI,\\
     &h_{F,\cA} = (\widetilde{\operatorname{ev}}_{F^*}\boxtimes\operatorname{cev}_\cA^\dagger) \circ  (F\boxtimes C_{F^*,\cA}\boxtimes \cA^*) \circ(\operatorname{cev}_F\boxtimes\operatorname{cev}_\cA): h_{\cD,\cA}\Rightarrow h_{\cD,\cA}, \\
     & h_{\cD,G} = (\operatorname{cev}_\cD^\dagger\boxtimes\widetilde{\operatorname{ev}}_{G*}) \circ  (\cD\boxtimes C_{\cD^*,G}\boxtimes G^*) \circ(\operatorname{cev}_\cD\boxtimes\operatorname{cev}_G): h_{\cD,\cA}\Rightarrow h_{\cD,\cA}.
\end{align*}
As the mixed double braiding 2-morphisms is compatible with nudging, we have
\begin{equation*}
    \upsilon_{F\boxtimes F^*,G\boxtimes G^*}\bullet (h_{F,\cA}\bullet h_{\cD,G})\bullet \upsilon_{F\boxtimes F^*,G\boxtimes G^*}^{-1} = h_{\cD,G}\bullet h _{F,\cA},
\end{equation*}
which means that the matrix elements of the "mixed Hopf links" $h_{F,\cA},h_{\cD,G}$ commute up to the invertible interchanger $\upsilon$. Hence, if we fix a basis for which the interchangers are diagonal, the matrices corresponding to the mixed Hopf links can be simultaneously diagonalized. 



We now consider the relationship between the Hopf links $h$ and its close relative.
\begin{theorem}
Define the $\mathrm{pivotal~opposite}$ of $h_{\cD,\cA}$,
\begin{equation*}
\tilde h_{\cD,\cA} = ({\operatorname{ev}}_{\cD}\boxtimes {\operatorname{ev}}_{\cA}) \circ  (\cD^*\boxtimes C_{\cD,\cA^*}\boxtimes \cA) \circ(\operatorname{ev}_\cD^\dagger\boxtimes\operatorname{ev}_\cA^\dagger): \cI\rightarrow\cI.
\end{equation*}
    Assume the 2-morphisms in \textbf{Proposition \ref{dualevals}} are invertible, then there are 2-morphisms such that
    \begin{equation*}
        h_{\cD,\cA}\Rightarrow \tilde h_{\cD,\cA},\qquad \,^*h_{\cD,\cA}\Rightarrow\tilde h_{\,^*\cA,\,^*\cD}.
    \end{equation*}
    Hence, there is a 2-morphism $\tilde h_{\,^*\cD,\,^*\cA}^* \Rightarrow \tilde h_{\cA,\cD}$ which swaps the arguments of the Hopf links under an orientation reversal.
\end{theorem}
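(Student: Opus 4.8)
The plan is to prove the three displayed 2-morphisms in order: the two constructions (which are the ``2-Hopf modifications'' advertised in the subsection heading) are built by hand from the writhing, fold-crossing, and $\textbf{Proposition \ref{dualevals}}$ data, and the third statement then follows formally. Throughout I read $\tilde h_{\cD,\cA}$ as the $\pi$-rotation of the Hopf-link diagram defining $h_{\cD,\cA}$: the coevaluation folds $\operatorname{cev}_\cD,\operatorname{cev}_\cA$ at the bottom (and their daggers at the top) are traded for the evaluation folds $\operatorname{ev}_\cD^\dagger,\operatorname{ev}_\cA^\dagger$ (and $\operatorname{ev}_\cD,\operatorname{ev}_\cA$), while the threading double braiding $C_{\cD^*,\cA}$ is replaced by $C_{\cD,\cA^*}$. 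For the first map $h_{\cD,\cA}\Rightarrow\tilde h_{\cD,\cA}$ I would first drag each of the two single braidings comprising $C_{\cD^*,\cA}=c_{\cA,\cD^*}\circ c_{\cD^*,\cA}$ past the coevaluation folds using the fold-crossing 2-morphisms $G_{\cD,\cA},H_{\cD,\cA}$ (and the barred versions $\bar G,\bar H$) of \S\ref{writhe}, each crossing rebasing a braiding from $\cD^*$ onto $\cA^*$; I would then correct the framing of the relocated folds with the over/under-writhings $K_\cD,\bar K_\cD$ (and their $\cA$-analogues), which trade $c_{\cD,\cD^*}\circ\operatorname{cev}_\cD$ for $\operatorname{ev}_\cD^\dagger$. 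Pasting these cells along the defining composite gives the asserted $h_{\cD,\cA}\Rightarrow\tilde h_{\cD,\cA}$, quasi-Hermiticity \eqref{quasiherm} entering exactly where a braiding must be replaced by the adjoint of its reverse.

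The second map $\,^*h_{\cD,\cA}\Rightarrow\tilde h_{\,^*\cA,\,^*\cD}$ I would obtain as a systematic computation by applying the left-dual functor $\,^*(-)$ termwise to the defining composite of $h_{\cD,\cA}$. Using contravariance of $\,^*(-)$ on composites, the tensor rule $\,^*(X\boxtimes Y)=\,^*Y\boxtimes\,^*X$, and the commutation $\,^*(F^\dagger)\cong(\,^*F)^\dagger$ from \eqref{daggercommute}, the outer and inner coevaluation folds turn into $\,^*\operatorname{cev}$ (resp.\ their daggers), which by the hypothesised invertibility of $b_\cD:\overline{\operatorname{ev}}_\cD\Rightarrow\,^*\operatorname{cev}_\cD$ and $a_\cD$ in $\textbf{Proposition \ref{dualevals}}$ are identified with the barred folds $\overline{\operatorname{ev}}_\cD^\dagger,\overline{\operatorname{ev}}_\cD$; the threading $\,^*C_{\cD^*,\cA}$ is identified with $C_{\cD,\,^*\cA}$ through the braiding-on-duals structure of \S\ref{braidduals} together with $\,^*\cD^*\cong\cD$. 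A final application of \emph{Remark \ref{preduals}} converts the barred folds on $\cD,\cA$ into the ordinary folds on the pre-duals $\,^*\cD,\,^*\cA$, landing the composite on a copy of $\tilde h_{\,^*\cA,\,^*\cD}$; the residual reorderings of tensor factors are absorbed by the unitary interchangers $\upsilon$ and hexagonators $\Omega$. Since $a_\cD,b_\cD$ are invertible and all braiding-coherence cells are unitary, this 2-morphism is invertible.

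The third statement follows formally. Applying the second construction with $(\cD,\cA)$ replaced by $(\cA,\cD)$ yields an invertible $\,^*h_{\cA,\cD}\Rightarrow\tilde h_{\,^*\cD,\,^*\cA}$; taking its right-dual $(-)^*$ (which preserves 2-morphism direction and invertibility) and using $(\,^*X)^*\cong X$ gives $h_{\cA,\cD}\Rightarrow\tilde h_{\,^*\cD,\,^*\cA}^*$, which I then invert to get $\tilde h_{\,^*\cD,\,^*\cA}^*\Rightarrow h_{\cA,\cD}$. Composing with the first map $h_{\cA,\cD}\Rightarrow\tilde h_{\cA,\cD}$ produces the desired $\tilde h_{\,^*\cD,\,^*\cA}^*\Rightarrow\tilde h_{\cA,\cD}$, exhibiting the orientation-reversal swap of arguments.

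The hard part will be the first construction. Unlike the second, there is no single functor to apply, so the pivotal rotation must be assembled by hand, and one must check that the fold-crossings $H,G$ and the writhings $K,\bar K$ paste coherently — in particular that the relabeling $C_{\cD^*,\cA}\rightsquigarrow C_{\cD,\cA^*}$ obtained by sliding the two single braidings past the folds is compatible with the framing corrections supplied by the writhings. This is exactly where the rigid writhing conditions \eqref{rigidwrithe1}, \eqref{rigidwrithe2} and quasi-Hermiticity are required, and the principal bookkeeping difficulty is tracking the two inequivalent sides (left- versus right-dual) consistently throughout the diagram.
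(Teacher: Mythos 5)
Your proposal is correct, and the second and third parts coincide with the paper's argument essentially verbatim: the paper likewise applies $\,^*(-)$ termwise to $h_{\cD,\cA}$, uses the invertibility of $b_\cD$ (and its dagger) from \textbf{Proposition \ref{dualevals}} to trade $\,^*\operatorname{cev}$ for the barred folds, identifies $\,^*C_{\cD^*,\cA}\cong C_{\,^*\cA,\cD}$ via \eqref{dualbraid}, converts barred folds to folds on pre-duals via \textit{Remark \ref{preduals}}, and then obtains the final statement by dualizing and composing with the first map. The only place you diverge is in assembling $h_{\cD,\cA}\Rightarrow\tilde h_{\cD,\cA}$: the paper does not interleave the fold-crossings $H,G$ with the writhes, but instead pastes a single three-square diagram $\Gamma_{\cD,\cA}$ whose outer squares are filled by the composite writhes $K'_\cD\,K'_\cA$ and their adjoints (which convert $\operatorname{cev}\boxtimes\operatorname{cev}$ directly into $\operatorname{ev}^\dagger\boxtimes\operatorname{ev}^\dagger$ through the braidings $c_{\cD^*,\cD}\boxtimes c_{\cA^*,\cA}$), and whose middle square is the pure braid-exchange cell $C_{c_\cD,c_\cA}$ of \eqref{doublebraidexchange}, built from the hexagonators and \eqref{braidexch} alone. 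This decouples the fold manipulations from the braid coherence, so the rigid writhing conditions \eqref{rigidwrithe1}--\eqref{rigidwrithe2} that you anticipate needing for the pasting compatibility are not actually invoked; your fold-crossing route should still close up, but at the cost of exactly the left-versus-right-dual bookkeeping you flag, which the paper's single $\Gamma_{\cD,\cA}$ diagram sidesteps.
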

\begin{proof}
    By using $c_{\cD^*,\cA}$, its unitary hexagonators and the braid exchange \eqref{braidexch}, we can form the following 2-morphism\footnote{Note that this 2-morphism acts like an interchanger, but it is \textit{not} one: it interchanges two functors which act on overlapping objects.}
\begin{equation}
    C_{c_\cD,c_\cA}=\begin{tikzcd}
	{\cD\cD^*\cA\cA^*} & {\cD\cA\cD^*\cA^*} & {\cD\cD^*\cA\cA^*} \\
	{\cD\cD^*\cA^*\cA} && {\cD\cD^*\cA^*\cA} \\
	{\cD^*\cD\cA^*\cA} & {\cD^*\cA^*\cD\cD} & {\cD^*\cD\cA^*\cA}
	\arrow[from=1-1, to=1-2]
	\arrow[from=1-1, to=2-1]
	\arrow[from=1-2, to=1-3]
	\arrow[""{name=0, anchor=center, inner sep=0}, from=1-2, to=2-3]
	\arrow[from=1-3, to=2-3]
	\arrow[""{name=1, anchor=center, inner sep=0}, from=2-1, to=1-2]
	\arrow[from=2-1, to=3-1]
	\arrow[""{name=2, anchor=center, inner sep=0}, from=2-1, to=3-2]
	\arrow[from=2-3, to=3-3]
	\arrow[from=3-1, to=3-2]
	\arrow[""{name=3, anchor=center, inner sep=0}, from=3-2, to=2-3]
	\arrow[from=3-2, to=3-3]
	\arrow["{\cD\Omega_{\dots}}", shorten >=4pt, Rightarrow, from=1-1, to=1]
	\arrow["{\cD\Omega_{\dots}}"', shorten <=4pt, shorten >=2pt, Rightarrow, from=0, to=1-3]
	\arrow["{c_{c_{\cD^*\cA^*,\cD},\cA}}"', shorten <=15pt, shorten >=15pt, Rightarrow, from=2, to=0]
	\arrow["{\Omega_{\dots}\cA}", shorten >=4pt, Rightarrow, from=3-1, to=2]
	\arrow["{\Omega_{\dots}\cA}"', shorten <=4pt, shorten >=2pt, Rightarrow, from=3, to=3-3]
\end{tikzcd},\label{doublebraidexchange}\end{equation}
where we have neglected the obvious arguments in $\Omega$ to save space. Notice if $\cA=\cD$, then the diamond in the middle of \eqref{doublebraidexchange} is related to the interchangers that have appeared in \eqref{diamond}, and therefore plays a part in the swallowking equations \eqref{swallowking} in \S \ref{swallowkingeqs}. This observation will be crucial later in \S \ref{falsswallow}.  

Together with the following adjunction described in \S \ref{2catdimension},
\[e_{\operatorname{cev}_\cD}:\begin{tikzcd}
	{\cD\cD^*} && \cI
	\arrow[""{name=0, anchor=center, inner sep=0}, "{\operatorname{cev}_\cD^\dagger}"', curve={height=12pt}, from=1-1, to=1-3]
	\arrow[""{name=1, anchor=center, inner sep=0}, "{\operatorname{cev}_\cD}"', curve={height=12pt}, from=1-3, to=1-1]
	\arrow["\vdash"{description}, draw=none, from=1, to=0]
\end{tikzcd}:\iota_{\operatorname{cev}_\cD},\]
as well as the writhings, this 2-morphism \eqref{doublebraidexchange} fits into the central square of the following diagram,
\[\Gamma_{\cD,\cA}=\begin{tikzcd}
	\cI && {\cD\cD^*\cA\cA^*} & {\cD\cD^*\cA\cA^*} && \cI \\
	\cI && {\cD^*\cD\cA^*\cA} & {\cD^*\cD\cA^*\cA} && \cI
	\arrow[from=1-1, to=1-3]
	\arrow[""{name=0, anchor=center, inner sep=0}, Rightarrow, no head, from=1-1, to=2-1]
	\arrow[from=1-3, to=1-4]
	\arrow[from=1-4, to=1-6]
	\arrow[""{name=1, anchor=center, inner sep=0}, Rightarrow, no head, from=1-6, to=2-6]
	\arrow[from=2-1, to=2-3]
	\arrow[""{name=2, anchor=center, inner sep=0}, from=2-3, to=1-3]
	\arrow[from=2-3, to=2-4]
	\arrow[""{name=3, anchor=center, inner sep=0}, from=2-4, to=1-4]
	\arrow[from=2-4, to=2-6]
	\arrow["{C_{c_\cD,c_\cA}}", shorten <=15pt, shorten >=15pt, Rightarrow, from=2, to=3]
	\arrow["{K'_\cD K_\cA'}"', shorten <=17pt, shorten >=17pt, Rightarrow, from=2, to=0]
	\arrow["{K'^\dagger_\cD K'^\dagger_\cA}"', shorten <=17pt, shorten >=17pt, Rightarrow, from=3, to=1]
\end{tikzcd}: h_{\cD,\cA}\Rightarrow \tilde h_{\cD,\cA}\]
    which proves the first part. 

    The second part will use the invertibility of the 2-morphisms $a_\cD,b_\cD$ in \textbf{Proposition \ref{dualevals}}, as well as the duality transforms \eqref{dualbraid} $\,^* C_{\cD^*,\cA} \cong C_{\,^*\cA,\cD}.$ We have that
    \begin{align*}
        \,^* h_{\cD,\cA} &\cong (\,^*\operatorname{cev}_\cA\boxtimes\,^*\operatorname{cev}_\cD)\circ  (\cA\boxtimes \,^* C_{\cD^*,\cA}\boxtimes \,^*\cD) \circ 
        (\,^*\operatorname{cev}_\cA^\dagger\boxtimes\,^*\operatorname{cev}_\cD^\dagger) \\
        &\xRightarrow{\big(b_\cA^{-1}\boxtimes b_\cD^{-1}\big) \circ (\cA\boxtimes \cong\boxtimes \,^*\cD) \circ \big(b_\cA^\dagger\boxtimes b_\cD^\dagger\big)} (\overline{\operatorname{ev}}_\cA\boxtimes\overline{\operatorname{ev}}_\cD)\circ  (\cA\boxtimes C_{\,^*\cA,\cD}\boxtimes \,^*\cD) \circ 
        (\overline{\operatorname{ev}}_\cA^\dagger\boxtimes \overline{\operatorname{ev}}_\cD^\dagger) \\ 
        &\cong (\operatorname{ev}_{\,^*\cA}\boxtimes \operatorname{ev}_{\,^*\cD})\circ  (\cA\boxtimes C_{\,^*\cA,\cD}\boxtimes \,^*\cD) \circ 
        (\operatorname{ev}_{\,^*\cA}^\dagger\boxtimes \operatorname{ev}_{\,^*\cD}^\dagger) = \tilde h_{\,^*\cA,\,^*\cD},
    \end{align*}
    where we have used  \textit{Remark \ref{preduals}} in the final line. Composing the dual of this 2-morphisms with $\Gamma_{\cD,\cA}$ then proves the theorem.
\end{proof}

We note here that it is in general not possible to swap the arguments of the Hopf links without an orientation reversal being performed --- that is, unless the double braiding on $\cD\boxtimes\cA$ happens to be trivializable. We shall see that this occurs precisely in the special case $\cA=\cD$; let us explore this in the following.

\subsubsection{Trivializable false Hopf links}
Letting now $\cA=\cD$, consider the following \textbf{"false" $(\cD^*,\cD)$-Hopf link}
    \begin{equation*}
        h_{\cD}=h_{\cD,\cD} = (\operatorname{cev}_\cD^\dagger\boxtimes\operatorname{cev}_\cD^\dagger) \circ  (\cD\boxtimes C_{\cD^*,\cD}\boxtimes \cD^*) \circ(\operatorname{cev}_\cD\boxtimes\operatorname{cev}_\cD): \cI\rightarrow\cI.
    \end{equation*}
By quasi-Hermiticity \eqref{quasiherm}, the double-braiding is self-dual $C_{\cD^*,\cD}^\dagger\cong C_{\cD^*,\cD}$. The adjoint-inverse of \eqref{eq1}, $(k_\cD^{\dagger})^{-1}:\operatorname{ev}_\cD\circ C_{\cD^*,\cD}\Rightarrow \operatorname{ev}_\cD$, 
allows us to construct a 2-morphism $\mathcal{H}_{\cD,\cD^*}$ trivializing the $(\cD^*,\cD)$-Hopf link, given by the following diagram 
\[\begin{tikzcd}
	\cI & {\cD\cD^*} & {\cD\cD^*} & \cI \\
	& {\cD\cD^*\cD\cD^*} & {\cD\cD^*\cD\cD^*}
	\arrow[from=1-1, to=1-2]
	\arrow[""{name=0, anchor=center, inner sep=0}, "{1_\cI}", curve={height=-30pt}, from=1-1, to=1-4]
	\arrow[from=1-1, to=2-2]
	\arrow[""{name=1, anchor=center, inner sep=0}, from=1-2, to=1-3]
	\arrow[from=1-3, to=1-4]
	\arrow[""{name=2, anchor=center, inner sep=0}, from=2-2, to=1-2]
	\arrow[from=2-2, to=2-3]
	\arrow[""{name=3, anchor=center, inner sep=0}, from=2-3, to=1-3]
	\arrow[from=2-3, to=1-4]
	\arrow[shorten <=9pt, shorten >=9pt, Rightarrow, from=1-1, to=2]
	\arrow["{e_{\operatorname{cev}_\cD}}", shorten <=4pt, shorten >=4pt, Rightarrow, from=1, to=0]
	\arrow["{(k_\cD^\dagger)^{-1}}", shorten <=15pt, shorten >=15pt, Rightarrow, from=2, to=3]
	\arrow[shorten <=9pt, shorten >=14pt, Rightarrow, from=3, to=1-4]
\end{tikzcd}\]
where the triangles on the sides are given by the tensoring of appropriate folds with the snakerators/cusps $\varphi_\cD,{\varphi}^\dagger$. This 2-morphism trivializes a Hopf linking involving $\cD,\cD^*$; similarly for the left-dual $\,^*\cD$. We call these "false" Hopf links, as they can be disentangled.


Focusing on the $(\cD,\cD^*)$-Hopf link and its trivialization $\mathcal{H}_{\cD,\cD^*}$, there is a another trivialization $\mathcal{H}_{\cD,\cD^*}':1_\cI\Rightarrow h_{\cD}$ given by the following diagram
\[\begin{tikzcd}
	\cI & {\cD\cD^*} & {\cD\cD^*} & \cI \\
	& {\cD\cD^*\cD\cD^*} & {\cD\cD^*\cD\cD^*}
	\arrow[from=1-2, to=1-1]
	\arrow[""{name=0, anchor=center, inner sep=0}, from=1-2, to=2-2]
	\arrow[""{name=1, anchor=center, inner sep=0}, from=1-3, to=1-2]
	\arrow[""{name=2, anchor=center, inner sep=0}, from=1-3, to=2-3]
	\arrow[""{name=3, anchor=center, inner sep=0}, "{1_\cI}"', curve={height=30pt}, from=1-4, to=1-1]
	\arrow[from=1-4, to=1-3]
	\arrow[from=1-4, to=2-3]
	\arrow[from=2-2, to=1-1]
	\arrow[from=2-3, to=2-2]
	\arrow[shorten <=9pt, shorten >=9pt, Rightarrow, from=0, to=1-1]
	\arrow["{k_\cD^{-1}}"', shorten <=15pt, shorten >=15pt, Rightarrow, from=2, to=0]
	\arrow["{\iota_{\operatorname{cev}_\cD^\dagger}}"', shorten <=4pt, shorten >=4pt, Rightarrow, from=3, to=1]
	\arrow[shorten <=9pt, shorten >=14pt, Rightarrow, from=1-4, to=2]
\end{tikzcd}\]
constructed from the snakerators/cusps $\varrho^\dagger,\varrho^\dagger$, instead of $\varphi$. The same arguments hold for the left-dual versions of the Hopf links, by making use of the barred-versions of the folds and snakerators/cusps. 

\subsubsection{False Hopf links and the swallowking equation}\label{falsswallow}
In light of quasi-Hermiticity $c_{\cD^*,\cD}^\dagger\cong c_{\cD,\cD^*}$ and the relation $\iota_F^\dagger =e_{F^\dagger}$ for the adjunction-folds, the middle square in $\mathcal{H}_{\cD,\cD^*}'$ is adjoint to that in $\mathcal{H}_{\cD,\cD^*}$. However, the two triangles at the sides are different: $\mathcal{H}_{\cD,\cD^*}$ has $\varphi$'s while $\mathcal{H}_{\cD,\cD^*}'$ has $\varrho$'s. They are, of course, related through the {swallowtail 2-morphisms} $\mathcal{S}_\cD,\mathcal{S}_\cD'$ studied in \S \ref{swallotail}; concisely, we write
\begin{align}
     \mathcal{H}'^\dagger_{\cD^*,\cD} = \mathcal{S}_\cD\circ \mathcal{H}_{\cD,\cD^*}\circ\mathcal{S}_\cD^\dagger,\qquad  \mathcal{H}'^\dagger_{\cD,\cD^*} =                                                                  \mathcal{S}'_\cD\circ\mathcal{H}_{\cD^*,\cD}\circ\mathcal{S}_\cD'^\dagger.\label{hopfswallow}
\end{align}
Here we have kept the 2-morphisms $a,b$ implicit, but they can be easily written in. 

Now on the other hand, \eqref{braidexch} and the ensuing diagram $\Gamma_{\cD,\cD}$ there allow us to exhibit a 2-morphism
\begin{equation*}
    \Lambda_\cD= \mathcal{H}_{\cD^*,\cD}\bullet \Gamma_{\cD,\cD}\bullet \mathcal{H}_{\cD,\cD^*}: 1_\cI\Rightarrow 1_\cI
\end{equation*}
which relates $\mathcal{H}_{\cD,\cD^*}$ with $\mathcal{H}_{\cD^*,\cD}$,
\[\Lambda_\cD=\begin{tikzcd}
	{\mathcal{H}_{\cD,\cD^*}} \\
	{\Gamma_{\cD,\cD}} \\
	{\mathcal{H}_{\cD^*,\cD}}
	\arrow["\bullet"{description}, draw=none, from=1-1, to=2-1]
	\arrow["\bullet"{description}, draw=none, from=2-1, to=3-1]
\end{tikzcd}=\begin{tikzcd}
	\cI & {\cD\cD^*} & {\cD\cD^*} & \cI \\
	& {\cD\cD^*\cD\cD^*} & {\cD\cD^*\cD\cD^*} \\
	& {\cD^*\cD\cD^*\cD} & {\cD^*\cD\cD^*\cD} \\
	\cI & {\cD^*\cD} & {\cD^*\cD} & \cI
	\arrow[from=1-1, to=1-2]
	\arrow[""{name=0, anchor=center, inner sep=0}, "{1_\cI}", curve={height=-30pt}, from=1-1, to=1-4]
	\arrow[""{name=1, anchor=center, inner sep=0}, from=1-1, to=2-2]
	\arrow[""{name=2, anchor=center, inner sep=0}, from=1-2, to=1-3]
	\arrow[""{name=3, anchor=center, inner sep=0}, from=1-3, to=1-4]
	\arrow[""{name=4, anchor=center, inner sep=0}, from=1-3, to=2-3]
	\arrow[""{name=5, anchor=center, inner sep=0}, Rightarrow, no head, from=1-4, to=4-4]
	\arrow[""{name=6, anchor=center, inner sep=0}, from=2-2, to=1-2]
	\arrow[from=2-2, to=2-3]
	\arrow[from=2-3, to=1-4]
	\arrow[""{name=7, anchor=center, inner sep=0}, from=3-2, to=2-2]
	\arrow[from=3-2, to=3-3]
	\arrow[from=3-2, to=4-1]
	\arrow[""{name=8, anchor=center, inner sep=0}, from=3-3, to=2-3]
	\arrow[""{name=9, anchor=center, inner sep=0}, from=3-3, to=4-4]
	\arrow[""{name=10, anchor=center, inner sep=0}, Rightarrow, no head, from=4-1, to=1-1]
	\arrow[""{name=11, anchor=center, inner sep=0}, from=4-1, to=4-2]
	\arrow[""{name=12, anchor=center, inner sep=0}, from=4-2, to=3-2]
	\arrow[""{name=13, anchor=center, inner sep=0}, from=4-2, to=4-3]
	\arrow[""{name=14, anchor=center, inner sep=0}, from=4-3, to=3-3]
	\arrow[from=4-3, to=4-4]
	\arrow[""{name=15, anchor=center, inner sep=0}, "{1_\cI}", curve={height=-30pt}, from=4-4, to=4-1]
	\arrow[shorten <=3pt, shorten >=1pt, Rightarrow, from=1, to=1-2]
	\arrow["{e_{\operatorname{cev}_\cD}}", shorten <=4pt, shorten >=4pt, Rightarrow, from=2, to=0]
	\arrow[shorten <=6pt, shorten >=9pt, Rightarrow, from=4, to=3]
	\arrow["{(k_\cD^\dagger)^{-1}}", shorten <=15pt, shorten >=15pt, Rightarrow, from=6, to=4]
	\arrow[shorten <=11pt, shorten >=11pt, Rightarrow, from=7, to=10]
	\arrow[shorten <=15pt, shorten >=15pt, Rightarrow, from=7, to=8]
	\arrow[shorten <=11pt, shorten >=11pt, Rightarrow, from=8, to=5]
	\arrow[shorten <=6pt, shorten >=9pt, Rightarrow, from=11, to=12]
	\arrow["{\bar k_\cD^{-1}}", shorten <=15pt, shorten >=15pt, Rightarrow, from=12, to=14]
	\arrow[shorten >=3pt, Rightarrow, from=4-3, to=9]
	\arrow["{\iota_{\operatorname{ev}_\cD}}"', shorten <=4pt, shorten >=4pt, Rightarrow, from=15, to=13]
\end{tikzcd}\]
where $\bar k_\cD^{-1}: C_{\cD,\cD^*}\circ\operatorname{cev}_\cD\Rightarrow \operatorname{cev}_\cD$ is the inverse of \eqref{eq2}. 

\begin{proposition}
    The conditions \eqref{hopfswallow} and the swallowking equations \eqref{swallowking} imply that $\Lambda_\cD$ is self-adjoint
    \begin{equation*}
        \Lambda_\cD  = \Lambda_\cD^\dagger.
    \end{equation*}
\end{proposition}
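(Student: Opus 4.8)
The plan is to compute $\Lambda_\cD^\dagger$ directly and show it collapses back onto $\Lambda_\cD$ by feeding the swallowtail relations \eqref{hopfswallow} into the swallowking equations \eqref{swallowking}. Since the dagger is a contravariant involution on $2$-morphisms, reversing the order of the vertical composition, the first step is to write
\[
\Lambda_\cD^\dagger = \mathcal{H}_{\cD,\cD^*}^\dagger\bullet\Gamma_{\cD,\cD}^\dagger\bullet\mathcal{H}_{\cD^*,\cD}^\dagger,
\]
and then to analyse the three factors separately, exploiting the fact that the defining diagram of $\Lambda_\cD$ is manifestly reflection-symmetric about its horizontal axis: the $(\cD,\cD^*)$-cap sits on top, the $(\cD^*,\cD)$-cap on the bottom, and $\Gamma_{\cD,\cD}$ in the middle, so that taking the dagger amounts to reflecting the whole diagram top-to-bottom.

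First I would treat the two Hopf-link caps. The relations \eqref{hopfswallow} express the daggers of the primed trivializations $\mathcal{H}'_{\cD^*,\cD},\mathcal{H}'_{\cD,\cD^*}$ as the unprimed ones conjugated by the swallowtail $2$-morphisms $\mathcal{S}_\cD,\mathcal{S}'_\cD$. Since $\mathcal{H}$ and $\mathcal{H}'$ are two trivializations of the \emph{same} false Hopf link, differing only by the exchange of cusps $\varphi\leftrightarrow\varrho$ (cf. \S\ref{falsswallow}), inverting \eqref{hopfswallow} rewrites $\mathcal{H}_{\cD,\cD^*}^\dagger$ and $\mathcal{H}_{\cD^*,\cD}^\dagger$ as $\mathcal{S}$-conjugates of the primed caps. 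The net effect is to liberate a pair of swallowtails $\mathcal{S}_\cD,\mathcal{S}'_\cD$ (together with their adjoints) onto the two ends of the diagram, leaving the primed caps flush against $\Gamma_{\cD,\cD}^\dagger$.

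Next I would dagger the middle block $\Gamma_{\cD,\cD}$. Its adjoint reverses the double-braid exchange $C_{c_\cD,c_\cD}$ of \eqref{doublebraidexchange} and replaces the writhe caps $K'_\cD,\bar K'_\cD$ by $\bar K_\cD'^\dagger,K_\cD'^\dagger$; here quasi-Hermiticity \eqref{quasiherm} supplies the $2$-isomorphism $c_{\cD^*,\cD}\cong c_{\cD,\cD^*}^\dagger$ needed to identify these adjoint writhings with the original ones, so that $\Gamma_{\cD,\cD}^\dagger$ differs from $\Gamma_{\cD,\cD}$ only through the central diamond of $C_{c_\cD,c_\cD}$. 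The crucial observation, already flagged after \eqref{doublebraidexchange}, is that at $\cA=\cD$ this diamond is exactly the interchanger cell appearing in \eqref{diamond}, i.e. the swallowking $2$-morphism $\mathcal{K}_\cD$.

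The final step is the collapse. Substituting the rewritten caps and $\Gamma_{\cD,\cD}^\dagger$ back into $\Lambda_\cD^\dagger$, the swallowtails liberated in the second step sit on either side of the central cell $\mathcal{K}_\cD$ produced in the third step, assembling precisely the left-hand sides of the swallowking equations \eqref{swallowking}, namely $\mathcal{S}_\cD\circ\mathcal{K}_\cD\circ\mathcal{S}'_\cD=\id_{c_{\cD,\cD^*}}$ and its adjoint. Using \eqref{swallowking} to replace these composites by identities removes the swallowtails, and the primed caps then revert — again via \eqref{hopfswallow} — to the unprimed $\mathcal{H}_{\cD^*,\cD},\mathcal{H}_{\cD,\cD^*}$, reconstituting $\Lambda_\cD$. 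I expect the main obstacle to be bookkeeping rather than conceptual: keeping straight which strand carries $\cD$ versus $\cD^*$ through the reflection induced by the dagger, matching the arguments of $\mathcal{S}_\cD$ against those of $\mathcal{S}'_\cD$ so that the swallowking equations apply with the correct orientation, and invoking quasi-Hermiticity consistently so that $c_{\cD^*,\cD}$ and $c_{\cD,\cD^*}^\dagger$ remain interchangeable at every crossing. Once the diagrammatic reflection symmetry is aligned with the swallowtail/swallowking data, the equality $\Lambda_\cD=\Lambda_\cD^\dagger$ follows.
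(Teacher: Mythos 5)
Your proposal is correct and follows essentially the same route as the paper's proof: both arguments hinge on the same two ingredients — the relations \eqref{hopfswallow} identifying the daggered Hopf-link trivializations with swallowtail-conjugates, and the observation that the central diamond of $\Gamma_{\cD,\cD}$ is the swallowking cell $\mathcal{K}_\cD$ — assembled into the same $3\times 3$ pasting whose outer columns collapse by \eqref{swallowking}. The only difference is cosmetic: you start from $\Lambda_\cD^\dagger$ and reduce it to $\Lambda_\cD$, whereas the paper pastes the swallowtails onto $\Lambda_\cD$ and recognizes the result as $\Lambda_\cD^\dagger$.
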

\noindent Note quasi-Hermiticity $c_{\cD^*,\cD}^\dagger\cong c_{\cD,\cD^*}$ implies that we have the contraction
    \begin{equation*}
        \id_{c_{\cD^*,\cD}}\circ \id^\dagger_{c_{\cD^*,\cD}} = \id_{c_{\cD^*,\cD}}\circ \id_{c_{\cD,\cD^*}} = \id_{C_{\cD^*,\cD}}.
    \end{equation*}
\begin{proof}
Recall the observation made below \eqref{doublebraidexchange}. If we paste the above expression for $\Lambda_\cD$ with the swallowtails $\mathcal{S}_\cD,\mathcal{S}'_\cD$ and their adjoints on either side, this observation braid exchange square in the middle of $\Gamma_{\cD,\cD^*}$ fits precisely in between the 2-morphism $\mathcal{K}_\cD$ obtained in \S \ref{swallowkingeqs},
\begin{equation*}
    (\mathcal{S}_\cD\circ\mathcal{K}_\cD\circ \mathcal{S}_\cD'),\qquad (\mathcal{S}_\cD^\dagger\circ\mathcal{K}_\cD^\dagger\circ \mathcal{S}_\cD'^\dagger),
\end{equation*}
where we recall $\mathcal{K}_\cD$ is obtained by gluing the big diamond \eqref{diamond} with \eqref{writheswap}. This allows us to form a 2-morphism expressed schematically by
\[\begin{tikzcd}
	{\mathcal{S}_\cD} & {\mathcal{H}_{\cD,\cD^*}} & {\mathcal{S}_\cD^\dagger} \\
	{\mathcal{K}_\cD} & {\Gamma_{\cD,\cD}} & {\mathcal{K}^\dagger_\cD} \\
	{\mathcal{S}'_\cD} & {\mathcal{H}_{\cD^*,\cD}} & {\mathcal{S}'^\dagger_\cD}
	\arrow["\circ"{description}, draw=none, from=1-1, to=1-2]
	\arrow["\circ"{description}, draw=none, from=1-1, to=2-1]
	\arrow["\circ"{description}, draw=none, from=1-2, to=1-3]
	\arrow["\bullet"{description}, draw=none, from=1-2, to=2-2]
	\arrow["\circ"{description}, draw=none, from=1-3, to=2-3]
	\arrow["\bullet"{description}, draw=none, from=2-1, to=2-2]
	\arrow["\circ"{description}, draw=none, from=2-1, to=3-1]
	\arrow["\bullet"{description}, draw=none, from=2-2, to=2-3]
	\arrow["\bullet"{description}, draw=none, from=2-2, to=3-2]
	\arrow["\circ"{description}, draw=none, from=2-3, to=3-3]
	\arrow["\circ"{description}, draw=none, from=3-1, to=3-2]
	\arrow[draw=none, from=3-2, to=3-3]
	\arrow["\circ"{description}, draw=none, from=3-2, to=3-3]
\end{tikzcd}\]
However, \eqref{hopfswallow} states that this should be nothing but $\Lambda^\dagger_\cD$. The statement then follows from the swallowking \eqref{swallowking}. 
\end{proof}
\noindent If $\tilde R$ were Hermitian instead of just quasi-Hermitian, then by \textit{Remark \ref{syllepsis}} it would mean the entire 2-category is sylleptic, hence all Hopf links are false. Thus, one should read the above proposition as a \textit{coherence condition} imposed on sylleptic 2-categories with duals.

\subsubsection{Order of the Hopf links}\label{truehopf}
In light of the above discussion, we now turn to studying the non-trivializable \textbf{"true" Hopf links} on $\cD$. They are given by the following endofunctors on $\cI$,
\begin{align*}
    &\alpha_\cD= (\operatorname{cev}_\cD^\dagger\boxtimes \overline{\operatorname{cev}}_\cD^\dagger)\circ(\cD\boxtimes C_{\cD^*,\,^*\cD}\boxtimes\cD)\circ (\operatorname{cev}_\cD\boxtimes\overline{\operatorname{cev}}_\cD),\\
    &\bar \alpha_\cD= (\overline{\operatorname{ev}}_\cD\boxtimes {\operatorname{ev}}_\cD)\circ(\cD\boxtimes C_{\,^*\cD,\cD^*}\boxtimes\cD)\circ (\overline{\operatorname{ev}}_\cD^\dagger\boxtimes\operatorname{ev}_\cD^\dagger)\\
    & \beta_\cD = (\operatorname{cev}_\cD^\dagger\boxtimes \operatorname{ev}_{\cD})\circ (\cD\boxtimes C_{\cD^*,\cD^*}\boxtimes\cD)\circ (\operatorname{cev}_\cD\boxtimes\operatorname{ev}_\cD^\dagger),\\ 
    &\bar \beta_\cD =  (\overline{\operatorname{ev}}_\cD\boxtimes\overline{\operatorname{cev}}_\cD^\dagger)\circ (\cD\boxtimes C_{\,^*\cD,\,^*\cD}\boxtimes\cD)\circ (\overline{\operatorname{ev}}_\cD^\dagger\boxtimes\overline{\operatorname{cev}}_\cD).
\end{align*}
Other Hopf links can be obtained by applying interchangers to swap the order of the duality folds. Since each of their composites respects the tensor product, so do they.


Recall form \textbf{Definition \ref{framing}} that being "half-framed" $\vartheta^\dagger\cong\bar\vartheta$ relates the ribbon balancings to their adjoints. For the Hopf links, on the other hand, a similar condition between the $\alpha,\beta$'s has less to do with the framing, but rather the order of its double braiding.
\begin{proposition}\label{selfadjointhopf}
    Consider the following objects in $\tilde C$:
    \begin{gather*}
        \mathfrak{r}_{\alpha}= (-\cdot -)\big((\tilde S^4\otimes 1)(\tilde R~\hat\cdot~\tilde R)\big),\qquad \mathfrak{r}_{\bar\alpha} =(-\cdot -)\big((1\otimes \tilde S^4)(\tilde R~\hat\cdot~\tilde R)\big),\\
        \mathfrak{r}_\beta = \tilde S^4(-\cdot-)(\tilde R~\hat\cdot~\tilde R),\qquad \mathfrak{r}_{\bar\beta} = \tilde S^{-4}(-\cdot-)(\tilde R~\hat\cdot~\tilde R).
    \end{gather*}
    For each $\chi=\alpha,\bar\alpha,\beta,\bar\beta$, if $\overline{\mathfrak{r}}_\chi = \mathfrak{r}_\chi^T$ is Hermitian (cf. \eqref{quasiherm}) then the true Hopf link $\chi_\cD$ is self-adjoint.
\end{proposition}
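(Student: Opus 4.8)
The plan is to reduce the self-adjointness of each true Hopf link $\chi_\cD$ (meaning the existence of an invertible 2-morphism $\chi_\cD\cong\chi_\cD^\dagger$) to the single task of producing an invertible $\tilde C$-module 2-morphism relating the central double braiding $C$ to its own adjoint $C^\dagger$, exactly as quasi-Hermiticity produced $c_{\cD^*,\cD}\cong c_{\cD,\cD^*}^\dagger$ in \textit{Remark \ref{syllepsis}} and the self-duality $C_{\cD^*,\cD}^\dagger\cong C_{\cD^*,\cD}$ used for the false Hopf links.

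First I would compute $\chi_\cD^\dagger$ directly. Taking $\chi=\alpha$, applying \eqref{adjcompose}, the distributivity of $-^\dagger$ across $\boxtimes$ established in \S\ref{adjoints}, and the involutivity $F^{\dagger\dagger}\cong F$, one sees that the two outer fold-layers of $\alpha_\cD$ are swapped and daggered back to themselves, giving
\begin{align*}
    \alpha_\cD^\dagger &\cong (\operatorname{cev}_\cD^\dagger\boxtimes\overline{\operatorname{cev}}_\cD^\dagger)\circ(\cD\boxtimes C_{\cD^*,\,^*\cD}^\dagger\boxtimes\cD)\circ(\operatorname{cev}_\cD\boxtimes\overline{\operatorname{cev}}_\cD).
\end{align*}
Thus $\alpha_\cD$ and $\alpha_\cD^\dagger$ coincide except that the central double braiding $C_{\cD^*,\,^*\cD}$ is replaced by $C_{\cD^*,\,^*\cD}^\dagger$; the identical computation handles $\bar\alpha,\beta,\bar\beta$ with their respective dual placements and folds. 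It therefore suffices to produce an invertible 2-morphism $C\Rightarrow C^\dagger$ for the relevant double braiding and whisker it into the middle slot, the compatibility with the $\cE$-structures being automatic since all three layers of $\chi_\cD$ are $\tilde C$-module functors and the fold daggers are governed by $\iota_F^\dagger=e_{F^\dagger}$.

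Second, I would extend the dictionary of \S\ref{braidduals} to the double braiding. Just as $c_{\cD,\cD^*},c_{\cD^*,\cD},c_{\cD^*,\,^*\cD}$ are controlled by the Hopf-category objects $\nu,\mu,\mathfrak t$ obtained by contracting the antipode applied to $\tilde R^T$ with the dual placements dictating the antipode powers, the double braiding entering $\chi_\cD$ is governed by $\mathfrak{r}_\chi$: the two composed braidings contribute the doubled matrix $\tilde R~\hat\cdot~\tilde R$, and the two duals (a $\cD^*$ carrying $\tilde S$, a $\,^*\cD$ carrying $\tilde S^{-1}$) together produce the $\tilde S^{\pm4}$ appearing in $\mathfrak{r}_\alpha,\mathfrak{r}_{\bar\alpha},\mathfrak{r}_\beta,\mathfrak{r}_{\bar\beta}$ --- the categorical analogue of the Radford element of \textit{Remark \ref{radf}}. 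Under the orientation reversal $-^\dagger=\tilde S_v$, which acts on the cobraiding by $\overline{\tilde R}\cong -^\dagger\circ\tilde R\circ-^\dagger$ (see \S\ref{adjbraid}), the adjoint $C^\dagger$ is then governed by the orientation-reversed object $\overline{\mathfrak{r}}_\chi$, through the module associator $(\rho(-)\otimes\rho(-))(-)\Rightarrow\rho(-\cdot-)(-)$ that contracts the $\tilde C$-action down to $1_\cI$. Given this dictionary, the Hermiticity hypothesis $\overline{\mathfrak{r}}_\chi=\mathfrak{r}_\chi^T$ supplies --- by the verbatim argument of \textit{Remark \ref{syllepsis}}, where $\mathfrak{r}_\chi^T$ is the image of $\mathfrak{r}_\chi$ under the flip intertwining the two parallel braiding functors --- the required invertible $\tilde C$-module natural transformation $C\Rightarrow C^\dagger$. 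Whiskering it into the central slot and composing with the identifications of the first step yields $\chi_\cD\cong\chi_\cD^\dagger$; the four cases differ only in bookkeeping.

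The main obstacle is precisely the second step: establishing rigorously that the adjoint double braiding $C^\dagger$ corresponds to $\overline{\mathfrak{r}}_\chi$, and that the composite of the two constituent braidings of $C$ is governed by $\tilde R~\hat\cdot~\tilde R$ with exactly the stated antipode powers $\tilde S^{\pm4}$. This requires tracking how orientation reversal acts through \emph{both} braidings simultaneously and how the four distinct dual placements distribute the antipodes across the two tensor factors, which is a genuinely more delicate computation than the single-braiding quasi-Hermitian case of \eqref{quasiherm}. Once this Hopf-category--level dictionary is pinned down, each of the four Hopf links is dispatched by the same Hermiticity-to-adjoint mechanism.
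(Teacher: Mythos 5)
Your proposal is correct and follows essentially the same route as the paper's proof: reduce self-adjointness of $\chi_\cD$ to self-adjointness of the central double braiding, identify that double braiding with the object $\mathfrak{r}_\chi$ via the antipode powers forced by the (pre-)dual placements (using that $\tilde S^2$ is monoidal, cf.\ \eqref{dualbraid}), and then invoke the Hermiticity-to-adjoint mechanism of \textit{Remark \ref{syllepsis}}. The explicit computation of $\chi_\cD^\dagger$ showing the outer fold layers return to themselves is exactly the step the paper leaves implicit in "the proposition then follows directly," so you have if anything supplied slightly more detail than the original.
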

\begin{proof}
    The key observation is that the double braidings in the definition of $\alpha,\bar\alpha$ involves objects that are two (pre-)duals apart, while those in $\beta,\bar\beta$ involves objects with the same number of (pre-)duals. Since $\tilde S^2$ is a monoidal functor, the quantities $\mathfrak{r}_\chi$ can be seen to implement these double braiding maps in the true Hopf link $\chi_\cD$ for each $\chi=\alpha,\bar\alpha,\beta,\bar\beta$. Thus, under \eqref{dualbraid} and \textit{Remark \ref{syllepsis}}, the Hermiticity of $\mathfrak{r}_\chi$ then implies that these double braids are self-adjoint. The proposition then follows directly.

\end{proof}
\noindent The self-adojointedness of the Hopf links gives a trivialization of its square,
    \begin{equation*}
        e_{\alpha_\cD}:  \alpha_\cD\circ\alpha_\cD\Rightarrow 1_\cI;
    \end{equation*}
    similarly for the other Hopf links. 
    
    In general, the number $n<\infty$ at which the $n$-fold Hopf links\footnote{Note these are the Hopf links constructed from $n$-hold compositions of the double braiding $C_{\cD,\cD^*}^{\circ n}$, not the $n$-fold composition of the Hopf links themselves.} $\chi_{n,\cD}^\dagger \cong \chi_{n,\cD}$ become self-adjoint can be deduced from the order $n$ at which the objects $\mathfrak{r}_\chi^n$ become Hermitian. We posit that $n$ should also be related to the order of the quantum deformation parameter $q$, as a primitive root of unity.

\begin{rmk}\label{braidingorder}
Given the categorical $S^4$-formula for $\tilde C$, as argued for in \textit{Remark \ref{categoricalS4}}, we see that in order for for the Hopf links to be self-adjoint, the quantity $\overline{\mathfrak{r}}$ must be isomorphic to $\mathfrak{r}^T$ through a conjugation by invertible objects. Conversely, suppose we assume $\frak{r}$ is itself quasi-Hermitian, then the order $n$ at which the Hopf links $\chi_\cD$ become trivializable can be deduced from \textit{four times} the order of the antipode functor $\tilde S^{\circ 4n}\cong 1_{\tilde C}$.
\end{rmk}

We emphasize that the hypotheses of \textbf{Proposition \ref{selfadjointhopf}} are sufficient conditions for the self-adjointedness of the Hopf links, but they may not be necessary. We now examine how these true Hopf links acquire additional relations among each other in the next section, depending on how framed $\cD$ is.

\subsubsection{True Hopf links and the ribbon framing}
Suppose $\cD$ is unframed as given in in \textbf{Definition \ref{framing}}, which means in particular that $\cD^*\cong\,^*\cD$ and \eqref{adjdual} holds. From this, we immediately have the following.
\begin{proposition}
    When $\cD$ is unframed, then
    \begin{equation*}
        \alpha_\cD\cong \bar\beta_\cD\cong \bar\alpha_\cD\cong \beta_\cD.
    \end{equation*}
    Further, if $\cD$ were self-dual then there is only one true Hopf link on $\cD$ up to isomorphism.
\end{proposition}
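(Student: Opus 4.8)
The plan is to collapse all four true Hopf links onto a single expression by exploiting the two defining features of an unframed object: that the left- and right-duals coincide, $\,^*\cD\cong\cD^*$, and that the barred folds can be rewritten as adjoints of the unbarred folds. Once both identifications are in hand, the four definitions of $\alpha_\cD,\bar\alpha_\cD,\beta_\cD,\bar\beta_\cD$ become the same composite $\cI\to\cI$ up to coherent $\tilde C$-module isomorphism.

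First I would record the fold identifications. Since $\cD$ is unframed its dual is involutive, $\cD^{**}\cong\cD$, so (as noted just before \textbf{Definition \ref{framing}}) the barred folds satisfy $\overline{\operatorname{ev}}_\cD\cong\operatorname{ev}_{\cD^*}$ and $\overline{\operatorname{cev}}_\cD\cong\operatorname{cev}_{\cD^*}$. The pivotality condition \eqref{adjdual} gives $\operatorname{ev}_{\cD^*}\cong\operatorname{cev}_\cD^\dagger$, and applying \eqref{adjdual} at $\cD^*$ together with $\cD^{**}\cong\cD$ yields its companion $\operatorname{cev}_{\cD^*}\cong\operatorname{ev}_\cD^\dagger$. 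Combining these gives the $\tilde C$-module natural isomorphisms
\[
\overline{\operatorname{ev}}_\cD\cong\operatorname{cev}_\cD^\dagger,\qquad \overline{\operatorname{cev}}_\cD\cong\operatorname{ev}_\cD^\dagger,
\]
which, being $\tilde C$-module natural, are compatible with nudging and with the braiding, and may therefore be substituted directly into the Hopf-link composites.

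Next I would carry out the substitution. Using $\,^*\cD\cong\cD^*$, each of the double braidings $C_{\cD^*,\,^*\cD},C_{\,^*\cD,\cD^*},C_{\cD^*,\cD^*},C_{\,^*\cD,\,^*\cD}$ is identified with $C_{\cD^*,\cD^*}$; the comparison $2$-isomorphisms are precisely the invertible modifications \eqref{dualbraid}. Feeding in the fold identifications above, and using involutivity of the adjunction $(-^\dagger)^\dagger\cong\id$, transforms each of $\alpha_\cD$, $\bar\alpha_\cD$ and $\bar\beta_\cD$ into the common expression
\[
(\operatorname{cev}_\cD^\dagger\boxtimes\operatorname{ev}_\cD)\circ(\cD\boxtimes C_{\cD^*,\cD^*}\boxtimes\cD)\circ(\operatorname{cev}_\cD\boxtimes\operatorname{ev}_\cD^\dagger)=\beta_\cD.
\]
For instance, in $\bar\beta_\cD$ the outer pair $(\overline{\operatorname{ev}}_\cD\boxtimes\overline{\operatorname{cev}}_\cD^\dagger)$ becomes $(\operatorname{cev}_\cD^\dagger\boxtimes\operatorname{ev}_\cD)$ and the inner pair $(\overline{\operatorname{ev}}_\cD^\dagger\boxtimes\overline{\operatorname{cev}}_\cD)$ becomes $(\operatorname{cev}_\cD\boxtimes\operatorname{ev}_\cD^\dagger)$; the same bookkeeping applies verbatim to $\alpha_\cD$ and $\bar\alpha_\cD$. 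This establishes $\alpha_\cD\cong\bar\beta_\cD\cong\bar\alpha_\cD\cong\beta_\cD$.

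Finally, for the self-dual case one has in addition $\cD^*\cong\cD$ together with $\operatorname{ev}_\cD\cong\operatorname{cev}_\cD^\dagger$ and $\varphi_\cD=\varrho_\cD^\dagger$ from \textbf{Definition \ref{framing}}. Every object appearing in any of the Hopf links is then equivalent to $\cD$ itself and every fold is identified with the single self-duality datum, so the interchanger-variants obtained by reordering the folds also collapse onto the one class of $\beta_\cD$; this is the uniqueness claim. The only genuine obstacle is the coherence bookkeeping: one must verify that the fold-replacements, the \eqref{dualbraid} double-braiding comparisons and the interchangers paste into an honest $2$-isomorphism of $1$-morphisms $\cI\to\cI$ rather than merely a componentwise match. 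This is ensured because each replacement is $\tilde C$-module natural and the $2$-Drinfel'd modifications are invertible, but assembling the full pasting diagram is where the actual effort lies.
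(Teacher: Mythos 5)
Your argument for the chain $\alpha_\cD\cong\bar\beta_\cD\cong\bar\alpha_\cD\cong\beta_\cD$ is essentially the paper's proof: both rest on the identifications $\overline{\operatorname{ev}}_\cD\cong\operatorname{ev}_{\cD^*}\cong\operatorname{cev}_\cD^\dagger$ and $\overline{\operatorname{cev}}_\cD\cong\operatorname{cev}_{\cD^*}\cong\operatorname{ev}_\cD^\dagger$ coming from \textit{Remark \ref{preduals}} together with \eqref{adjdual}, and on collapsing the four double braidings $C_{\cD^*,\,^*\cD}\cong C_{\cD^*,\cD^*}\cong C_{\,^*\cD,\,^*\cD}\cong C_{\,^*\cD,\cD^*}$ via $\,^*\cD\cong\cD^*$; the substitution bookkeeping you describe is exactly what the paper leaves implicit.

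For the self-dual statement there is a small but real mismatch. The residual distinction that self-duality must kill (see Table \ref{tab:hopf}) is between $\alpha_\cD$ and its dual-mate $\alpha_\cD^*$ --- the image of the 1-morphism $\alpha_\cD:\cI\rightarrow\cI$ under $-^*$, using $\cI^*=\cI$ --- not between interchanger-reorderings of the folds, which are already dealt with in the unframed case. The paper closes this gap via $\alpha_\cD^*\cong\alpha_{\cD^*}=\alpha_\cD$, where the first isomorphism is supplied by the 2-Hopf modifications of the preceding theorem and the equality uses $\cD^*\cong\cD$. Your coherence remarks about pasting are fine, but the sentence claiming that the ``interchanger-variants collapse'' should be replaced by this identification in order to actually establish uniqueness of the true Hopf link.
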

\begin{proof}
    By \textit{Remark \ref{preduals}} and the pivotality condition \eqref{adjdual}, we have
    \begin{equation*}
        \operatorname{ev}^\dagger_\cD = \operatorname{cev}_{\cD^*} \cong \overline{\operatorname{cev}}_\cD,\qquad \operatorname{cev}^\dagger_\cD = \operatorname{ev}_{\cD^*} \cong \overline{\operatorname{ev}}_\cD.
    \end{equation*}
    Further, we have the following 2-isomorphisms
    \begin{equation}
        C_{\cD^*,\,^*\cD} \cong C_{\cD^*,\cD^*}\cong C_{\,^*\cD,\,^*\cD}\cong C_{\,^*\cD,\cD^*},\label{unframedbraid}
    \end{equation}
    whence the first part of the proposition follows. 
    
    Now if $\cD$ were self-dual, then not only do all of the true Hopf links on $\cD$ coincide, but they are also isomorphic to their duals,
    \begin{equation*}
        \alpha_\cD^* \cong \alpha_{\cD^*} =\alpha_\cD
    \end{equation*}
    through the 2-Hopf modifications.
\end{proof}
\noindent Similar to table \ref{tab:framed}, we can also construct a table listing the distinct true Hopf links depending on how framed the object $\cD$ is.

\begin{table}[h]
    \centering
    \begin{tabular}{c|c|c|c}
        $\cD$ is\dots & fully-framed & unframed & self-dual \\
         \hline
     Hopf links  & \makecell{$\chi_\cD, \bar\chi_\cD$ \\
    $\chi^*_\cD, \,^*\bar\chi_\cD$} + adjs. &$\alpha_\cD,\alpha^*_\cD$ + adjs. & $\alpha_\cD$ + adj.
    \end{tabular}
    \caption{A table listing the number of distinct true Hopf links up to 2-isomorphism depending on how framed $\cD$ is. Here, $\chi=\alpha,\beta$ and by "+ adj." we mean that the adjoints of the Hopf links are also in general distinct, unless the hypothesis of \textbf{Proposition \ref{selfadjointhopf}} holds.}
    \label{tab:hopf}
\end{table}

Keep in mind that the self-adjointedness of the Hopf links are a priori independent of the framing. In particular, the tensor unit $\cI$ is \textit{always} self-dual. Moreover, from \textbf{Proposition \ref{symmetric}} we know that $c_{\cI,\cI}\cong 1_\cI$, hence the ribbon balancings and the Hopf links
\begin{equation*}
    \alpha_\cI \cong 2\langle\mathfrak{Dim}(\cI)\rangle 1_{\cI},\qquad \vartheta_\cI \cong \langle\mathfrak{Dim}(\cI)\rangle\cdot 1_\cI
\end{equation*} 
are proportional to the trace of the 2-categorical dimension $\mathfrak{Dim}(\cI)$ defined in \textit{Remark \ref{warning}}. 

\medskip

\section{Pivotality in the classical limit} \label{classicalpivotal}
Throughout the above, we have remarked repeatedly that full pivotality should be recovered in the classical limit. In the final section of this paper, we make this statement precise. In doing so, we will also see that being unframed is closely related to pivotality.

\begin{theorem}
     In the classical limit, $\operatorname{2Rep}(\mathbb{U}_{q=1}\G;\id\otimes\id)$ is pivotal in the sense of \cite{Douglas:2018}.
\end{theorem}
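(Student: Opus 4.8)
The plan is to establish pivotality by verifying that in the classical limit $q\rightarrow 1$, the obstruction to pivotality identified throughout the paper---namely the failure of $\tilde S^2\cong 1_{\tilde C}$ to be unipotent---vanishes. Recall from the discussion of the classical limit in \S\ref{2gthopf} that at $q=1$ the category $\tilde C = \mathbb{U}_{q=1}\G$ becomes cocommutative, the cobraiding $\tilde R\rightarrow \id\otimes\id$ becomes trivial, and the antipode $\tilde S$ becomes unipotent. The key structural fact to exploit is \textit{Remark \ref{classicalantipode}}: in the undeformed case the antipode $\tilde S$ is simply the horizontal inversion $\zeta\mapsto\zeta^{-1}$ of 2-gauge parameters, so $\tilde S^2 = \id$ on the nose. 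First I would make this precise by showing that $\tilde S^2\simeq 1_{\tilde C}$ as monoidal autoequivalences, which immediately forces the left- and right-duals to coincide, $\,^*\cD\cong\cD^*$, and---crucially---forces the duality to be involutive, $(\cD^*)^*\cong\cD$.

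Next I would invoke the "pivotality condition" paragraph following \textbf{Proposition \ref{mates}}: there it was shown that once the left-dual is involutive and the left- and right-folds coincide, one recovers exactly the formula on pg.~49 of \cite{Douglas:2018} relating the right-mate $\,^*F$ to the folded expression built from $\overline{\operatorname{ev}}$ and $\overline{\operatorname{cev}}$, yielding the pivotality identification $\,^*\cD\cong\cD^*$. With $\tilde R=\id\otimes\id$ trivial, the braiding $c$ becomes symmetric, so by \textit{Remark \ref{classicalantipode}} the 2-$R$-matrix is the unit and all the quasi-Hermiticity machinery collapses to the trivial case. I would then verify the specific Douglas--Reutter axioms C5, C6, and C8 (Definition 2.2.4 of \cite{Douglas:2018}): condition \eqref{adjdual}, $\operatorname{ev}_{\cD^*}=\operatorname{cev}_\cD^\dagger$ and $\varphi_{\cD^*}=\varrho_\cD^\dagger$, becomes imposable precisely because $\operatorname{cev}_{\cD^*}$ is now parallel to $\operatorname{ev}_\cD^\dagger$ (the obstruction flagged in \textit{Remark \ref{warning}} disappears), while the C8 condition follows from \textbf{Corollary \ref{generalC8}}, specialized to the case where $\vartheta$ and $\bar\vartheta$ become genuine pseudonatural twists on the identity. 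The triviality of the 2-Drinfel'd modifications $\omega_\cD,\bar\omega_\cD$ (now identities, since $-^{**}\simeq\id$) ensures these data cohere.

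Concretely the steps in order are: (i) show $\tilde S^2\simeq 1_{\tilde C}$ at $q=1$, hence $(\cD^*)^*\cong\cD$ and $\,^*\cD\cong\cD^*$; (ii) conclude the folds satisfy \eqref{adjdual}, so that every object becomes unframed in the sense of \textbf{Definition \ref{framing}} and the 2-categorical dimension $\mathfrak{Dim}(\cD)$ of \textit{Remark \ref{warning}} is well-defined and invariant; (iii) assemble the planar-pivotal structure of \textbf{Theorem \ref{planaruni}} with this now-involutive object-level duality to produce a bona fide pivotal structure; (iv) check the swallowtail equations of \S\ref{swallotail} and the coherence axioms C1--C8 of \cite{Douglas:2018} hold, using that the symmetric braiding trivializes the writhings and swallowking equations \eqref{swallowking}. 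I expect the main obstacle to be step (i)---rigorously establishing that the formal deformation genuinely sends $\tilde S^2$ to the identity \emph{as a monoidal autoequivalence}, not merely pointwise, since the paper defers the triviality of the \emph{quadruple} object-level dual $-^{****}\simeq\id$ to a separate upcoming work (see \textit{Remark \ref{categoricalS4}}). The classical limit should make the stronger statement $-^{**}\simeq\id$ directly accessible via the explicit inversion formula, but I would need to confirm that the relevant coherence modifications ($\omega_\cD$, the snakerators, and the interchangers) degenerate compatibly; this compatibility check, rather than any single computation, is where the real work lies.
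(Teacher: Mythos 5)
Your overall strategy coincides with the paper's: reduce the claim to verifying the Douglas--Reutter axioms C1--C8, use the unipotency of $\tilde S$ at $q=1$ to get the involutive object-level dual (C7), use the triviality of the braiding to trivialize the 2-Drinfel'd modifications (C8) and to collapse the swallowking equations of \S\ref{swallowkingeqs} down to the swallowtail equations (C1). Steps (i), (iii) and (iv) of your outline are essentially the paper's proof.

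There is, however, one concrete gap in step (ii). You assert that condition \eqref{adjdual} ($\operatorname{ev}_{\cD^*}=\operatorname{cev}_\cD^\dagger$, $\varphi_{\cD^*}=\varrho_\cD^\dagger$) ``becomes imposable'' once $\operatorname{cev}_{\cD^*}$ and $\operatorname{ev}_\cD^\dagger$ are parallel. Parallelism only removes the type-theoretic obstruction noted in \textit{Remark \ref{warning}}; it does not produce the identification, and since the folds here are not free data but are canonically determined by $\tilde S$ (and the adjoints by $\tilde S_v=-^\dagger$), C5--C6 must be \emph{derived}, not imposed. The paper's argument is the observation that in the classical limit the 2-gauge action satisfies $\Lambda_{\gamma_{\bar e}}=\Lambda_{\gamma_e}^{-1}$, hence $\Lambda_{\tilde S_v\gamma_e}=\Lambda_{\tilde S\gamma_e}$: the vertical and horizontal antipodes coincide on the edge parameters $\cE$. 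Since $\tilde S_v$ governs the $\cE$-structure of adjoints and $\tilde S$ that of duals, this forces $\overline{\operatorname{cev}}_\cD\cong\operatorname{ev}_\cD^\dagger$ and $\overline{\operatorname{ev}}_\cD\cong\operatorname{cev}_\cD^\dagger$ as $\tilde C$-module functors, which combined with $\overline{\operatorname{c/ev}}_\cD\cong\operatorname{c/ev}_{\cD^*}$ (valid once the dual is involutive) yields C5--C6. You should supply this coincidence-of-antipodes argument; without it your step (ii), and hence the unframedness of every object, is unsupported. Your worry about step (i) is, by contrast, not where the difficulty lies: the paper takes the unipotency of $\tilde S$ at $q=1$ as given from \S\ref{2gthopf} and \textit{Remark \ref{classicalantipode}} (inversion of 2-gauge parameters), and no appeal to the quadruple-dual result of \textit{Remark \ref{categoricalS4}} is needed.
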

\begin{proof}
    By \textbf{Proposition \ref{planaruni}}, we just need to produce all of the C1-C8 conditions in Definition 2.2.4 of \cite{Douglas:2018}. The conditions C2-C4 have already been demonstrated in \S \ref{duals}, hence it suffices to check C1, C5-C8. We know from \S \ref{2gthopf} and \textit{Remark \ref{classicalpivotal}} that, in the classical limit, the antipode $\tilde S$ is unipotent, and hence gives an identification $\cD\cong (\cD^*)^*$ as $\tilde C$-module categories. This settles C7. 
    
    Note the 2-gauge transformations satisfy $\Lambda_{\gamma_{\bar e}}=\Lambda_{\gamma_e}^{-1}$ under under orientation reversal in the classical limit, whence
    \begin{equation*}
        \Lambda_{\tilde S_v\gamma_e} = \Lambda_{\gamma_{\bar e}} = \Lambda_{\gamma_e}^{-1} = \Lambda_{\tilde S\gamma_e},
    \end{equation*}
    which tells us that the vertical and horizontal antipodes $\tilde {S}_v,\tilde S$ coincide on the edge parameters/1-morphisms $\cE$. Since $\tilde S_v$ determines the adjunction and $\tilde S$ determines duality, the planar-pivotality of the folds imply that we have
    \begin{equation*}
        \overline{\operatorname{cev}}_\cD\cong \operatorname{ev}^\dagger_\cD,\qquad \overline{\operatorname{ev}}_\cD\cong \operatorname{cev}_\cD^\dagger;
    \end{equation*}
    however, $\overline{\operatorname{c/ev}}_\cD \cong \operatorname{c/ev}_{\cD^*}$ whenever the dual is involutive, whence C5-C6 follow. From the fact that the braiding is trivial (ie. merely given by the flip functor), the 2-Drinfel'd transformations $\omega_\cD,\bar\omega_\cD$ are trivial as well, and in particular invertible. C8 then follows from {\bf Proposition \ref{2drinfeld}}.

    Now it remains to recover C1, the swallowtail equations. We shall do this from the swallowking equations described in \S \ref{swallowkingeqs}. Since the braiding is trivial, so is the writhing. The 2-morphism \eqref{writheswap} simply takes the form $\upsilon_{\operatorname{ev}_\cD,\operatorname{cev}_\cD}^\dagger \circ \operatorname{id}_\text{flip}\circ\upsilon_{\operatorname{cev}_\cD,\operatorname{ev}_\cD}$. The unitarity of the interchangers then reduces the 2-morphism $\mathcal{K}_\cD$, which we recall is obtained by attaching the left- and right-sides of the diamond \eqref{diamond} with \eqref{writheswap}, to $\operatorname{id}_\text{flip}$. The swallowking equations \eqref{swallowking} then take the form
    \begin{equation*}
        \mathcal{S}_\cD\circ \operatorname{id}_\text{flip}\circ\mathcal{S}_\cD' = \id_\text{flip}.
    \end{equation*}
    This forces each of the swallowtails $\mathcal{S}_\cD,\mathcal{S}'_\cD$ to be trivial.
\end{proof}

It can be seen from the proof that every object in $\operatorname{2Rep}(\mathbb{U}_{q=1}\G;\id\otimes\id)$ is \textit{unframed} in the sense of \textbf{Definition \ref{framing}}, whence \eqref{adjdual} tells us that the object-level pairing convention is given by the pivotal one chosen in \cite{Douglas:2018}.  This result is consistent with the fact that $\operatorname{2Rep}(\mathbb{G})$ for {\it finite} 2-groups $\mathbb{G}$ --- which, like ordinary finite 1-groups, have no non-trivial deformation --- are known to be pivotal \cite{Douglas:2018,Bartsch:2023wvv}. In fact, such 2-representation 2-categories are expected to be spherical, hence we also expect $\operatorname{2Rep}(\tilde C;\tilde R)$ to become spherical in the classical limit.

\medskip

We conclude this paper by making the following remark. From \textit{Remark \ref{warning}} and table \ref{tab:framed}, one can interpret the cause of the issue raised in Warning 2.2.5 of \cite{Douglas:2018} as the fact that unframed objects are unable to tell barred and unbarred functors apart. As such, the data of being "unframed", namely the fixed chosen invertible 2-Drinfel'd modification which identifies $\vartheta$ and $\bar\vartheta$, gives rise to a notion of "unframed equivalence" (ie. "pivotal adjoint equivalence" mentioned in Remark 2.3.9 of \cite{Douglas:2018}) of objects preserving this data. In general, this notion relaxes to the centralizer subcategory $C_{\operatorname{End}(\cD)}(\vartheta)$ of the twists mentioned in \textit{Remark \ref{pivottwist}}.

\section{Conclusions}
We have studied in this paper the notion of a "ribbon tensor 2-category" arising naturally out of the categorical quantum symmetries of the 4d 2-Chern-Simons theory on a lattice. Much of the properties and the coherence conditions were derived, but the author would like to emphasize that "ribbon tensor 2-categories" should exist in contexts much more general than just categorical quantum groups. In particular, such structures should at least have framing properties that leads to a "$SO(3)$-volutive" refinement of strict pivotality studied in \cite{Douglas:2018}. It is the hope of the author to continue and use the structures examined here in order to construct interesting 4d TQFTs and invariants in the future.

Though, that is not to say that the only novelty of this paper is to give an example of such a ribbon 2-category. It also leads to very interesting implications in the study of 4d and 3d TQFTs. For instance, as we have noted in \textit{Remark \ref{half-framed}} and Table \ref{tab:framing}, ribbon tensor 2-categories give rise to $\mathsf{Gray}$-categories with duals. As mentioned also in the introduction, such structures were shown to describe (locally) non-extended 3d defect TQFTs \cite{Carqueville:2016kdq,Carqueville:2018sld,Koppen:2021kry,Carqueville:2023aak}, and serves as the natural foundation for their orbifold defect data. As such, $\operatorname{2Rep}(\tilde C;\tilde R)$ provides a way in which the 3d defect TQFTs can be parameterized, through different choices of Lie 2-groups, by the categorical quantum symmetries arising from the 4d 2-Chern-Simons TQFT. 

This observation can also be understood as the manifestation of a 4d-3d topological bulk-boundary relation, as formulated in the setting of topological orders and quantum liquids in the series \cite{Kong:2020wmn,Kong:2020iek,Kong2024-vr} of papers. These papers relied heavily on the theory of the so-called \textit{separable $n$-categories} and the condensation completion functor \cite{Gaiotto:2019xmp}. The prevailing philosophy in the literature (see the above references, as well as eg. \cite{KitaevKong_2012,Kong:2020}) is that, given a modular tensor category describing a 3d topological order, its condensation completion describes a 4d topological bulk admitting it as boundary condition. We see that this perspective leads to a very interesting prospect for understanding the 4d-3d correspondence for the Reshetikhin-Turaev TQFT. 

\paragraph{The algebraic Baez conjecture.} Take the modular tensor category $\operatorname{Rep}U_q\frak{sl}_2$ underlying the Reshetikhin-Turaev TQFT \cite{Reshetikhin:1990pr,Reshetikhin:1991tc}. Its condensation completion, or simply the pre-modular tensor 2-category $\operatorname{Mod}(\operatorname{Rep}U_q\frak{sl}_2)$ of its module categories, is believed to underlie the 4d Crane-Yetter TQFT \cite{Douglas:2018}. Recent results \cite{haioun2025nonsemisimplewrtboundarycraneyetter} have shown that the Reshetikhin-Turaev TQFT does indeed exist as a boundary condition of the Crane-Yetter TQFT.

On the other hand, starting from the classical theory, it can be shown \cite{Chen:2024axr} that the 3d Chern-Simons action with gauge group $G$ lives on the boundary of the 4d 2-Chern-Simons action on the \textit{inner automorphism 2-group} $\operatorname{Inn}G=G\xrightarrow{\id}G$. This is in fact also true \textit{semi}classically in the BFV formalism \cite{Cattaneo:2016zsq}: the Chern-Simons pre-symplectic form also appears at the boundary of that of the 2-Chern-Simons theory. 

Given the skein-theoretic quantization of Chern-Simons theory as the (Witten-)Reshetikhin-Turaev TQFT \cite{WITTEN1990285}, it is therefore natural to ask the following question.
\begin{quote}
\centering
 {\em At the quantum level, how is the 2-Chern-Simons TQFT is related to the Crane-Yetter TQFT?} 
\end{quote}
\noindent As mentioned in \cite{Chen1:2025?}, this question relates to a statement/conjecture made by Baez in \cite{Baez:1995ph}, which presupposes that the 2-Chern-Simons theory on $\operatorname{Inn}SU(2)$ is equivalent to the $SO(4)$ Crane-Yetter TQFT. 

The algebraic version of this statement is then a (pre-modular) equivalence
\begin{equation*}
    \operatorname{2Rep}(\mathbb{U}_q\operatorname{inn}\mathfrak{sl}(2)) \simeq \operatorname{Mod}(\operatorname{Rep}(U_q\mathfrak{sl}_2)),
\end{equation*}
which given the results of this paper can be directly checked. This will be the subject of a future work.


\newpage

\newpage

\printbibliography

\end{document}